\documentclass[%
 reprint,
 amsmath,amssymb,
 aps,
 pra,
]{revtex4-2}

\usepackage{lipsum}
\usepackage{graphicx}% Include figure files
\usepackage{dcolumn}% Align table columns on decimal point
\usepackage{bm}% bold math
\usepackage{float}

\usepackage{enumitem}
\usepackage{amsthm}
\usepackage{physics}
\usepackage{mathtools}
\usepackage{xcolor}
\usepackage{braket}
\usepackage{leftindex}
\usepackage{MnSymbol}

\usepackage{bbm}
\usepackage[normalem]{ulem}

\usepackage[noline]{algorithm2e}
\RestyleAlgo{ruled}
\SetKwInput{KwData}{Input}
\SetKwInput{KwResult}{Output}
 
\usepackage{array,makecell}
\usepackage{multirow}
\def\thickhline{\noalign{\hrule height.8pt}}

\usepackage{ragged2e}
\newcolumntype{P}[1]{>{\raggedright\arraybackslash\hspace{0pt}}p{#1}}

\usepackage[toc,page]{appendix}

\usepackage{tikz}
\usetikzlibrary{quantikz2}

\NewDocumentEnvironment{alignb}{b}{%
  \begin{align*}
  \refstepcounter{equation} #1 \tag{\theequation}
  \end{align*}
}{\ignorespacesafterend}

\newcommand{\bigo}[1]{\mathcal{O}\left({#1}\right)}

\newsavebox{\mstrut}
\newcommand{\bbra}[1]{%
    \sbox{\mstrut}{\(#1\)}%
    \mathinner{\left\langle\kern-0.25\ht\mstrut\left\langle{#1}\right|\right.\!}%
}
\newcommand{\kett}[1]{%
    \sbox{\mstrut}{\(#1\)}%
    \mathinner{\!\left.\left|{#1}\right\rangle\kern-0.25\ht\mstrut\right\rangle}%
}
\newcommand{\bbrakett}[2]{%
    \sbox{\mstrut}{\(#1\)}%
    \sbox{\mstrut}{\(#2\)}%
    \mathinner{\left\langle\kern-0.25\ht\mstrut\left\langle{#1}\right.\right.}\!\! | \!\!\mathinner{\left.\left.{#2}\right\rangle\kern-0.25\ht\mstrut\right\rangle}%
}
\newcommand{\vecc}[1]{%
    \text{vec}(#1)%
}
\newcommand{\kettbbra}[2]{
    \kett{#1} \! \bbra{#1}
}

\newcommand{\exptsuper}[3]{%
    \sbox{\mstrut}{\(#1\)}%
    \sbox{\mstrut}{\(#2\)}%
    \sbox{\mstrut}{\(#3\)}%
    \mathinner{\left\langle\kern-0.25\ht\mstrut\left\langle{#1}\right.\right|}\! {#2} \!\mathinner{\left|\left.{#3}\right\rangle\kern-0.25\ht\mstrut\right\rangle}%
}

\newtheorem{theorem}{Theorem}
\newtheorem{lemma}{Lemma}
\newtheorem{corollary}{Corollary}
\newtheorem{proposition}{Proposition}

\theoremstyle{definition}

\usepackage[toc,page]{appendix}
\usepackage[tight]{minitoc}
\usepackage{tocloft}

\doparttoc % Tell to minitoc to generate a toc for the parts
\faketableofcontents % Run a fake tableofcontents command 
\renewcommand \partname{}
\usepackage{hyperref}
\hypersetup{
	colorlinks = true,
    linkcolor = [rgb]{0.25, 0.34, 0.63},
	citecolor = [rgb]{0.13,0.55,0.13},
	urlcolor = [rgb]{0.25, 0.34, 0.63}}
\usepackage[capitalize]{cleveref}
\expandafter\providecommand\csname href@noop\endcsname[2]{#2} % To remove bib URL warnings

\begin{document}

\title{Quantum simulation in the Heisenberg picture via vectorization}

\author{Shao-Hen Chiew}
 \email{shao.chiew@epfl.ch}
\author{Armando Angrisani}%
\author{Zoë Holmes}%
\author{Giuseppe Carleo}%
\affiliation{%
Institute of Physics, École Polytechnique Fédérale de Lausanne (EPFL), CH-1015 Lausanne, Switzerland, and \\
Center for Quantum Science and Engineering, EPFL, CH-1015 Lausanne, Switzerland
}%

\date{\today}% It is always \today, today,
             %  but any date may be explicitly specified

\begin{abstract}
We present a general framework for simulating quantum systems in the Heisenberg picture on quantum hardware. Based on the vectorization map, our framework fully exploits the mapping between operators and quantum states, allowing any task defined on Heisenberg operators to be mapped to standard Schrödinger-picture tasks that are naturally accessible via quantum computers and simulators.
This yields new or improved protocols for tasks such as operator sampling, the computation of OTOCs/superoperator expectation values and their higher order moments, two-point correlators, and operator stabilizer and entanglement entropies.
Our approach is also amenable to implementation, as it inherits the structure and resource requirements of the (forward and time-reversed) Schrödinger-picture quantum simulation problem. We demonstrate this by proposing implementations of our framework for a 2D problem on digital and analog quantum simulators, taking into account device connectivity constraints.
\end{abstract}

\maketitle

\section{Introduction} \label{sec:intro}

Quantum computers are devices that process and extract information directly through the laws of quantum mechanics. Their operation can be most naturally described in the Schrödinger picture, where quantum computations evolve a quantum state (stored in quantum memory) according to the Schrödinger equation, followed by measurements of fixed observables. From this viewpoint, quantum simulation follows as a direct application \cite{feynman1982simulating}, enabling the study of interacting quantum systems, with applications ranging from many-body physics and quantum chemistry to materials science \cite{mcardle2020quantum,bauer2020quantum,fauseweh2024quantum,eisert2025mind}.

This view of quantum computation naively appears at odds with the equivalent Heisenberg picture, where observables evolve backwards in time, while states remain fixed. Nonetheless, many questions concerning dynamics such as operator growth, transport, and scrambling are most naturally formulated in the Heisenberg picture \cite{khemani2018operator,von2018operator,nahum2018operator,roberts2018operator,qi2019quantum,xu2024scrambling}, via operatorial properties such as out-of-time-ordered correlators (OTOCs) \cite{larkin1969quasiclassical,maldacena2016bound,roberts2017chaos}, dynamical correlators \cite{baez2020dynamical}, and operator analogues of entanglement \cite{prosen2007operator} and stabilizer entropies \cite{dowling2025magic}. Owing to their central role, a range of experimental protocols and algorithms have been developed to access certain quantities such as OTOCs individually \cite{yoshida2019disentangling,landsman2019verified,schuster2022many,sundar2022proposal,green2022experimental,xu2024scrambling,swingle2016measuring,schuster2023operator,google2025observation}.

This observation has also been fruitfully exploited by classical methods, for instance via matrix product operators \cite{hemery2019matrix,xu2020accessing} or by directly back–propagating observables in the Pauli basis \cite{rall2019simulation, aharonov2022polynomial, fontana2023classical, rudolph2023classical, schuster2024polynomial, gonzalez2024pauli, angrisani2024classically, angrisani2025simulating, martinez2025efficient, rudolph2025pauli, teng2025leveraging}. However, such methods encounter bottlenecks under general dynamics involving high entanglement and/or magic\ \cite{dowling2024magic}, analogous to classical simulation methods based on the Schrödinger picture. This prompts the question of whether the advantages conferred by quantum computers for quantum simulation can also be carried over to the Heisenberg picture.

\begin{figure*}
    \centering
    \includegraphics[width=1\linewidth]{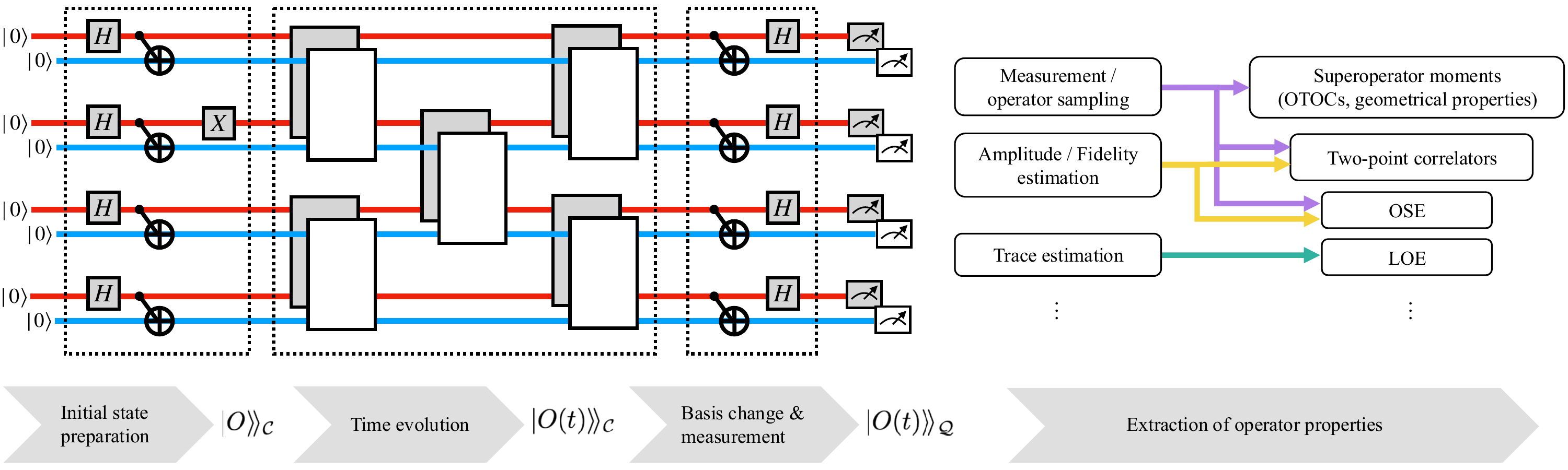}
    \caption{\textbf{Schematic summarizing our algorithmic framework.} Via the circuit schematic on the left, samples of the $2n$-qubit state $\kett{O(t)}_\mathcal{Q}$ are prepared via an initialize-and-evolve protocol, illustrated for the case of initializing and ending in the Pauli basis $\mathcal{P}$. Subsequently, by extracting information from it via measurements and other subroutines such as amplitude, fidelity, or trace estimation, a plethora of tasks in the Heisenberg picture can be accomplished naturally.}
    \label{fig:main}
\end{figure*}

In this work, we introduce a general framework for quantum simulation in the Heisenberg picture using quantum computers. Central to our approach are the vectorization and transfer matrix formalisms, which we exploit to encode $n$-qubit Heisenberg operators $O(t)$ as $2n$-qubit quantum states $\kett{O(t)}$ in a fully structure-preserving manner. As a consequence, \textit{any} task naturally defined in the Heisenberg picture -- including sampling from the Pauli distribution of an operator, the computation of OTOCs, operator geometrical properties, two-point correlators, operator stabilizer and entanglement entropies -- can be mapped to tasks in the Schrödinger picture via the state $\kett{O(t)}$. This enables us to propose new algorithms for their solution that leverage existing subroutines such as expectation estimation, fidelity/amplitude estimation, and more.

A sketch of our framework is shown in Fig.~\ref{fig:main}. Concretely, it consists of first preparing and time-evolving on a quantum computer to obtain samples of the state $\kett{O(t)}$, followed by the extraction of information regarding $O(t)$ from them via measurements and various protocols. These steps are described generally in Section~\ref{sec:main_vect}. The following sections then describe applications of this procedure for the computation of various quantities such as operator stabilizer (Section~\ref{sec:ose}) and entanglement entropies (Section~\ref{sec:loe}), OTOCs/superoperator expectation values (Section~\ref{sec:superop}), and two-point correlators (Section~\ref{sec:2pc_main}).

Within our framework, the quantum simulation of $O \rightarrow O(t)$ inherits much of the resource requirements of the Schrödinger-picture Hamiltonian simulation problem. This renders our algorithms amenable to implementation, including on current-generation quantum computers and simulators limited by connectivity constraints. To demonstrate this, we describe an experimental proposal to simulate Heisenberg time evolution and study operator growth under a 2D lattice Hamiltonian using devices constrained to 2D connectivity, with only a small constant overhead in circuit depth compared to the Schrödinger-picture Hamiltonian simulation problem. 

Our framework can be viewed as a generalization and formalization of certain existing protocols that require access to time-reversal, and is related to Bell sampling \cite{hangleiter2024bell} and certain state and Hamiltonian learning protocols \cite{huang2021information,king2024exponential,zhao2024learning,castaneda2025hamiltonian}. For instance, under the lens of the vectorization and superoperator formalisms, existing protocols for the computation of OTOCs and operator size simply reduce to expectation estimation problems on $\kett{O(t)}$. Further exploitation and generalization of this observation then yields novel algorithms for other quantities such as two-point correlators and operator stabilizer and entanglement entropies.

Finally, we study an existing class of $n$-qubit protocols \cite{swingle2016measuring,mi2021information,schuster2022many,schuster2023learning,cotler2023information,schuster2023operator,google2025observation,algorithmiq} that halves qubit count at the expense of doubling circuit depth, and is able to solve a subset of the above tasks with matching or worse sample complexity. Through vectorization, we relate them to our framework, which can intuitively be understood as an `unfolding' of its tensor network diagram. This yields generalizations and new algorithms for tasks such as sampling from the matrix elements of an operator.

\vspace*{0.8cm}

%% TOC stuff START
\begingroup
\vspace*{-0.75cm}
\makeatletter
\def\mtc@hook@beforeinputfile{\vskip 1.75\baselineskip} % <-- space after CONTENTS
\mtcprepare[c]
\mtcsetformat{parttoc}{tocrightmargin}{2.55em plus 1fil} % <-- left-aligned (ragged-right) titles

\renewcommand\numberline[1]{\hb@xt@\@tempdima{#1.\hfil}}

\renewcommand{\mtcgapafterheads}{2.0\baselineskip}
\renewcommand*\l@section{%
  \addvspace{0.6\baselineskip}% <-- space between sections
  \@dottedtocline{1}{0.1em}{2.5em}%
}
\renewcommand*\l@subsection{\@dottedtocline{2}{2.6em}{1.5em}}
\mtcsetformat{parttoc}{dotinterval}{10000}
\renewcommand{\ptctitle}{CONTENTS}
\renewcommand{\ptifont}{\small\bfseries}
\renewcommand{\ptcSfont}{\small\rmfamily}

\noptcrule
\makeatother

\vspace*{-1.5cm}
\part{}
\parttoc
\vspace*{-0.5cm}
\endgroup
%% TOC stuff END

\section{Preliminaries} \label{sec:prelim}

We begin by reviewing the Heisenberg picture, and the vectorization, superoperator and transfer matrix formalisms, setting up notation along the way. Readers familiar with them can skip directly to Section~\ref{sec:main_vect}.

\subsection{Heisenberg picture} \label{sec:background}

In the Heisenberg picture, observables evolve as:
\begin{equation}
O \longrightarrow O(t) \equiv \mathcal{E}^\dagger(O),
\end{equation}
where $\mathcal{E}^\dagger$ denotes the adjoint (dual) of the quantum channel $\mathcal{E}$ with respect to the Hilbert--Schmidt (HS) inner product $\tr(A^\dagger B)$. This map is characterized by the requirement that expectation values match those in the Schr\"odinger picture:
\begin{equation}
\tr(\mathcal{E}(\rho)\,O)
\;=\;
\tr(\rho\,\mathcal{E}^\dagger(O))
\qquad
\text{for all states }\rho.
\end{equation}
As a special case, under unitary dynamics $\mathcal{E}(\cdot)=U(\cdot)U^\dagger$ lead to $O(t) = U^\dagger O U$.
A useful property of the adjoint map is that unitality and trace preservation are dual to each other: $\mathcal{E}^\dagger$ is unital (i.e., identity-preserving) if and only if $\mathcal{E}$ is trace-preserving, and $\mathcal{E}^\dagger$ is trace-preserving if and only if $\mathcal{E}$ is unital. Since, by definition, a quantum channel $\mathcal{E}$ is trace-preserving, it follows that its adjoint $\mathcal{E}^\dagger$ is unital, i.e. $\mathcal{E}^\dagger(\mathbb{I})=\mathbb{I}$.

Given any complete operator basis over $n$ qubits $\mathcal{Q} = \{ Q_k\}_{k=1}^{4^n}$ that is orthogonal with respect to the Hilbert-Schmidt (HS) inner product, i.e., $\text{tr}(Q_k^\dagger Q_l) = N \delta_{kl}$, where $N \in \mathbb{R}$ is the normalization factor, $O$ can be expanded as $O = \sum_k c_k Q_k$, where:  
\begin{equation} \label{eq:op_amp_def}
    c_k \equiv \frac{1}{N} \tr(Q_k^\dagger O)
\end{equation}
are referred to as the \textit{operator amplitudes of $O$} in the basis $\mathcal{Q}$. Particularly important to us are the computational operator basis $\mathcal{C} \equiv \{ \ketbra{i}{j}: i, j = 0,1 \}^{\otimes n}$ and the Pauli basis $\mathcal{P} \equiv \{I,X,Y,Z\}^{\otimes n}$. In the latter case, the amplitudes can also be interpreted as a dynamical two-point correlation function at infinite temperature $c_k(t) = \langle P_k O(t)  \rangle_\rho$, where $\rho = \mathbb{I}/2^n$ is the infinite temperature thermal state. 

The normalized absolute-squared values of $c_k$:
\begin{equation} \label{eq:op_prob_def}
    p(Q_k, O) \equiv \frac{\abs{c_k}^2}{\sum_k \abs{c_k}^2} 
\end{equation}
form a probability distribution, referred to as the \textit{operator distribution of $O$} in the basis $\mathcal{Q}$. In the Pauli basis, we refer to $p(P_k, O)$ as \textit{Pauli probabilities}. 

The $4^n$ operator amplitudes completely characterize a Heisenberg operator $O(t)$. Under unitary dynamics generated by a geometrically local Hamiltonian, the support of an initially local Pauli operator grows within a lightcone delimited by the Lieb-Robinson bound. Analogous to the notions of wavefunction entanglement and magic in the Schrödinger picture, operator entanglement \cite{prosen2007operator} and magic \cite{dowling2025magic} can be defined in the Heisenberg picture, respectively dictating the spreading of operators in space, and the complexity of the operator within its support. They can be quantified by operator entanglement and stabilizer entropies that are non-linear functions of $c_k$'s.

In the absence of complete characterization, coarse-grained statistical properties of $O$ are invaluable. An example of such a quantity is the out-of-time-ordered correlator (OTOC), defined as:
\begin{equation} \label{eq:def_otoc}
    \text{OTOC}(O(t), P, Q) \equiv \frac{1}{2^n}\tr(O(t)^\dagger P^\dagger O(t) Q).
\end{equation}
When $P,Q$ are taken to be equal and spatially local, the OTOC measures the degree of their overlap with $O(t)$, and therefore how much of $O$ has spread to that position in space-time. As a central quantity in studies of information scrambling and operator growth \cite{von2018operator,nahum2018operator,roberts2018operator,qi2019quantum,fisher2023random,xu2024scrambling,zhang2025quantum}, it is experimentally measurable given the ability to implement time-reversed dynamics, via protocols closely resembling Loschmidt echoes \cite{swingle2016measuring,yoshida2019disentangling,landsman2019verified,schuster2022many,sundar2022proposal,green2022experimental,xu2024scrambling,google2025observation}. 

\subsection{Vectorization}

The (normalized) vectorization map with respect to an orthogonal operator basis $\mathcal{Q}$ and an orthonormal state basis $\{\ket{k}\}_{k=1}^{4^n}$, defined as:
\begin{equation} \label{eq:vect_map}
    \kett{O}_\mathcal{Q} \equiv \sum_{k=1}^{4^n} \frac{c_k}{\sqrt{\sum_i \abs{c_i}^2}} \ket{k} \in \mathcal{H}',
\end{equation}
where $c_k$ are the operator amplitudes Eq.~(\ref{eq:op_amp_def}), is a bijective map between normalized $n$-qubit operators $O \in \mathcal{L}(\mathcal{H})$ to $2n$-qubit pure states $\kett{O} \in \mathcal{H}'$, where $\mathcal{H}$ and $\mathcal{H}'$ denote the $n$- and $2n$-qubit Hilbert spaces respectively. It is also an isometry, since:
\begin{equation} \label{eq:isometry}
    \leftindex_{\mathcal{Q}}{\bbrakett{O_1}{O_2}}_\mathcal{Q} =  \frac{\tr(O_1^\dagger O_2)}{\sqrt{\tr(O_1^\dagger O_1) \tr(O_2^\dagger O_2)}}.
\end{equation}
As a result, objects in $\mathcal{L}(\mathcal{H})$ can be mapped to those in $\mathcal{H}'$ and vice versa in a structure- and geometry-preserving way. Table~\ref{tab:comparison} of Appendix~\ref{app:background_prelim} shows examples of correspondences resulting from this mapping.

By definition, basis operators map to computational basis states (up to unit phase) when vectorized in the same basis, $\kett{Q_k}_\mathcal{Q} \propto \ket{k}$. For instance, when vectorized in the Pauli basis $\mathcal{P}$, i.e, when we take $\mathcal{Q} = \mathcal{P}$ in Eq.~(\ref{eq:vect_map}), single-qubit Pauli operators (and $\mathbb{I}$) map to computational basis states:
\begin{align*}
    \kett{{\mathbb{I}}}_\mathcal{P} = \ket{00}, &\quad \kett{X}_\mathcal{P} = \ket{01}, \\
    \kett{Z}_\mathcal{P} = \ket{10}, &\quad \kett{Y}_\mathcal{P} = -i\ket{11},
\end{align*}
but map to Bell states in the computational operator basis $\mathcal{C}$:
\begin{align*}
    \kett{{\mathbb{I}}}_\mathcal{C} = \frac{1}{\sqrt{2}}(\ket{00} + \ket{11}), &\quad \kett{X}_\mathcal{C} = \frac{1}{\sqrt{2}}(\ket{01} + \ket{10}), \\
    \kett{Z}_\mathcal{C} = \frac{1}{\sqrt{2}}(\ket{00} - \ket{11}), &\quad \kett{Y}_\mathcal{C} = \frac{i}{\sqrt{2}}(-\ket{01} + \ket{10}).
\end{align*}

It will also be convenient to order the doubled space $\mathcal{H}'$ spatially as $\mathcal{H}' =\mathcal{H}_L \otimes \mathcal{H}_R = (\bigotimes_{i=1}^n \mathcal{H}_L^i ) \otimes (\bigotimes_{i=1}^n \mathcal{H}_R^i )$, where $\mathcal{H}_L$ and $\mathcal{H}_R$ are each of dimension $2^n$, and each $\mathcal{H}_L^i$ or $\mathcal{H}_R^i$ has local dimension 2 (illustrated e.g. in Fig.~\ref{fig:doubled_space}). In $\mathcal{C}$, the useful `ricochet identity':
\begin{equation} \label{eq:ricochet}
    A \otimes B \kett{O}_\mathcal{C} = \kett{AOB^\mathsf{T}}_\mathcal{C}
\end{equation}
holds, with $A$ and $B$ acting on $\mathcal{H}_L$ and $\mathcal{H}_R$ respectively.

Due to the completeness of $\{\kett{Q_k}_\mathcal{Q}\}_{k=1}^{4^n} = \{\ket{k} \}_{k=1}^{4^n}$, one can always switch between bases of vectorization via:
\begin{equation} \label{eq:vect_basis_change}
    R_{\mathcal{Q}, \mathcal{Q}'} \kett{O}_\mathcal{Q} = \kett{O}_{\mathcal{Q}'},
\end{equation}
where $R_{\mathcal{Q}, \mathcal{Q}'}$ is a unitary that also satisfies:
\begin{align}
    R_{\mathcal{Q} , \mathcal{Q}'} &= (R_{\mathcal{Q}' , \mathcal{Q}})^\dagger, \\
    R_{\mathcal{Q}_1 , \mathcal{Q}_3} &= R_{\mathcal{Q}_2 , \mathcal{Q}_3} R_{\mathcal{Q}_1 , \mathcal{Q}_2}, \label{eq:vect_basis_composition}
\end{align}
for bases $\mathcal{Q}_1, \mathcal{Q}_2, \mathcal{Q}_3$. Importantly, the basis change unitary between $\mathcal{C}$ and $\mathcal{P}$ can be shown to be the Bell basis transformation:
\begin{equation} \label{eq:rotation_cp}
    R_{\mathcal{C},\mathcal{P}} = \bigotimes_{i=1}^n (\text{H}_{i_L} \cdot \text{CNOT}_{i_L,i_R}).
\end{equation}
It is a Clifford unitary, consisting of only the Hadamard and CNOT gates. Moreover, the CNOTs act on disjoint pairs $(i_L,i_R)$ and hence can be simultaneously applied.

\subsection{Superoperators and transfer matrices}

Denote a general linear superoperator as $\mathcal{A}(\cdot) \in \mathcal{L}( \mathcal{L}(\mathcal{H}^{n}))$. It can always be decomposed in the operator-sum representation in any basis $\mathcal{Q}$ as:
\begin{equation} \label{eq:superop_def}
    \mathcal{A}(\cdot) = \sum_{kl} f_{kl} Q_k (\cdot) Q_l^\dagger,
\end{equation}
where $f_{kl} \in \mathbb{C}$. It can also be associated with a unique linear operator $M^\mathcal{A}_\mathcal{Q}$ acting on $\kett{O}_\mathcal{Q}$, i.e.,:
\begin{equation} \label{eq:def_tm}
    \kett{\mathcal{A}(O)}_\mathcal{Q} = M^\mathcal{A}_\mathcal{Q} \kett{O}_\mathcal{Q},
\end{equation}
where $M^\mathcal{A}_\mathcal{Q} \in \mathbb{C}^{4^n \cross 4^n}$ is called the \textit{transfer matrix} of $\mathcal{A}$ in the basis $\mathcal{Q}$. It is the matrix representation of $\mathcal{A}$ in the basis $\mathcal{Q}$, with matrix elements:
\begin{equation} \label{eq:tm_elements}
    [M^\mathcal{A}_\mathcal{Q}]_{ij} = \frac{1}{N}\tr(Q_i^\dagger \mathcal{A}(Q_j)).
\end{equation}
When $\mathcal{Q} = \mathcal{P}$, it is also commonly called the Pauli Transfer Matrix (PTM) of $\mathcal{A}$.

Its basis can be changed by conjugating it with the same unitary appearing in Eq.~(\ref{eq:vect_basis_change}), i.e.
\begin{equation} \label{eq:TM_basis_change}
    M^\mathcal{A}_{\mathcal{Q}'} = R_{\mathcal{Q}', \mathcal{Q}}^\dagger M^\mathcal{A}_\mathcal{Q} R_{\mathcal{Q}', \mathcal{Q}}.
\end{equation}
Transfer matrices also inherit many properties of the superoperator such as self-adjointness, unitarity, and commutation.

The computational operator basis $\mathcal{C}$ is particularly useful for manipulating superoperators and transfer matrices, due to the identity Eq.~(\ref{eq:ricochet}). For instance, $\mathcal{A}$ has the decomposition Eq.~(\ref{eq:superop_def}) if and only if:
\begin{equation} \label{eq:tm_cb_identity}
    M^\mathcal{A}_{\mathcal{C}} = \sum_{kl} f_{kl} Q_k \otimes Q^*_l.
\end{equation}
As an example, consider the superoperator $\mathcal{U}(\cdot) \equiv U^\dagger(\cdot)U$ representing Heisenberg evolution by the unitary $U$, with transfer matrix in $\mathcal{Q}$ denoted $M^\mathcal{U}_\mathcal{Q}$, which is also unitary. In $\mathcal{C}$, Eq.~(\ref{eq:tm_cb_identity}) then implies that this unitary takes the simple product form: 
\begin{equation} \label{eq:tm_unitary}
    M^\mathcal{U}_\mathcal{C} = U^\dagger \otimes U^\mathsf{T},
\end{equation}
which is separable over $\mathcal{H}_L$ and $\mathcal{H}_R$.

\section{Algorithmic Framework} \label{sec:main_vect}

\begin{table*}[ht]
\renewcommand{\arraystretch}{1.4}
\setlength{\tabcolsep}{6pt}
\begin{tabular}{P{4.5cm}|P{5.75cm}|P{5.7cm}}
\thickhline
\textbf{Properties of $O(t)$} & \textbf{Properties of $\kett{O(t)}_\mathcal{Q}$} & \textbf{Method of extraction} \\
\thickhline
Operator distribution in $\mathcal{Q}$ & Probability distribution in computational basis & Measurement in computational basis \\
\hline
OTOCs / Statistical moments of self-adjoint $\mathcal{A}$, $\tr(O(t)^\dagger\mathcal{A}^k(O(t)))/N$ & Expectation value of Hermitian transfer matrices $M^\mathcal{A}_\mathcal{Q}$, $\leftindex_{\mathcal{Q}}{\bbra{O(t)} (M^{\mathcal{A}}_\mathcal{Q})^k \kett{O(t)}}_\mathcal{Q}$ & Measurement in eigenbasis of $M^\mathcal{A}_\mathcal{Q}$ and evaluation of Monte Carlo estimator; Section~\ref{sec:superop} \\
\hline
Operator amplitudes $\tr(Q_k^\dagger O(t))/N$, inner products $\tr(O_1^\dagger O_2)/N$ & Amplitude in computational basis, inner product between states & Amplitude/inner product estimation; Section~\ref{sec:2pc_main}\\
\hline
Operator Stabilizer Entropy (OSE) & Inverse participation ratio of probability distribution of $\kett{O(t)}_\mathcal{P}$ & Measurement and fidelity/amplitude estimation; Section~\ref{sec:ose} \\
\hline
Local Operator Entanglement (LOE) & Subsystem entanglement entropy of $\kett{O(t)}_\mathcal{Q}$ & Multivariate trace estimation; Section~\ref{sec:loe} \\

\thickhline
\end{tabular}
\caption{\textbf{Summary of quantities that can be computed under our framework.} Summary of properties of $O(t)$, the corresponding properties encoded in $\kett{O(t)}_\mathcal{Q}$ as a result of the operator-state mapping, and methods for their extraction on a quantum computer. See also Table~\ref{tab:algorithms_index} for an index of introduced algorithms.}
\label{tab:features_Ot}
\end{table*}

Central to our approach is the vectorization map of Eq.~(\ref{eq:vect_map}), which allows $n$-qubit operators $O(t) = U^\dagger O U$ (of unit HS norm) to be encoded as $2n$-qubit pure states $\kett{O(t)}_\mathcal{Q}$ in the doubled Hilbert space $\mathcal{H}'$. By preparing $\kett{O(t)}_\mathcal{Q}$ on a quantum computer or simulator, the encoded Heisenberg operator $O(t)$ is realized as a physical object that can be manipulated and studied. Information on $O(t)$ contained in $\kett{O(t)}_\mathcal{Q}$ (and variations of this encoding, depending on the task at hand) can subsequently be extracted via measurements and other protocols. This approach is summarized in Fig.~\ref{fig:main}. 

In the following, we discuss these steps in detail. Throughout our work, we take $O$ to be Hermitian observables of unit HS norm, allowing us to drop the dagger in expressions such as Eqs.~(\ref{eq:op_amp_def}) and (\ref{eq:def_otoc}).

\subsection{Preparation of encoded operators} \label{sec:preparation}

In this subsection we describe how $\kett{O(t)}_\mathcal{Q}$ can be prepared on a quantum computer. Before describing the preparation of $\kett{O(t)}_\mathcal{Q}$ for an arbitrary basis $\mathcal{Q}$, we begin by presenting Pauli-basis vectorization $\mathcal{Q} = \mathcal{P}$ as this is perhaps simpler to follow and relevant to most applications we will discuss. It is possible to prepare $\kett{O(t)}_\mathcal{P}$ via (1) preparing the $\kett{O}_\mathcal{P}$ in the Pauli basis, (2) rotating from the Pauli basis to the computation basis, (3) implementing the time evolution in the computational basis and then (4) transforming back to the Pauli basis: 
\begin{equation*}
    \ket{0...0} ~\xlongrightarrow{(1)}~ \kett{O}_\mathcal{P} ~\xlongrightarrow{(2)}~ \kett{O}_\mathcal{C} ~\xlongrightarrow{(3)}~ \kett{O(t)}_\mathcal{C} ~\xlongrightarrow{(4)}~ \kett{O(t)}_\mathcal{P} .
\end{equation*}
These steps are schematically summarized in the circuit diagram of Fig.~\ref{fig:main}. More concretely these steps are implemented as follows. 

Step (1) is the preparation of $\kett{O}_\mathcal{P}$, starting from a product state such as $\ket{0...0}$. When $O$ is a single Pauli operator $\kett{O}_\mathcal{P}$  is simply a computational basis state. For example, from Eq.~(\ref{eq:vect_map}), if $O = X$, then $\kett{O}_\mathcal{P} = \ket{01}$. More generally, when $O$ is a linear combination of $s$ Pauli operators, we have that:
\begin{equation*}
    O = \sum_{k=1}^s c_k P_k \longrightarrow \kett{O}_\mathcal{P} = \sum_{k=1}^s c_k \ket{k} \, .
\end{equation*}
This state can be prepared using existing general state preparation algorithms that take $\bigo{ns}$ resources \cite{gleinig2021efficient,zhang2022quantum,li2024nearly,vilmart2025resource}, which is efficient when $s = \bigo{\text{poly}(n)}$. 

In general we expect the time evolution to be simpler to implement in the computational basis and so in step (2) we transform from the Pauli basis $\mathcal{P}$ into the computational basis $\mathcal{C}$.
This is carried out by the unitary $R_{\mathcal{P},\mathcal{C}}$ which is simply the Bell basis transformation shown in Eq.~(\ref{eq:rotation_cp}), and can easily be implemented as a local, transversal Clifford circuit in depth 2.

Step (3) carries out time evolution in the computational basis $\mathcal{C}$. It is implemented by the transfer matrix of $\mathcal{U}$ in $\mathcal{C}$, which takes the tensor product form Eq.~(\ref{eq:tm_unitary}). It can be implemented by applying $U^\dagger$ and $U^\mathsf{T}$ in $\mathcal{H}_L$ and $\mathcal{H}_R$ respectively, in parallel. On a digital quantum computer where $U$ is decomposed as a sequence of elementary gates $U = \prod_{l=1}^L U_l$, this is achieved by reversing the order of the gates and implementing $U_l^\dagger \otimes U_l^\mathsf{T}$: i.e.,
\begin{equation}
\prod_{l=1}^L U_l \longrightarrow \prod_{l=L}^1  U_l^\dagger \otimes U_l^\mathsf{T} \, . 
\end{equation}
We note that $U^\dagger \otimes U^\mathsf{T}$ can be implemented in the same depth as $U$ and also that queries to the transposed unitary $U^\mathsf{T}$ can be avoided, at the price of having to implement $U$, $O$, and $U^\dagger$ sequentially -- see Appendix~\ref{app:details_state_prep} for more details. Finally, in Step (4) we rotate the state back into the Pauli basis using $R_{\mathcal{C},\mathcal{P}}$ - this is just the inverse of the Bell transformation again.

To generalize the above steps to arbitrary $\mathcal{Q}$, we (1) start by preparing $\kett{O}_{\mathcal{Q}_i}$ in a basis in which $O$ is sparse, (2) transform the basis of vectorization to $\mathcal{C}$, (3) carry out time-evolution in the computational basis, and finally (4) transform to the desired basis $\mathcal{Q}$. This is summarized as:
\begin{equation} \label{eq:state_prep}
    \kett{O(t)}_\mathcal{Q} = R_{\mathcal{C}, \mathcal{Q}} M^\mathcal{U}_{\mathcal{C}} R_{\mathcal{Q}_i, \mathcal{C}} \kett{O}_{\mathcal{Q}_i} \, .
\end{equation}
The preparation of $\kett{O(t)}_\mathcal{Q}$ can thus always be composed of the following steps:
\begin{enumerate}[label=(\roman*)]
    \item Preparation of initial encoded operator $\kett{O}_{\mathcal{Q}_i}$ in a basis $\mathcal{Q}_i$.
    \item Vectorization basis change between $\mathcal{Q}_i$ and $\mathcal{C}$ ($\kett{O}_{\mathcal{C}} = R_{\mathcal{Q}_i,\mathcal{C}} \kett{O}_{\mathcal{Q}_i}$), and between $\mathcal{C}$ and $\mathcal{Q}$ ($\kett{O(t)}_{\mathcal{Q}} = R_{\mathcal{C}, \mathcal{Q}}, \kett{O(t)}_{\mathcal{C}}$).
    \item Time evolution $\kett{U^\dagger O U}_{\mathcal{C}} = M^U_{\mathcal{C}} \kett{O}_{\mathcal{C}}$ in $\mathcal{C}$.
\end{enumerate}
Each step is efficient, provided that respectively (i) the initial operator $O$ is sparse in $\mathcal{Q}_i$, (ii) the initial and final bases $\mathcal{Q}_i$ and $\mathcal{Q}$ have tensor product structures over small partitions, and (iii) the implementations of $U^\dagger$, and $U^\mathsf{T}$ or $U$ are efficient -- this is shown precisely in Appendix~\ref{app:details_state_prep}.

Beyond depth preservation, we further remark on the following facts regarding $M^\mathcal{U}_\mathcal{C}$:
\begin{itemize}
    \item It inherits the connectivity/entanglement structure of $U$ over spatial indices: trivially, each gate $U_l^\dagger \otimes U^\mathsf{T}_l$ is only supported on $\bigotimes_{i \in \text{supp}(U_l)} \mathcal{H}^i_L \otimes \mathcal{H}^i_R$.

    \item It inherits the Cliffordness of $U$: $U$ being Clifford implies that both $U^\dagger$ and $U^\mathsf{T}$, and therefore $U^\dagger \otimes U^\mathsf{T}$ is also Clifford (as Clifford gates are closed under inversion and complex-conjugation).
\end{itemize}
Heisenberg time evolution therefore inherits the resource requirements of simulating $U^\dagger$, and $U$ or $U^\mathsf{T}$ via our framework. 

In Section~\ref{sec:implementation_proposal}, we further detail an implementation of this procedure in practice, for the simulation of a spin model defined on a 2D lattice. Taking into account realistic hardware connectivity constraints, we illustrate the practical amenability of our approach on current-generation hardware.

Finally, beyond the main vectorization encoding $O \rightarrow \kett{O}_\mathcal{Q}$, we will also study an alternative but related $n$-qubit encoding $\rho(O)$ that requires only $n$ qubits. This is detailed in Section~\ref{sec:n_qubit_rand} and Appendix~\ref{app:n_qubit_details}.

\subsection{Sampling and extracting information} \label{sec:extracting_information}

Having prepared the state $\kett{O(t)}_\mathcal{Q}$, one can now begin to extract information regarding $O(t)$ from it. Importantly, Eq.~(\ref{eq:vect_map}) enables tasks or properties involving Heisenberg operators to be mapped to those involving quantum states (and vice versa). By effectively realizing this mapping on a quantum simulator, we gain access to Heisenberg operators $O(t)$ via the state $\kett{O(t)}_\mathcal{Q}$. This central observation allows us to apply existing quantum algorithms and subroutines to extract various properties in the Heisenberg picture, summarized in Table~\ref{tab:features_Ot}; see also Table~\ref{tab:algorithms_index} for an index of introduced algorithms. Under this lens, it also becomes clear that certain existing approaches for the computation of quantities such as the OTOC and operator geometrical properties \cite{nahum2018operator,roberts2018operator,qi2019quantum} reduce to instances of this protocol. The relation between our framework and existing methods are discussed in Appendix~\ref{app:relation_existing}.

A particularly important subroutine is the ability to directly sample from the operator distribution $p(Q_k, O)$. This is a direct consequence of Eq.~(\ref{eq:vect_map}), which tells us that the wavefunction amplitudes/probabilities of $\kett{O(t)}_{\mathcal{Q}}$ in the computational basis encode the operator amplitudes/probabilities of $O(t)$ in $\mathcal{Q}$. Among other uses, sampling from $p(Q_k, O)$ enables the estimation of any quantity of the form:
\begin{alignb} \label{eq:computable_general_form}
    F(O(t)) &= \sum_k p(Q_k, O) f(Q_k, O(t)) \\
    &= \mathop{\mathbb{E}}_{k \sim p(Q_k, O)}[f(Q_k, O(t))]
\end{alignb}
via Monte Carlo estimation, where $f(Q_k, O(t)) \in \mathbb{R}$ is a general non-linear function of $O(t)$ and/or $Q_k$ to be computer on either a classical or a quantum computer. This observation allows us to construct estimators for various quantities naturally defined in the Heisenberg picture via the empirical average $F(O(t)) \approx \frac{1}{M} \sum_{i=1}^M f(Q_i, O(t))$, using a small number of samples $M$ that scales only polynomially with system size $n$, as long as $f(Q_k, O(t))$ can be evaluated efficiently (either on a classical or quantum computer), and the variance of this estimator is bounded.

In fact, we will see that the case where $f(Q_k, O(t)) = f(Q_k)$ -- i.e., when $f$ is a function of $Q_k$ but not $O(t)$ -- coincides with the expectation values of self-adjoint superoperators $\langle\mathcal{A}\rangle_O$, which map to the expectation values of observables on $\kett{O(t)}_\mathcal{Q}$, Eq.~(\ref{eq:def_superop_expt}). Furthermore, as vectorization preserves the notions of self-adjointness and commutation, we can exploit the basic but powerful observation that expectation values of commuting observables can be simultaneously estimated. This enables the simultaneous estimation of many superoperator expectation values $\{\langle\mathcal{A}_i\rangle_O\}_i$ (Theorem~\ref{theorem:simul_otoc}). 

The Pauli basis $\mathcal{P}$ is especially useful, since many physically relevant properties can be written in the form of Eq.~(\ref{eq:computable_general_form}) with $\mathcal{Q} = \mathcal{P}$. For instance, the OSE takes this form, with $f(P_k, O(t))$ a non-linear function of both $P_k$ and $O(t)$ (Theorem \ref{theorem:ose}). Also, as we will discuss in Section~\ref{sec:superop}, the case where $f(P_k, O(t))=f(P_k)$ is a function of $P_k$ alone coincides with \textit{all} possible geometrical properties of $O(t)$. It is therefore convenient to generate and store a compact empirical Pauli distribution $p(P_k, O(t)) \approx \frac{\text{no. of samples of $k$}}{{\text{total no. of samples}}}$ which can subsequently be used in an \textit{a posteriori} manner to simultaneously compute the quantities $\{F_1(O(t)), ..., F_M(O(t))\}$ via Monte Carlo, where $F_i(O(t))$'s are properties of $O(t)$ taking the form of Eq.~(\ref{eq:computable_general_form}).

The above discussion sets the stage for Sections~\ref{sec:ose}--\ref{sec:2pc_main}, where we detail applications of our framework for a number of tasks, and propose novel/improved algorithms that accomplish them. Subsequently, Section~\ref{sec:implementation_proposal} discusses concrete proposals for their implementation on current-generation quantum computers. Finally, in Section~\ref{sec:n_qubit_rand}, we study an alternative $n$-qubit approach able to solve a subset of the previously described tasks, and its connection to the main framework. Further details are delegated to the Appendices, including generalizations to finite temperature (App.~\ref{app:finite_temp}), open-system dynamics (App.~\ref{app:channel_gen}), and qudit systems (App.~\ref{app:qudit}).

\section{Algorithm for Operator Stabilizer Entropies} \label{sec:ose}

The $\alpha$-Operator Stabilizer Entropy (OSE) of $O$ \cite{dowling2025magic} is defined as the $\alpha$-Rényi entropy of its Pauli distribution:
\begin{alignb} \label{eq:def_ose}
    M^{(\alpha)}(O) &\equiv \frac{1}{1-\alpha} \log \left[ \sum_{k=1}^{4^n} \left( \frac{\text{tr}(P_k O)}{2^n} \right)^{2 \alpha} \right],
\end{alignb}
where $\alpha \geq 0$. As a magic monotone, it quantifies magic resources generated by non-Clifford dynamics, and is the operator analogue of the state-based stabilizer Rényi entropy \cite{leone2022stabilizer}. The term within the logarithm, the operator stabilizer purity:
\begin{equation}
    P^{(\alpha)}(O) \equiv \sum_{k=1}^{4^n} \left( \frac{\text{tr}(P_k O)}{2^n} \right)^{2 \alpha}
\end{equation}
can be identified as a linearized entropy \cite{haug2024efficient,dowling2025magic}. 

To compute $P^{(\alpha)}$, observe that it can be viewed as the inverse participation ratio of the probability distribution of the state $\kett{O(t)}_\mathcal{P}$, when measured in the computational basis. Denoting the Pauli probabilities by the shorthand $p_k =  \left({\tr(P_kO(t))}/{2^n} \right)^{2}$ and writing $P^{(\alpha)} = \sum_k p_k ~ p^{\alpha-1}_k$ for integer-valued $\alpha > 1$, we observe immediately that $P^{(\alpha)}$ can be written in the Monte Carlo form Eq.~(\ref{eq:computable_general_form}) with $f(P_k,O(t)) \equiv p^{\alpha-1}_k$. Given access to samples of $\kett{O(t)}_\mathcal{P}$, we can therefore compute $P^{(\alpha)}$ via a Monte Carlo algorithm that makes use of the abilities to (1) sample from the Pauli distribution of $O(t)$, and (2) estimate $p^{\alpha-1}_k$ through estimators of the Pauli probabilities $p_k = \abs{\!\! ~_{\mathcal{P}} \langle \! \langle P_k \kett{O(t)}_\mathcal{P}}^2$ such as the SWAP test.

The above discussion yields the following quantum algorithm, which is the operator analogue of existing Schrödinger-picture quantum algorithms \cite{haug2024efficient}:
\begin{enumerate}
    \item Perform independent computational basis measurements on $M \geq 2\log(4/\delta)/\epsilon^2$ copies of $\kett{O(t)}_\mathcal{P}$, yielding measurement results $\{k_1, ..., k_M\}$.

    \item For each $i = 1,..., M$, estimate the $(\alpha-1)^{\text{th}}$ power of the fidelities $p^{\alpha-1}_{k_i}$ via SWAP tests between $\kett{O(t)}_\mathcal{P}$ and $\ket{k}$, using $N \geq 2(\alpha-1)\log(4/\delta)/\epsilon^2$ samples of $\kett{O(t)}_\mathcal{P} \otimes \ket{k}$ in total, yielding the estimates $\{\hat{p}^{\alpha-1}_{k_1}, ..., \hat{p}^{\alpha-1}_{k_M}\}$. 

    \item Output the empirical average $\hat{P}^{(\alpha)} = \frac{1}{M} \sum_{i=1}^M \hat{p}^{\alpha-1}_{k_i}$.
\end{enumerate}
We therefore have the following result:
\begin{theorem}[$\alpha$-stabilizer purities] \label{theorem:ose}
There is an algorithm that computes the $\alpha$-stabilizer purity $P^{(\alpha)}(O)$ (where $\alpha \in \mathbb{N}$) to additive error $\epsilon$ and success probability at least $1-\delta$, using $N_P=\bigo{\alpha \log(1/\delta)/\epsilon^2}$ samples of $\kett{O}_\mathcal{P}$, measured in the computational basis.
\end{theorem}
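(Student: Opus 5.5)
The plan is to analyze the three-step algorithm stated just before the theorem, splitting the error into a sampling part (Step 1) and an estimation part (Step 2) and charging each $\epsilon/2$ and $\delta/2$. Since $O$ is Hermitian with unit HS norm, the $p_k = (\tr(P_k O)/2^n)^2$ sum to one. So $P^{(\alpha)} = \mathbb{E}_{k\sim p}[p_k^{\alpha-1}]$, and sampling $\kett{O}_\mathcal{P}$ in the computational basis draws $k\sim p$ exactly. The key observation is that every quantity involved, $p_k^{\alpha-1}$ and any product of $\alpha-1$ Bernoulli-type outcomes, lies in $[0,1]$. Bounded-range Hoeffding bounds then apply throughout, and no variance control is needed.

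For Step 1, I would use Hoeffding on the i.i.d. variables $X_i = p_{k_i}^{\alpha-1}\in[0,1]$. This gives $\bigl|\frac1M\sum_i X_i - P^{(\alpha)}\bigr|\le \epsilon/2$ with probability $1-\delta/2$ once $M\ge 2\log(4/\delta)/\epsilon^2$.

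For Step 2, I will not estimate each $p_{k_i}^{\alpha-1}$ accurately, which would cost a factor $M$. Instead I will build an \emph{unbiased} single-shot estimator for each $p_{k_i}^{\alpha-1}$ and average everything jointly. For a fixed $k$, a SWAP test between $\kett{O}_\mathcal{P}$ and $\ket{k}$ accepts with probability $(1+p_k)/2$, so $Y=2b-1$ is unbiased for $p_k$ and lies in $[-1,1]$. A cheaper alternative is to measure a fresh copy of $\kett{O}_\mathcal{P}$ in the computational basis and set $Y=\mathbbm{1}[\text{outcome}=k]$, which is Bernoulli$(p_k)$ and uses only computational-basis measurements, matching the statement. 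I will take the product of $\alpha-1$ independent such $Y$'s, each from a fresh copy. This product is unbiased for $p_k^{\alpha-1}$ and bounded in $[0,1]$ (or $[-1,1]$ for the SWAP version). Conditioned on $k_1,\dots,k_M$, the $M$ products are independent. Hoeffding then gives deviation $\epsilon/2$ from $\frac1M\sum_i p_{k_i}^{\alpha-1}$ with probability $1-\delta/2$ using $M=\bigo{\log(1/\delta)/\epsilon^2}$ products, hence $\bigo{(\alpha-1)\log(1/\delta)/\epsilon^2}$ copies. A cleaner route is to merge the two steps. Sampling $k$ together with $\alpha-1$ further outcomes from $\alpha$ fresh copies and recording $\mathbbm{1}[\text{all }\alpha\text{ outcomes equal}]$ gives a Bernoulli variable with mean exactly $\sum_k p_k^\alpha=P^{(\alpha)}$. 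One Hoeffding bound then yields the claim directly.

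Finally, a union bound over the two failure events, or no union bound at all in the merged version, gives total error $\epsilon$ with probability at least $1-\delta$. The total sample count is $M+N = \bigo{\alpha\log(1/\delta)/\epsilon^2}$ copies of $\kett{O}_\mathcal{P}$, all measured in the computational basis. I expect the main obstacle to be Step 2. Naively estimating each $p_{k_i}$ and raising it to the power $\alpha-1$ introduces bias and an extra factor of $M$ in cost. The fix is to use products of independent unbiased estimators, or the $\alpha$-fold collision indicator, so that a single bounded-range concentration argument covers the whole estimate.
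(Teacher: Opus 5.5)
Your proposal is correct, and its primary two-stage argument is the paper's proof. The ingredients match: Hoeffding on $p_{k_i}^{\alpha-1}\in[0,1]$ with $M\ge 2\log(4/\delta)/\epsilon^2$, unbiased products of $\alpha-1$ independent single-shot fidelity estimators, and a union bound at $(\epsilon/2,\delta/2)$. The paper's requirement $m\ge 2\log(4/\delta)/(M\epsilon^2)$ amounts to one product per sampled $k_i$, exactly as in your version.

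Two of your deviations are worth keeping. First, you correctly use $2b-1\in\{-1,1\}$ as the unbiased SWAP-test estimator, since the test accepts with probability $(1+p_k)/2$. The paper instead posits a $\{0,1\}$-valued ``shifted'' outcome with mean $p_{k_i}$, which a SWAP test does not directly supply. With genuine SWAP tests the Hoeffding range doubles and the estimation-stage sample count grows by a factor of $4$, though the $\bigo{\alpha\log(1/\delta)/\epsilon^2}$ claim is unaffected. Your indicator $\mathbf{1}[\text{outcome}=k]$ on a fresh computational-basis measurement really is $\{0,1\}$-valued with mean $p_k$, so it makes the paper's constants exact.

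Second, your merged collision estimator is a genuinely different and simpler route. It records whether $\alpha$ independent computational-basis outcomes coincide, so its mean is $\sum_k p_k^\alpha=P^{(\alpha)}$. It needs a single Hoeffding bound with no error splitting, and uses about $\alpha\log(2/\delta)/(2\epsilon^2)$ copies, roughly four times fewer than the paper's $2\alpha\log(4/\delta)/\epsilon^2$. It requires no controlled-SWAP on $4n+1$ qubits and no preparation of $\ket{k}$. It also uses only computational-basis measurements of single copies of $\kett{O}_\mathcal{P}$, which is what the theorem statement literally asserts but the paper's SWAP-test stage does not deliver.

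What the paper's sample-then-estimate-fidelity decomposition buys in exchange is modularity. The fidelity subroutine can be replaced by the alternatives listed after the proof (the Hadamard-test variant on $2n+1$ qubits, the $n$-qubit encoding, direct fidelity estimation) under different access models.
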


Details of the algorithm and the proof of Theorem~\ref{theorem:ose} are delegated to Appendix~\ref{app:ose}, where we also describe alternatives of the estimator of $p_{k_i}^{\alpha-1}$ (replacing step 2), such as a variant of the Hadamard test that takes only $2n+1$ qubits and queries $U^\dagger \otimes U^\mathsf{T}$.

Due to the logarithm, the estimation of the OSE $M^{(\alpha)}$ to additive precision $\epsilon$ via the above algorithm translates to a precision of order $\bigo{e^{-M^{(\alpha)}}\epsilon}$ in $P^{(\alpha)}$, resulting in a sample complexity $N_M=\bigo{\alpha e^{M^{(\alpha)}}\log(1/\delta)/\epsilon^2}$, scaling exponentially with $M^{(\alpha)}$. The computation of $M^{(\alpha)}$ is therefore only efficient when $M^{(\alpha)} = \bigo{\log(n)}$, an expected limitation shared with algorithms for (Schrödinger-picture) stabilizer/entanglement entropies \cite{tarabunga2023many,haug2024efficient,quek2024multivariate,liu2025quantum}.

When one wishes to track the value of $M^{(\alpha)}$ at a specific time $t^*$ (e.g., at long times or in infinite-time averages after transient dynamics), we expect our approach to outperform classical methods whenever the dynamics exhibits \textit{magic backflow}. By this we mean scenarios in which the OSE of $O(t)$ grows to $M^{(\alpha)} = \mathcal{O}(n)$ at intermediate times before returning to $\mathcal{O}(\log n)$ at the target time $t^*$. Backflow phenomena have been extensively studied in the context of operator spreading~\cite{von2018operator, von2022operator}, and since global operators are particularly susceptible to generating high magic, these analyses naturally extend to the magic setting.

In particular, our methods depend only on the amount of magic at the desired time $t^*$, whereas classical approaches must keep track of the full time evolution of the Heisenberg operators over the entire interval from $0$ to $t^*$. In general, classical techniques such as Pauli~\cite{rall2019simulation,aharonov2022polynomial, beguvsic2023simulating, fontana2023classical, shao2023simulating, rudolph2023classical, schuster2024polynomial, angrisani2024classically, gonzalez2024pauli, lerch2024efficient, cirstoiu2024fourier, angrisani2025simulating, fuller2025improved, rudolph2025pauli, angrisani2025simulating, teng2025leveraging, rudolph2026thermal} and Majorana~\cite{miller2025simulation,alam2025fermionic,alam2025programmable,d2025majorana, facelli2026fast} propagation are provably efficient only for restricted classes of noiseless circuits—for example, those that maintain consistently low magic throughout the evolution~\cite{lerch2024efficient,shao2025pauli,d2025majorana}, or certain scrambling circuits for specific tasks such as estimating expectation values~\cite{angrisani2024classically}. For more general tasks, such as approximating the Heisenberg-evolved observable in operator norm, it has been shown that one must retain an exponential number of Pauli strings in $O(t)$~\cite{dowling2025magic}. Consequently, Heisenberg operators with extensive OSE cannot be simulated both faithfully and efficiently using Pauli propagation.

\section{Algorithm for Local Operator Entanglement} \label{sec:loe}
The Local Operator Entanglement (LOE) of an operator $O$ \cite{prosen2007operator} is the operator-space analogue of the entanglement entropy for quantum states. Defined as the $\alpha$-Rényi entropy of the (squared) operator Schmidt coefficients of $O$ across a spatial bipartition, it has been studied in a variety of contexts \cite{prosen2007operator,dubail2017entanglement}, and determines the classical simulatability of operators via tensor network methods \cite{prosen2007operator,jonay2018coarse}.

\begin{figure}[]
    \centering
    \includegraphics[width=.6\linewidth]{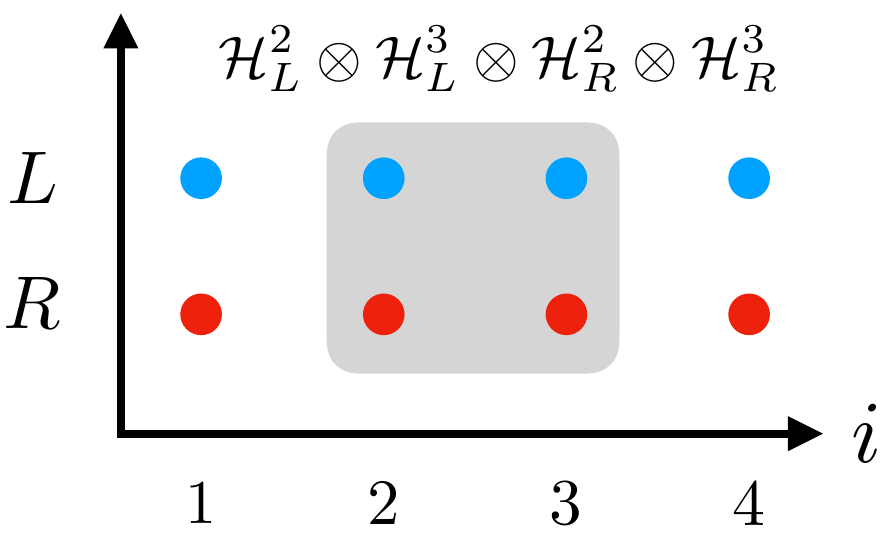}
    \caption{\textbf{Schematic of $\mathcal{H}'$ for $n=4$}. The labeled $4$-qubit subsystem (mapping to a $2$-site operator) is enclosed in gray.}
    \label{fig:doubled_space}
\end{figure}

The LOE of $O$ can be defined naturally through its vectorization $\kett{O}_\mathcal{Q}$ (in any basis $\mathcal{Q}$). Let $A:B$ denote a bipartition of $\{1,...,n\}$, and define the reduced density matrix:
\begin{equation*}
    \rho_A \equiv \tr_{B}\left(\kett{O}_\mathcal{Q} \leftindex_{\mathcal{Q}}{\bbra{O}}\right) \in \bigotimes_{i\in A} (\mathcal{H}^i_L \otimes \mathcal{H}^i_R),
\end{equation*}
obtained by tracing out the subsystem $B \equiv \bigotimes_{i\in B} (\mathcal{H}^i_L \otimes \mathcal{H}^i_R)$, illustrated in Fig.~\ref{fig:doubled_space} as the qubits highlighted in gray for the choice $A=\{2,3\}$, $B=\{1,4\}$. The $\alpha$-LOE is then the Rényi entropy of $\rho_A$:
\begin{equation}
    E^\alpha_A(O) \equiv \frac{1}{1-\alpha} \log(\tr(\rho_A^\alpha)),
\end{equation}
where $\alpha \geq 0$. Within the logarithm is the $\alpha$-purity $\tr(\rho_A^\alpha)$, which also defines the linearized LOE $E^\alpha_{\text{lin}, A}(O) \equiv 1-\tr(\rho_A^\alpha)$. We consider integer-valued $\alpha$ in this work.

Through $\kett{O}_\mathcal{Q}$, $E^\alpha_{\text{lin}, A}(O)$ can be estimated via existing quantum algorithms for purities and entropies \cite{ekert2002direct,johri2017entanglement,elben2018renyi,huang2020predicting,rath2021quantum,quek2024multivariate}. In particular, for $\alpha=2$ we have:
\begin{theorem}[Linear entropy]\label{theorem:loe_combined}
Consider the task of estimating $E^2_{\mathrm{lin},A}(O)$ to additive error $\epsilon$ with success probability at least $1-\delta$. There exists a quantum algorithm that accomplishes this using $\bigo{\log(1/\delta)/\epsilon^2}$ copies of the state $\kett{O(t)}_{\mathcal Q}$.
\end{theorem}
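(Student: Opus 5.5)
The plan is to estimate the purity $\tr(\rho_A^2)$ of the reduced state of $\kett{O(t)}_\mathcal{Q}$ directly, via the SWAP test on two copies. The linear entropy is $E^2_{\mathrm{lin},A}(O) = 1-\tr(\rho_A^2)$, so additive error $\epsilon$ on the purity gives additive error $\epsilon$ on $E^2_{\mathrm{lin},A}$.

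The starting point is the standard swap-trick identity. Let $\mathrm{SWAP}_A$ exchange the subsystems $\bigotimes_{i\in A}(\mathcal{H}^i_L\otimes\mathcal{H}^i_R)$ of two copies of $\mathcal{H}'$. Then
\begin{equation*}
\tr(\rho_A^2) = \tr\!\left[\mathrm{SWAP}_A\,\bigl(\kett{O(t)}_\mathcal{Q}\leftindex_{\mathcal{Q}}{\bbra{O(t)}}\bigr)^{\otimes 2}\right].
\end{equation*}
This holds because $\kett{O(t)}_\mathcal{Q}$ is a normalized pure state on $\mathcal{H}'$, by the isometry property Eq.~(\ref{eq:isometry}). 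Since $\mathrm{SWAP}_A$ is Hermitian with eigenvalues $\pm1$, it can be measured in two equivalent ways. One option is a SWAP test with an ancilla and controlled-SWAPs on the $2|A|$ qubit pairs. The other, which needs no ancilla, is a transversal Bell-basis measurement on each pair of corresponding qubits, followed by classical post-processing of the parities. Either procedure gives a random variable $X\in\{-1,+1\}$ with $\mathbb{E}[X]=\tr(\rho_A^2)$.

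Next comes the concentration step. Each run consumes two copies of $\kett{O(t)}_\mathcal{Q}$, and these are prepared independently via Eq.~(\ref{eq:state_prep}). Averaging $M$ independent outcomes $X_1,\dots,X_M$ and applying Hoeffding's inequality for variables bounded in $[-1,1]$ gives
\begin{equation*}
\Pr\!\left[\Bigl|\tfrac{1}{M}\textstyle\sum_j X_j - \tr(\rho_A^2)\Bigr|\geq \epsilon\right]\leq 2e^{-M\epsilon^2/2}.
\end{equation*}
Choosing $M = 2\log(2/\delta)/\epsilon^2$ therefore achieves success probability at least $1-\delta$, at a cost of $2M=\bigo{\log(1/\delta)/\epsilon^2}$ copies. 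The estimator $1-\frac{1}{M}\sum_j X_j$ inherits the same error bound.

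There is no real obstacle here; the argument is the Schrödinger-picture purity estimation applied to $\kett{O(t)}_\mathcal{Q}$. The one point to check is that the reduced state corresponding to the LOE is the subsystem over both the $L$ and $R$ copies of the sites in $A$. Consequently the SWAP must act on $2|A|$ qubits per copy, not $|A|$. Basis independence is automatic if $\mathcal{Q}$ is a tensor product over sites, because $R_{\mathcal{Q},\mathcal{Q}'}$ is then local and leaves the spectrum of $\rho_A$ unchanged. For a general $\mathcal{Q}$, one should instead take the definition in the stated basis, and the algorithm applies unchanged.
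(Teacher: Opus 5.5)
Your proposal is correct and takes essentially the same route as the paper. The paper's argument is a destructive SWAP test on independent pairs of $\kett{O(t)}_{\mathcal{Q}}$: subsystem $A$, meaning both the $L$ and $R$ qubits of each site in $A$, is measured transversally in the Bell basis, and the resulting $\pm1$ outcomes are post-processed classically and averaged. The paper only sketches this, so your explicit Hoeffding constants and your remark that site-local basis changes leave the spectrum of $\rho_A$ unchanged fill in details rather than change the argument.
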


Concretely, the algorithm performs the destructive SWAP test \cite{barenco1997stabilization, ekert2002direct} on $N$ independent pairs of $\kett{O(t)}_{\mathcal Q}$: for each trial, subsystem $A$ of the pair is measured in the Bell basis via the transversal Clifford unitary of Eq.~(\ref{eq:rotation_cp}). Classical post-processing of the outcomes then produce an estimate of $E^\alpha_{\mathrm{lin},A}(O)$. It requires entangled measurements across $\leq 4n$ qubits, and attains a sample complexity that is independent of partition size. It can be generalized to larger integers $\alpha$ by generalizing the SWAP to a permutation involving $\alpha$ copies of $\kett{O}_\mathcal{Q}$ \cite{brun2004measuring,johri2017entanglement}. 

This task can also be adapted for connectivity-constrained hardware via a constant-depth multivariate trace estimation algorithm \cite{quek2024multivariate}, which takes $N$ samples of the state $V' \kett{O(t)}_\mathcal{Q}^{\otimes 2} \ket{0}^{\otimes \lfloor \alpha/2 \rfloor}$, where $V'$ is a constant-depth circuit consisting of $\bigo{n}$ three-qubit gates. This renders the estimation of the LOE feasible on connectivity-constrained devices such as those possessing a square grid topology; this point is further discussed in Section~\ref{sec:implementation_proposal}.

Similar to the OSE, taking the logarithm yields $\alpha$-Rényi entropies $E^\alpha_A$, but leads to an exponential dependence on $E^\alpha_A$, rendering its estimation to additive precision expensive unless $E^\alpha_A = \bigo{\log(n)}$ \cite{tarabunga2023many,haug2024efficient,quek2024multivariate,liu2025quantum}. Mirroring the discussion of Section~\ref{sec:ose} on classical simulatability in low-magic conditions, we expect our approach for the estimation of the LOE to outperform classical simulation methods that are inefficient under dynamics that generate extensive entanglement, such as those based on tensor networks \cite{prosen2007operator,jonay2018coarse,hemery2019matrix,xu2020accessing}.

\section{Algorithms for OTOCs and Superoperator expectation values} \label{sec:superop}
Coarse-grained statistical information contained in $O$ encompasses quantities such as the OTOC, average operator size, boundary position, and so on, which are important probes of information scrambling and operator growth. In this section, we exploit the vectorization map for their computation on quantum computers. 

Working in a suitably chosen basis -- the Pauli basis -- we show that the ability to draw samples from the Pauli distribution of an operator enables the \textit{simultaneous} extraction of \textit{all} its geometrical properties (to be defined precisely), such as diagonal OTOCs, size, boundary positions, propagation velocities, and so on (visualized in Fig.~\ref{fig:lightcone_geometry}). More generally, by mapping the problem of computing superoperator expectation values to the usual expectation estimation problem on quantum computers, we exploit the commutation between superoperators for the simultaneous estimation of many expectation values at once, which improves upon existing approaches requiring individual estimation \cite{google2025observation,zhang2025quantum}.

The results in this section admit natural generalizations to finite-temperature \cite{viswanath1994recursion,maldacena2016bound,xu2024scrambling} and open-system \cite{swingle2018resilience,zhang2019information,zanardi2021information,schuster2023operator} situations. They are described in Appendices~\ref{app:finite_temp} and \ref{app:channel_gen} respectively.

\subsection{Self-adjoint superoperators and OTOCs} \label{subsec:self-adjoint}
Self-adjoint superoperators are superoperators satisfying $\mathcal{A} = \mathcal{A}^\dagger$. Their transfer matrices $M^\mathcal{A}_\mathcal{Q}$ inherit this property, i.e., they are Hermitian matrices $M^\mathcal{A}_\mathcal{Q} = (M^\mathcal{A}_\mathcal{Q})^\dagger ~ \forall \mathcal{Q}$. They also inherit the notion of commutativity (c.f. Appendix~\ref{app:superop}); defining the commutator between superoperators $\mathcal{A}, \mathcal{B}$ to be the superoperator $[\mathcal{A}, \mathcal{B}](\cdot) \equiv (\mathcal{A}\mathcal{B}-\mathcal{B}\mathcal{A})(\cdot)$, we have, $\forall \mathcal{Q}$,
\begin{equation} \label{eq:commutation_inheritance}
    [\mathcal{A}, \mathcal{B}] = 0 \iff [M^\mathcal{A}_\mathcal{Q}, M^\mathcal{B}_\mathcal{Q}] = 0.
\end{equation}

Due to the above properties, $\mathcal{A}$ can be interpreted as a `superobservable' that encodes coarse-grained properties of Heisenberg operators $O(t)$. That is, analogous to Schrödinger-picture situations where expectation values $\langle A \rangle_\rho = \tr(A \rho)$ of self-adjoint observables $A$ over quantum states $\rho$ carry coarse-grained information, we define expectation values over self-adjoint superoperators $\mathcal{A}$ as:
\begin{equation} \label{eq:def_superop_expt}
    \langle \mathcal{A}\rangle_O \equiv \frac{1}{2^n}\tr(O^\dagger \mathcal{A}(O)).
\end{equation}
Similarly, the variance of $\mathcal{A}$ is defined as $\textup{Var}(\mathcal{A})_O \equiv \langle \mathcal{A}^2\rangle_O - \langle \mathcal{A}\rangle_O^2$, and the $k^{\text{th}}$ moment as $\langle \mathcal{A}^k \rangle_O = \tr(O^\dagger \mathcal{A}^k(O))/2^n$.

Subsequently, the vectorization map allows $\langle \mathcal{A}^k \rangle_{O(t)}$ (where $k \in \mathbb{N}$) to be expressed as:
\begin{equation} \label{eq:super_expt_vect}
    \langle \mathcal{A}^k \rangle_{O(t)} = \leftindex_{\mathcal{Q}}{\bbra{O(t)} (M^{\mathcal{A}}_\mathcal{Q})^k \kett{O(t)}}_\mathcal{Q},
\end{equation}
where $\mathcal{Q}$ can be suitably chosen. In other words, $\langle \mathcal{A}^k \rangle_{O(t)}$ can be accessed on a quantum computer by measuring $\kett{O}_\mathcal{Q}$ in the eigenbasis of $M^\mathcal{A}_\mathcal{Q}$. This reduces the computation of $\langle \mathcal{A}^k \rangle_O$ on quantum computers to the usual expectation estimation problem, and together with commutation, allows us to readily make use of existing results and strategies such as term-grouping, measurement optimization, and so on \cite{wecker2015progress,arrasmith2020operator,hamamura2020efficient,crawford2021efficient}. Indeed, following the discussion of Section~\ref{sec:extracting_information}, they coincide with quantities in the Monte Carlo form Eq.~(\ref{eq:computable_general_form}) with $f(Q_k,O(t)) = f(Q_k)$ only depending on $Q_k$.

The simplest instance of a superoperator expectation value is the OTOC, Eq.~(\ref{eq:def_otoc}). Writing: 
\begin{align*}
    \text{OTOC}(O(t), Q_k, Q_l) &\equiv \frac{1}{2^n}\tr(O(t)^\dagger Q_k^\dagger O(t) Q_l) \\
    &= \leftindex_{\mathcal{C}}{\bbra{O(t)} Q_k \otimes Q_l^\mathsf{T} \kett{O(t)}}_\mathcal{C}  \\
    &= \langle \mathcal{A}_{kl} \rangle_{O(t)},
\end{align*}
where the ricochet identity Eq.~(\ref{eq:ricochet}) and the cyclicity of the trace was used throughout, the OTOC takes the form of the expectation value of the superoperators $\mathcal{A}_{kl}(\cdot) = Q^\dagger_k (\cdot) Q_l$ (consisting of a single term in its operator-sum decomposition). We observe that it can be directly computed by measuring $\kett{O(t)}_\mathcal{C}$ in the eigenbasis of $Q_k \otimes Q_l$. If $Q_k$ and $Q_l$ are Pauli operators, it suffices to perform local Pauli measurements on $\kett{O}_\mathcal{C}$. This procedure requires $\bigo{1/\epsilon^2}$ samples of $\kett{O}_\mathcal{C}$ to achieve additive error $\epsilon$. Higher-order OTOCs of the form $\tr((QO(t)^k))/2^n$ \cite{roberts2017chaos,vallini2024long,fava2025designs,dowling2025free} can also be computed by applying the unitary $QO(t)$ repeatedly before measurement, similar to existing protocols \cite{google2025observation}.

In this language, it also becomes clear that we can exploit the notion of superoperator commutativity, Eq.~(\ref{eq:commutation_inheritance}), to simultaneously estimate OTOCs by measuring in their common eigenbasis. An application of Hoeffding's inequality and the union bound then immediately yields:
\begin{theorem}[Commuting superoperators]\label{theorem:simul_otoc}
Consider the task of simultaneously estimating $M$ OTOCs $\{ \tr(O(t) P_{i} O(t) Q_{i})/2^n\}_{i=1}^M$ to error $\epsilon$ and success probability at least $1-\delta$. Further suppose that $[P_i \otimes Q_i, P_j \otimes Q_j] = 0$ for all pairs $(i,j)$. There is an algorithm that achieves this using $\bigo{\log(M/\delta)/\epsilon^2}$ samples of $\kett{O(t)}_\mathcal{C}$, measured in the common eigenbasis of $\{P_i \otimes Q_i\}_{i=1}^M$.
\end{theorem}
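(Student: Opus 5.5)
The plan is to reduce each OTOC to the expectation value of a single observable on $\kett{O(t)}_\mathcal{C}$ and then read all $M$ of them off one measurement record. First, using the ricochet identity Eq.~(\ref{eq:ricochet}) and the cyclicity of the trace, as in the display preceding the theorem, I rewrite
\begin{equation*}
\frac{1}{2^n}\tr(O(t) P_i O(t) Q_i) = \leftindex_{\mathcal{C}}{\bbra{O(t)} P_i \otimes Q_i^\mathsf{T} \kett{O(t)}}_\mathcal{C}.
\end{equation*}
Here $O(t)$ is Hermitian with unit HS norm, and the normalization $2^n$ in Eq.~(\ref{eq:def_superop_expt}) is absorbed into the normalized vectorization. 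For Pauli operators, $Q_i^\mathsf{T} = \pm Q_i$, with a minus sign exactly when $Q_i$ contains an odd number of $Y$ factors. The relevant observable is therefore $A_i \equiv \pm P_i \otimes Q_i$. Each $A_i$ is Hermitian and unitary, so its spectrum is contained in $\{\pm 1\}$. By hypothesis, the $A_i$ pairwise commute, since signs do not affect commutation.

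Next, since $\{A_i\}_{i=1}^M$ is a commuting family of Hermitian operators, there is a common orthonormal eigenbasis $\{\ket{\phi_j}\}$ with $A_i\ket{\phi_j} = a_{ij}\ket{\phi_j}$, where $a_{ij}\in\{\pm1\}$. I measure each of $N$ independent copies of $\kett{O(t)}_\mathcal{C}$ in this basis. For commuting Paulis this can be done with a Clifford rotation followed by computational-basis measurement. An outcome $j$ occurs with probability $q_j = |\langle \phi_j \kett{O(t)}_\mathcal{C}|^2$. For every $i$, the random variable $X_i = a_{ij}$ then satisfies $\mathbb{E}[X_i] = \sum_j q_j a_{ij} = \leftindex_{\mathcal{C}}{\bbra{O(t)} A_i \kett{O(t)}}_\mathcal{C}$, which is exactly the $i$-th OTOC. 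The key point is that the same $N$ outcomes serve as samples for all $M$ estimators $\hat{\mu}_i = \frac{1}{N}\sum_{s=1}^N a_{i j_s}$.

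Finally, each $\hat\mu_i$ is an average of $N$ i.i.d.\ variables bounded in $[-1,1]$. Hoeffding's inequality gives $\Pr[|\hat\mu_i - \mu_i|\geq\epsilon]\leq 2e^{-N\epsilon^2/2}$. A union bound over the $M$ estimators shows that all of them are $\epsilon$-accurate with probability at least $1-2Me^{-N\epsilon^2/2}$. This is at least $1-\delta$ once $N \geq 2\log(2M/\delta)/\epsilon^2 = \bigo{\log(M/\delta)/\epsilon^2}$.

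The only step needing care is the first: getting the transpose and sign right so that the measured operator is exactly $P_i\otimes Q_i$ up to a known sign. This uses Hermiticity of $O(t)$ and the reality of Pauli transposes. Once that is settled, the estimates are dependent across $i$, but this is harmless because the union bound does not require independence.
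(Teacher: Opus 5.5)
Your proposal is correct and follows the paper's argument. You rewrite each OTOC via the ricochet identity as the expectation of $P_i \otimes Q_i^\mathsf{T}$ on $\kett{O(t)}_\mathcal{C}$, measure every copy once in the common eigenbasis, and combine Hoeffding's inequality with a union bound over the $M$ estimators; your explicit handling of $Q_i^\mathsf{T} = \pm Q_i$ for Pauli operators, which the paper leaves implicit, is what justifies measuring in the eigenbasis of $P_i \otimes Q_i$ as the statement says.
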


The structure of the common eigenbasis of Theorem~\ref{theorem:simul_otoc} generally depends on the details of the set $\{P_i \otimes Q_i\}_{i=1}^M$. In Appendix~\ref{app:general_commutation}, we further show that such eigenbases can be systematically classified as (i) those requiring entanglement between $\mathcal{H}_R$ and $\mathcal{H}_L$, and (ii) those that do not, depending on the local structure of their commutation/anticommutation (Lemma~\ref{lemma:anticomm_basis}). As a consequence, OTOCs belonging to the former case can always be jointly computed using the $n$-qubit algorithm to be described in Section~\ref{sec:n_qubit_rand}, providing a connection between the two classes of algorithms. Finally, beyond grouping superoperator terms $\mathcal{A}_i$'s into commuting groups, it is also possible to group them into anticommuting groups (c.f. Lemma~\ref{app:anticommutation_condition}), in which case we can make use of unitary grouping strategies for measurement reduction \cite{izmaylov2019unitary,zhao2020measurement}.

Repeating the protocol of Theorem~\ref{theorem:simul_otoc} but with the order of $U^\dagger$ and $U$ interchanged to prepare $\kett{O(-t)}$, we immediately also obtain (due to the cyclicity of trace):
\begin{corollary} \label{corr:backward_otoc}
Consider the task of simultaneously estimating $M$ OTOCs $\{ \tr(O P_{i}(t) O Q_{i}(t))/2^n\}_{i=1}^M$ to error $\epsilon$ and success probability at least $1-\delta$. Further suppose that $[P_i \otimes Q_i, P_j \otimes Q_j] = 0$ for all pairs $(i,j)$. There is an algorithm that achieves this using $\bigo{\log(M/\delta)/\epsilon^2}$ samples of $\kett{O(-t)}_\mathcal{C}$, measured in the common eigenbasis of $\{P_i \otimes Q_i\}_{i=1}^M$.
\end{corollary}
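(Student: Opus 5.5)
The plan is to reduce the corollary to Theorem~\ref{theorem:simul_otoc} by identifying backward OTOCs with forward OTOCs of the operator $O(-t)$. Here $P_i(t)=U^\dagger P_i U$ and $Q_i(t)=U^\dagger Q_i U$, and $O(-t)\equiv U O U^\dagger$. The preparation of $\kett{O(-t)}_\mathcal{C}$ is the procedure of Section~\ref{sec:preparation}, with $U$ replaced by $U^\dagger$. Concretely, we apply $M^{\mathcal{U}^{-1}}_\mathcal{C} = U\otimes U^*$ to $\kett{O}_\mathcal{C}$, which by Eq.~(\ref{eq:ricochet}) gives $\kett{UOU^\dagger}_\mathcal{C}$. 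This has the same depth and resource requirements as the forward case, since $U^* = (U^\dagger)^\mathsf{T}$ is implemented by reversing and transposing the gate sequence.

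The key algebraic step is the cyclicity of the trace together with unitarity of $U$:
\begin{equation*}
\tr\!\big(O\, U^\dagger P_i U\, O\, U^\dagger Q_i U\big) = \tr\!\big(UOU^\dagger P_i\, UOU^\dagger Q_i\big) = \tr\!\big(O(-t) P_i O(-t) Q_i\big).
\end{equation*}
To verify this, conjugate the whole product by $U$ and insert $U^\dagger U=\mathbb{I}$ between the factors. Each target quantity therefore equals $\text{OTOC}(O(-t),P_i,Q_i)$ in the form covered by Theorem~\ref{theorem:simul_otoc}. By the computation preceding that theorem, this is also $\leftindex_{\mathcal{C}}{\bbra{O(-t)} P_i \otimes Q_i^\mathsf{T} \kett{O(-t)}}_\mathcal{C}$, up to the transpose and dagger conventions used there.

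The hypothesis $[P_i\otimes Q_i, P_j\otimes Q_j]=0$ does not involve $t$, so it transfers unchanged. We then invoke Theorem~\ref{theorem:simul_otoc} with $O(t)$ replaced by $O(-t)$. Inside that theorem, measuring in the common eigenbasis gives, for each $i$, bounded i.i.d.\ estimators of the $i$-th expectation value. Hoeffding's inequality gives failure probability $\delta/M$ per estimate with $\bigo{\log(M/\delta)/\epsilon^2}$ samples, and a union bound over the $M$ estimates yields the stated complexity.

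The only point needing care is bookkeeping: whether the commuting operators are $P_i\otimes Q_i$ or $P_i\otimes Q_i^\mathsf{T}$. For Pauli operators, $Q^\mathsf{T}=\pm Q$, so commutation and the common eigenbasis are unaffected, and any sign is absorbed in post-processing. I expect no genuine obstacle; the argument is a direct relabeling.
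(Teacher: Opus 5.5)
Your proposal is correct and follows the paper's own argument. The paper likewise uses cyclicity of the trace to rewrite $\tr(O P_i(t) O Q_i(t))/2^n$ as $\tr(O(-t) P_i O(-t) Q_i)/2^n$, then reruns the protocol of Theorem~\ref{theorem:simul_otoc} with $U$ and $U^\dagger$ interchanged to prepare $\kett{O(-t)}_\mathcal{C}$; your checks that this preparation is $(U\otimes U^*)\kett{O}_\mathcal{C}$ and that $Q_i^\mathsf{T}=\pm Q_i$ leaves the commutation condition and common eigenbasis unchanged supply bookkeeping the paper leaves implicit.
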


Next, consider a general self-adjoint superoperator $\mathcal{A}$ of the form Eq.~(\ref{eq:superop_def}), with $M$ terms in its decomposition (that need not commute). Working in the basis $\mathcal{C}$ and applying Eq.~(\ref{eq:tm_cb_identity}) then yields:
\begin{equation} \label{eq:superop_term_by_term}
    \langle \mathcal{A} \rangle_{O(t)} = \sum_{kl}^M f_{kl} \leftindex_{\mathcal{C}}{\bbra{O(t)} Q_k \otimes Q_l^* \kett{O(t)}}_\mathcal{C}.
\end{equation}
The most straightforward approach to estimate $\langle \mathcal{A} \rangle_{O(t)}$ is to estimate it term-by-term, by measuring $\kett{O}_\mathcal{C}$ in the eigenbases of each $Q_k \otimes Q_l$ term. This requires $N \propto \left( \sum_{kl} \abs{f_{kl}} \sqrt{\text{Var}(\mathcal{A}_{kl})_{O(t)}} \right)^2$ samples, via the optimal measurement allocation strategy that minimizes error \cite{crawford2021efficient}.

However, if some terms mutually commute, they can be simultaneously estimated from a single set of measurements in their common eigenbasis. That is, group terms as:
\begin{equation} \label{eq:superoperator_term_group}
    \mathcal{A} = \sum_{i=1}^K \mathcal{A}^i = \sum_{i=1}^K \sum_{j=1}^{K_i} \mathcal{A}_{j}^i 
\end{equation}
so that each $\mathcal{A}^i$ only contains mutually commuting terms, i.e., $[\mathcal{A}^i_{j}, \mathcal{A}^i_k]=0$ for all $j,k$. $n_i$ measurements on $\kett{O(t)}_\mathcal{Q}$ are then taken in the eigenbasis of $M^{\mathcal{A}^i}_\mathcal{Q}$ for each $i=1,...,K$. The number of samples is then reduced to $N \propto \left(\sum_{i=1}^K \sqrt{\text{Var}(\mathcal{A}^i)} \right)^2$, using the optimal measurement allocation strategy \cite{rubin2018application,crawford2021efficient}. Finally, when all terms commute, a further reduction to $N \propto \text{Var}(\mathcal{A})_{O(t)}$ is possible. Related discussions are delegated to Appendix~\ref{app:superop_comparison}.

\subsection{Operator geometrical properties} \label{sec:geometrical_prop}

Geometrical properties of Heisenberg operators are properties such as their size/support, bulk and boundary volume, shape, velocity, and so on, visualized in Fig.~\ref{fig:lightcone_geometry}. They are central in the context of operator hydrodynamics and growth \cite{khemani2018operator,von2018operator,nahum2018operator,roberts2018operator,qi2019quantum,xu2024scrambling}, where experimental measurements are invaluable.

\begin{figure}[h]
    \centering
    \includegraphics[width=.75\linewidth]{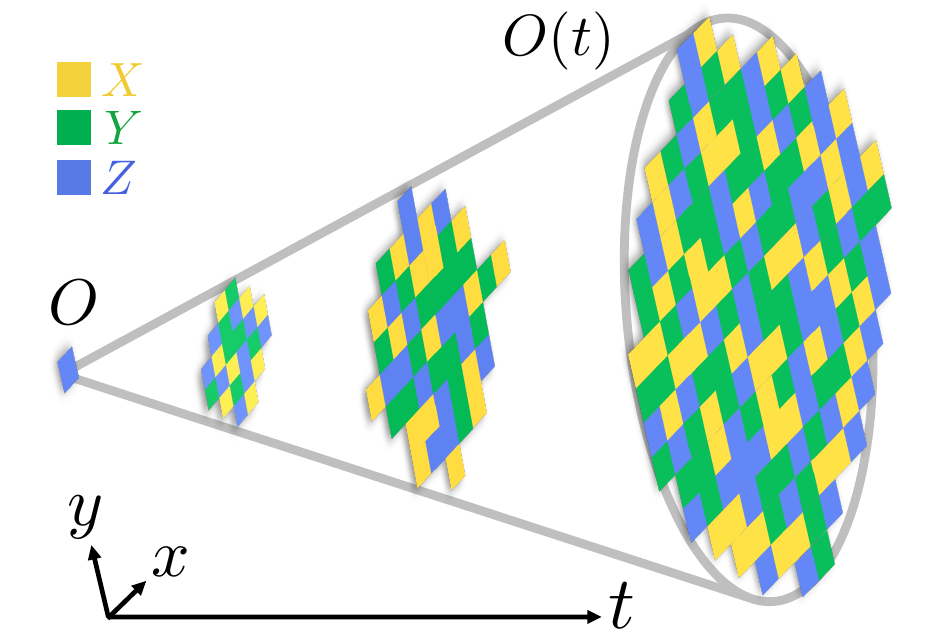}
    \caption{\textbf{Operator growth in the Heisenberg picture.} Dynamics of initially localized operator on a 2D lattice; a single Pauli term of $O(t)$ is visualized. Operator geometrical properties -- such as its surface area, volume, etc. -- are functions of the local occupancies of each of the three `flavors' of Pauli operators $\{X, Y, Z\}$, and are therefore expectation values of superoperators diagonal in the Pauli basis, $\langle \mathcal{D} \rangle_{O(t)}$. }
    \label{fig:lightcone_geometry}
\end{figure}

By definition, they are linear functions depending on the spatial occupancy of operators, and can therefore be expressed as superoperators that are linear combinations of projectors onto Pauli operators $P_k \tr(P_k (\cdot))$. In other words, any geometrical property of $O(t)$ can be written as the superoperator expectation value $\langle \mathcal{D} \rangle_{O(t)} = \tr(O(t) \mathcal{D}(O(t)))/2^n$, where $\mathcal{D}$ is a superoperator diagonal in $\mathcal{P}$ with operator-sum representation:
\begin{equation} \label{eq:superop_diagonal}
    \mathcal{D}(\cdot) = \sum_k f_k P_k (\cdot) P_k.
\end{equation}

Importantly, it can be shown (Lemma~\ref{lemma:diag_superop}, Appendix~\ref{app:superop}) that diagonal superoperators of the form Eq.~(\ref{eq:superop_diagonal}) coincide with superoperators with diagonal Pauli Transfer Matrices (PTMs), i.e., their PTMs take the form:
\begin{equation}
    M^{\mathcal{D}}_\mathcal{P} = \sum_{k=1}^{4^n} \lambda_k \ketbra{k}{k} = \sum_{k=1}^{4^n} \lambda_k \kett{P_k}_\mathcal{P} \leftindex_{\mathcal{P}}{\bbra{P_k}}.
\end{equation}
Consequently, expectation values of diagonal superoperators $\mathcal{D}$ can always be computed as $\langle \mathcal{D} \rangle_O = \leftindex_{\mathcal{P}}{\bbra{O(t)} M^{\mathcal{D}}_\mathcal{P} \kett{O(t)}}_\mathcal{P}$ by sampling from $\kett{O}_\mathcal{P}$ in the computational basis (or equivalently, sampling from $\kett{O}_\mathcal{C}$ in the Bell basis). Denoting $\norm{\mathcal{D}} \equiv \max_{\kett{O}_\mathcal{Q} \in \mathcal{H}'} \norm{M^{\mathcal{D}}_\mathcal{Q} \kett{O}_\mathcal{Q}}$ the operator/spectral norm of $\mathcal{D}$, we have:
\begin{corollary}[Diagonal superoperators] \label{corr:diagonal_superop}
Consider the task of estimating $\langle \mathcal{D} \rangle_O$ to additive error $\epsilon$ and success probability at least $1-\delta$, where $\mathcal{D}$ is a superoperator diagonal in the Pauli basis. There is an algorithm that achieves this using $\bigo{\norm{\mathcal{D}}^2 \log(1/\delta)/\epsilon^2}$ samples of $\kett{O(t)}_\mathcal{P}$, measured in the computational basis.
\end{corollary}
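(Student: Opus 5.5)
We use the diagonal form of the Pauli Transfer Matrix to turn $\langle \mathcal{D}\rangle_{O(t)}$ into the mean of a bounded random variable over computational-basis outcomes of $\kett{O(t)}_\mathcal{P}$, and then apply Hoeffding's inequality.

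\textbf{Step 1 (diagonal PTM).} By Lemma~\ref{lemma:diag_superop}, $M^{\mathcal{D}}_\mathcal{P}=\sum_k \lambda_k \ketbra{k}{k}$. The spectral norm of a diagonal matrix is its largest diagonal modulus, so $\norm{\mathcal{D}}=\max_k\abs{\lambda_k}$. The $\lambda_k$ are real here: $\mathcal{D}$ is taken self-adjoint as in Section~\ref{subsec:self-adjoint}, and it is Pauli-diagonal, so each $\lambda_k=\sum_l f_l\,(\pm1)$ is real when the $f_l$ are real.

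\textbf{Step 2 (Monte Carlo form).} By Eq.~(\ref{eq:super_expt_vect}) with $k=1$ and $\mathcal{Q}=\mathcal{P}$,
\begin{equation*}
\langle \mathcal{D}\rangle_{O(t)}=\leftindex_{\mathcal{P}}{\bbra{O(t)} M^{\mathcal{D}}_\mathcal{P}\kett{O(t)}}_\mathcal{P}=\sum_k \lambda_k\,\abs{\leftindex_{\mathcal{P}}{\bbrakett{P_k}{O(t)}}_\mathcal{P}}^2=\sum_k p(P_k,O(t))\,\lambda_k .
\end{equation*}
Here we used that $\kett{O(t)}_\mathcal{P}$ is normalized, since $O$ has unit HS norm and the evolution is unitary. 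This is exactly the Monte Carlo form Eq.~(\ref{eq:computable_general_form}) with $f(P_k)=\lambda_k$. Measuring $\kett{O(t)}_\mathcal{P}$ in the computational basis returns outcome $k$ with probability $p(P_k,O(t))$. The random variable $X=\lambda_k$ therefore has mean $\langle\mathcal{D}\rangle_{O(t)}$ and lies in $[-\norm{\mathcal{D}},\norm{\mathcal{D}}]$.

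\textbf{Step 3 (concentration).} Take $N$ independent measurements, prepared as in Section~\ref{sec:preparation}, and output the empirical mean of the observed $\lambda_{k_i}$. Hoeffding's inequality for variables of range $2\norm{\mathcal{D}}$ gives a failure probability of at most $2\exp(-N\epsilon^2/(2\norm{\mathcal{D}}^2))$. Choosing $N=\lceil 2\norm{\mathcal{D}}^2\log(2/\delta)/\epsilon^2\rceil$ yields the stated $\bigo{\norm{\mathcal{D}}^2\log(1/\delta)/\epsilon^2}$.

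\textbf{Main obstacle.} The only nontrivial point is efficient classical evaluation of $\lambda_k$ for a sampled $k$. Using $[M^\mathcal{D}_\mathcal{P}]_{kk}=\tr(P_k\mathcal{D}(P_k))/2^n=\sum_l f_l\,\chi(P_l,P_k)$, where $\chi=\pm1$ records whether $P_l$ and $P_k$ commute or anticommute, this costs time polynomial in the number of terms of $\mathcal{D}$. For geometrical properties $\lambda_k$ is given directly as a function of the Pauli string $P_k$, so this is immediate. Since the corollary counts only samples, it suffices to note that this step affects classical post-processing time and not sample complexity. The equivalence with Bell-basis sampling of $\kett{O(t)}_\mathcal{C}$ follows from Eq.~(\ref{eq:rotation_cp}).
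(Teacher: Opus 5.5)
Your proof is correct and follows essentially the paper's own (implicit) argument. Lemma~\ref{lemma:diag_superop} makes $M^{\mathcal{D}}_\mathcal{P}$ diagonal, so computational-basis samples of $\kett{O(t)}_\mathcal{P}$ give the Monte Carlo estimator of Eq.~(\ref{eq:computable_general_form}) with $f(P_k)=\lambda_k$ and $\abs{\lambda_k}\le\norm{\mathcal{D}}$; Hoeffding's inequality then yields the stated sample count, and your remarks on the realness of $\lambda_k$ and on the classical cost of evaluating them spell out details the paper only mentions in passing.
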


Similarly, specializing Theorem~\ref{theorem:simul_otoc} to diagonal OTOCs yields:
\begin{corollary}[Estimating all diagonal OTOCs] \label{corr:diagonal_otoc}
Consider the task of simultaneously estimating all $4^n$ quantities $\{\tr(O(t) P O(t) P)/2^n\}_{P \in \mathcal{P}_n}$ to additive error $\epsilon$ and success probability at least $1-\delta$. There is an algorithm that achieves this using $\bigo{\log(4^n/\delta)/\epsilon^2}$ samples of $\kett{O(t)}_\mathcal{P}$, measured in the computational basis.
\end{corollary}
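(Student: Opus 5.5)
Since every diagonal OTOC is the expectation of a diagonal superoperator $\mathcal{A}_P(\cdot)=P(\cdot)P$, the plan is to show that one set of computational-basis samples of $\kett{O(t)}_\mathcal{P}$ gives an unbiased, bounded estimator for every $P$ at once. Then a union bound over all $4^n$ Paulis finishes the argument.

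The first step is to compute the Pauli transfer matrix of $\mathcal{A}_P$. For any Pauli $P_k$ we have $PP_kP=s(P,P_k)\,P_k$, where $s(P,P_k)=+1$ if $P$ and $P_k$ commute and $-1$ if they anticommute. Hence $M^{\mathcal{A}_P}_\mathcal{P}=\sum_k s(P,P_k)\ketbra{k}{k}$ is diagonal with eigenvalues $\pm1$, consistent with Lemma~\ref{lemma:diag_superop}. Writing $O(t)=\sum_k c_kP_k$ with unit HS norm, Eq.~(\ref{eq:super_expt_vect}) then gives
\begin{equation*}
\frac{1}{2^n}\tr(O(t)PO(t)P)=\sum_k p_k\, s(P,P_k)=\mathop{\mathbb{E}}_{k\sim p}[s(P,P_k)],
\end{equation*}
where $p_k=p(P_k,O(t))$. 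The only care needed here is the normalization: the unit HS norm of $O$ (in the paper's convention) must make $\sum_k p_k=1$ agree with the $1/2^n$ prefactor.

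Equivalently, all the operators $M^{\mathcal{A}_P}_\mathcal{P}$ are simultaneously diagonal in the computational basis. This is exactly the commuting situation of Theorem~\ref{theorem:simul_otoc}, with the common eigenbasis being the computational basis of $\kett{O(t)}_\mathcal{P}$ (equivalently, the Bell basis of $\kett{O(t)}_\mathcal{C}$ via Eq.~(\ref{eq:rotation_cp})). So we draw $N$ samples $k_1,\dots,k_N$ by measuring $\kett{O(t)}_\mathcal{P}$ in the computational basis. For each $P$ we output $\hat{o}_P=\frac1N\sum_i s(P,P_{k_i})$, which is computable classically from the symplectic inner product of the bit strings.

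For the error analysis, each term lies in $[-1,1]$, so Hoeffding's inequality gives $\Pr[|\hat{o}_P-\mathrm{OTOC}_P|>\epsilon]\le 2e^{-N\epsilon^2/2}$. A union bound over the $4^n$ Paulis requires $4^n\cdot 2e^{-N\epsilon^2/2}\le\delta$, which holds for $N=\bigo{\log(4^n/\delta)/\epsilon^2}$. No step here is hard. The key point is that the $\pm1$ eigenvalues give a bounded estimator regardless of $P$, so the same samples can be reused for every $P$ and the union bound costs only a logarithmic factor; the rest is the normalization check above.
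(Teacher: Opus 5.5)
Your proposal is correct and is essentially the paper's argument. The paper obtains this corollary by specializing Theorem~\ref{theorem:simul_otoc} (Hoeffding's inequality plus a union bound over a commuting family), with Lemma~\ref{lemma:diag_superop} identifying the common eigenbasis of all the $P(\cdot)P$ as the computational basis of $\kett{O(t)}_\mathcal{P}$; this is exactly your diagonal-PTM computation with $\pm1$ entries, made explicit, and your normalization caveat is settled by the convention $\tr(O(t)^2)/2^n=1$, under which $p_k=c_k^2$ and $\tr(O(t)PO(t)P)/2^n=\sum_k p_k\, s(P,P_k)$ holds exactly.
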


In other words, the Pauli distribution of $O(t)$ -- accessed by sampling from $\kett{O(t)}_\mathcal{P}$ -- contains all information on its geometrical properties at time $t$. From a practical point of view, this approach is efficient and experimentally friendly, only requiring measurements from a fixed set of experimental conditions. This motivates our discussion in Section~\ref{sec:extracting_information} on efficiently obtaining and storing an empirical probability distribution in practice -- a single set of samples/experimental setting suffices for the extraction of any geometrical property at time $t$ in an \textit{a posteriori} manner, using the Monte Carlo estimator Eq.~(\ref{eq:computable_general_form}) with $f[P_k, O(t)] = \lambda_k$ computed classically. 

Examples of useful geometrical properties encoded in the Pauli basis that can be simultaneously accessed are as follows. 

~\\ \noindent \textbf{Operator size} \cite{nahum2018operator,roberts2018operator,qi2019quantum}: Define $s(P_k) \in \{0,...,n\}$ as the number of non-identity terms present in $P_k$, e.g. $s(XI...I) = 1$, $s(XY) = 2$. Its corresponding superoperator $\mathcal{S}(\cdot)$ is:
\begin{alignb} \label{eq:size_superop}
    \mathcal{S}(\cdot) &= \frac{3n}{4}(\cdot) - \frac{1}{4}\sum_{P \in \mathcal{P}_{\text{loc}}} P (\cdot) P \\
    &= \frac{1}{2^n} \sum_{k =1}^{4^n} s(P_k) P_k \tr(P_k (\cdot)),
\end{alignb}
where $\mathcal{P}_{\text{loc}}$ is the set of $3n$ size-1 Pauli operators. $\mathcal{S}$ is diagonal in $\mathcal{P}$ (and therefore has a diagonal PTM with entries $\bra{k}M^\mathcal{S}_\mathcal{P}\ket{k} = s(P_k)$), so $\langle \mathcal{S} \rangle_{O(t)}$ can be estimated most efficiently using samples of $\kett{O(t)}_\mathcal{P}$. This approach coincides with an existing protocol \cite{hu2023quantum}. Its variance $\delta \mathcal{S} = \text{Var}(\mathcal{S})_{O(t)}$ is also a relevant quantity in the context of open system dynamics \cite{schuster2023operator}. Further details are delegated to Appendix~\ref{app:superop_comparison}. \\

\noindent \textbf{Operator density/boundary position} \cite{khemani2018operator,von2018operator,nahum2018operator,fisher2023random}: Define $\delta_{\text{RHS}=x}(P_k) \in \{0,1\}$ as the function that takes value 1 when the rightmost non-identity term of $P_k$ (its right boundary) is located at position $x$, and 0 otherwise, e.g. $\delta_{\text{RHS}=1}(XI...I) = 1$, $\delta_{\text{RHS}=3}(XYI...I) = 0$, etc. The \textit{right-boundary weight}:
\begin{equation}
    \Delta(O(t),x) \equiv \sum_k p(P_k, O(t)) \delta_{\text{RHS}=x}(P_k)
\end{equation}
quantifies the average weight of $O(t)$ with right-boundary terminating at site $x$, forms a density, and probes the hydrodynamical spread of operators in time. Viewed as a superoperator expectation value, its moments can be computed within our framework by again sampling from $\kett{O(t)}_\mathcal{P}$. This is efficient since $\abs{\delta_{\text{RHS}=x}(P_k)} \leq 1$ (via Hoeffding's inequality). \\

Further examples of geometrical properties are boundary size, velocity, curvature/anisotropy \cite{nahum2018operator}, spatial range \cite{vanovac2024finite}, presence of voids \cite{liu1912void}, their higher order moments \cite{von2018operator,schuster2023operator}, and so on; their estimation is efficient as long as they have bounded variances, and their values (evaluated on Pauli operators $f(P_k)$) can be determined efficiently (on either a classical/quantum computer). Notably, diagonal superoperators that do not admit a sparse operator-sum decomposition Eq.~(\ref{eq:superop_diagonal}) (such as the right-boundary weight) cannot be straightforwardly computed using methods that boil down to sampling from $\mathcal{C}$ (or any unitarily equivalent basis), such as $2n$-qubit approaches that do not employ Bell sampling, and $n$-qubit approaches such as the one proposed in Section~\ref{sec:n_qubit_rand}. Further discussions and numerical experiments are delegated to Appendix~\ref{app:superop_comparison}.

\section{Algorithms for two-point correlators} \label{sec:2pc_main}

Two-point correlators are useful probes of operator growth \cite{von2018operator,parker2019universal,von2022operator}, scrambling, and chaos \cite{cotler2017chaos,fisher2023random,moudgalya2024symmetries,yoshimura2025theory}, and also appear in the computation of other physical quantities. For instance, as discussed in Section~\ref{sec:ose}, the Pauli amplitudes $c_k = \tr(P_k O(t))/2^n$ are involved in the computation of the OSE. Dynamical correlators characterize low-energy excitations of correlated matter via the dynamical structure factor \cite{baez2020dynamical}, the transport and spreading of correlations via the diffusion constant \cite{von2022operator,rakovszky2022dissipation}. Meanwhile, autocorrelators -- i.e., when the inner product is taken with $O$ at $t=0$, i.e., $\langle O(0) O(t) \rangle_\rho$ -- appear in computations of the spectral form factor \cite{cotler2017chaos,fisher2023random} (or its low-weight approximation \cite{yoshimura2025theory}), which relate the spectral and dynamical properties of a quantum system.

In the following, we describe two approaches for their computation. The first is based on the observation that vectorization maps amplitudes to two-point correlators. The second is based instead on working with $\kett{U}$, the vectorization of the unitary $U$; it has the advantage of fully inheriting the superoperator/transfer matrix formalism discussed in Section~\ref{sec:superop}, enabling the simultaneous estimation of many two-point correlators as expectation values of commuting observables. 

\subsection{Two-point correlators as amplitudes} \label{sec:2pc_amplitude}

The vectorization map Eq.~(\ref{eq:vect_map}) maps wavefunction amplitudes to two-point correlators Eq.~(\ref{eq:op_amp_def}), and more generally inner products between vectorized operators $\kett{O(t)} = \kett{U O U^\dagger}$, $\kett{O'(t')} = \kett{U' O' U'^\dagger}$ to operator inner products:
\begin{equation*}
    c(O ,O') \equiv \frac{1}{2^n} \tr(O^\dagger(t) O'(t')) = ~_{\mathcal{Q}} \langle \! \langle O(t) \kett{O'(t')}_\mathcal{Q}
\end{equation*}
Our framework therefore allows them to be extracted from quantum computers, using protocols for evaluating amplitudes and inner products. 

A naive approach for their computation involves applying the standard Hadamard test with the controlled-$(U^\dagger \otimes U^\mathsf{T})(U'^\dagger \otimes U'^\mathsf{T})$ unitary, mapping $c(O ,O')$ to an expectation value on an ancilla qubit. This can be improved by exploiting the unitality -- the identity-preserving nature -- of Heisenberg evolution. Through the circuit shown in Fig.~\ref{fig:2pc_cirq}, detailed in Appendix~\ref{app:2pc_amplitude}, $c(O ,O')$ is again encoded as an expectation value, without needing to query controlled time evolution unitaries, greatly simplifying implementation.

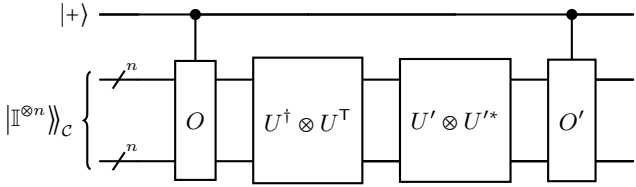
\begin{figure}[H]
\centering
\begin{quantikz}
\lstick{\(\ket{+}\)}     & \qw       & \ctrl{1}     & \qw  & \qw & \ctrl{1}  & \qw \\
\lstick[wires=2]{\(\kett{\mathbb{I}^{\otimes n}}_\mathcal{C}\)}   & \qwbundle{n}   & \gate[2]{O}  & \gate[2]{U^\dagger \otimes U^\mathsf{T}}  & \gate[2]{U' \otimes U'^*} & \gate[2]{O'}  & \qw \\
   & \qwbundle{n}            &              &  & \qw & \qw & \qw 
\end{quantikz}
\caption{\textbf{Circuit for computing two-point correlators.} Exploiting unitality allows controlled time-evolution to be avoided. Measurement of the ancilla qubit in the $\{\ket{\pm}\}$ basis yields $c(O ,O')$.}
\label{fig:2pc_cirq}
\end{figure}

\subsection{Two-point correlators as superoperator expectation values over $U$} \label{sec:2pc_expt}

Consider the task of simultaneously estimating many dynamical two-point correlators $\{\tr(P_i Q_i(t))/2^n\}_{i=1}^M$ at once, where $P_i, Q_i$ are Pauli operators. We switch our focus from Heisenberg operators to $U$, the unitary carrying out dynamics, and consider the vectorization of $U$, the $2n$-qubit state $\kett{U}_\mathcal{Q}$. 

Firstly, note that our framework fully carries over for the preparation of the state $\kett{U}_\mathcal{Q}$. Making full use of the discussion of Section~\ref{sec:preparation} and Appendix~\ref{app:details_state_prep}, they can be prepared as $R_{\mathcal{C} \rightarrow \mathcal{Q}} (U \otimes \mathbb{I}) \kett{\mathbb{I}}_\mathcal{C}$ on a quantum computer. Additionally, the application of $U$ can be parallelized; splitting it as $U = U_L U_R$, the Ricochet identity Eq.~(\ref{eq:ricochet}) leads to $\kett{U}_\mathcal{C} = (U \otimes \mathbb{I}) \kett{\mathbb{I}}_\mathcal{C} = (U_L \otimes U_R^\mathsf{T}) \kett{\mathbb{I}}_\mathcal{C}$, allowing $\kett{U}_\mathcal{C}$ to be prepared in half the depth of $U$, given the ability to implement $U_R^\mathsf{T}$. Note also that access to the adjoint $U^\dagger$ is unnecessary, in contrast to the preparation of $\kett{O(t)}$.

Next, observe that the superoperator and transfer matrix formalisms also carry over to this context. Considering the superoperator $\mathcal{A}_{kl}(\cdot) = Q_k (\cdot) Q^\dagger_l$, its expectation value on $U$ is given by:
\begin{equation*}
    \langle \mathcal{A}_{kl} \rangle_{U} = \tr(U^\dagger Q_k U Q_l^\dagger )/2^n = \tr(Q_l^\dagger Q_k(t))/2^n.
\end{equation*}
In other words, two-point correlators correspond to (self-adjoint) superoperator expectation values over $U$, allowing us to fully make use of the discussion of Section~\ref{sec:superop} and Appendix~\ref{app:superop}. Equivalent to how the simultaneous computability of OTOCs/superoperator expectation values on Heisenberg operators hinge on the commutation between superoperators, the criteria for the simultaneous computability of $\{\langle \mathcal{A}_{i} \rangle_{U}\}_i$ are determined by the commutativity between the $\mathcal{A}_{i}$'s (which directly follow from Lemmas~\ref{app:commutation_condition} and \ref{lemma:anticomm_basis} of Appendix~\ref{app:general_commutation}). 

Following the above discussion, we immediately recover two-point correlator analogues of our results on OTOCs/superoperator expectation values on Heisenberg operators. Firstly, we have the analogue of Theorem~\ref{theorem:simul_otoc}:
\begin{theorem}[Two-point correlators, commuting superoperators] \label{theorem:simul_2pc}
Consider the task of simultaneously estimating $M$ correlators $\left\{ \tr(P_{i} Q_{i}(t))/2^n \right\}_{i=1}^M$ to additive error $\epsilon$ and success probability at least $1-\delta$. There is an algorithm that achieves this using $\bigo{\log(M/\delta)/\epsilon^2}$ samples of $\kett{U}_\mathcal{C}$, measured in the common eigenbasis of $\{P_i \otimes Q_i\}_{i=1}^M$.
\end{theorem}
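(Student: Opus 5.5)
The plan is to reduce the statement to Theorem~\ref{theorem:simul_otoc}: first rewrite each correlator as an expectation value of a Pauli observable on $\kett{U}_\mathcal{C}$, then apply the same measure-in-a-common-eigenbasis argument. As there, I read the statement as implicitly assuming that the observables mutually commute, i.e.\ $[P_i\otimes Q_i, P_j\otimes Q_j]=0$ for all $(i,j)$. Otherwise a common eigenbasis would not exist.

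\textbf{Step 1 (correlators as expectation values).} Since $\tr(U^\dagger U)=2^n$, the normalized vectorization is $\kett{U}_\mathcal{C}=\vecc{U}/\sqrt{2^n}$. By the ricochet identity Eq.~(\ref{eq:ricochet}), $(A\otimes B)\kett{U}_\mathcal{C}$ is the normalized vectorization of $AUB^\mathsf{T}$. The isometry Eq.~(\ref{eq:isometry}), which is bilinear before normalization, then gives
\begin{equation*}
\leftindex_{\mathcal{C}}{\bbra{U} A\otimes B \kett{U}}_\mathcal{C} = \frac{1}{2^n}\tr(U^\dagger A U B^\mathsf{T}).
\end{equation*}
Choosing $A=Q_i$ and $B=P_i^\mathsf{T}$ yields $\tr(U^\dagger Q_i U P_i)/2^n=\tr(P_i Q_i(t))/2^n$. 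This is the case $\mathcal{A}(\cdot)=Q_i(\cdot)P_i$ of the discussion preceding the theorem.

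\textbf{Step 2 (matching the observables in the statement).} For Pauli strings, $P^\mathsf{T}=\pm P$: the sign is $(-1)^{\#Y}$, since $Y^\mathsf{T}=-Y$ and $I,X,Z$ are symmetric. So $Q_i\otimes P_i^\mathsf{T}$ equals $\pm$ the Pauli string $Q_i\otimes P_i$ on $\mathcal{H}_L\otimes\mathcal{H}_R$, which is the paper's $P_i\otimes Q_i$ up to the labelling of the two halves. This sign is known in advance and only flips the classical post-processing. Each such observable is Hermitian with eigenvalues $\pm1$. Pairwise commutation of these $M$ Pauli strings on $2n$ qubits means they can be simultaneously diagonalized; for Pauli strings this can be done by a Clifford circuit. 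Measuring in this common eigenbasis yields, in every shot, a $\pm1$ eigenvalue for every $i$ at once.

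\textbf{Step 3 (statistics).} Prepare $N$ copies of $\kett{U}_\mathcal{C}=(U\otimes\mathbb{I})\kett{\mathbb{I}}_\mathcal{C}$ as in the preceding discussion, and measure each in the common eigenbasis. For each $i$, output the sign-corrected empirical mean of the recorded eigenvalues. The $N$ outcomes for fixed $i$ are i.i.d.\ in $[-1,1]$ with mean equal to the target correlator, which is real because $P_i$ and $Q_i(t)$ are Hermitian. Hoeffding's inequality bounds the failure probability for each $i$ by $2e^{-N\epsilon^2/2}$. A union bound over the $M$ correlators then shows that $N=\bigo{\log(M/\delta)/\epsilon^2}$ suffices.

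The only nontrivial point is the bookkeeping in Steps 1--2: getting the transpose and left/right ordering correct, and checking that the transpose does not affect commutation. Since $P^\mathsf{T}=\pm P$, commutation relations are preserved exactly, so the rest follows the proof of Theorem~\ref{theorem:simul_otoc} verbatim.
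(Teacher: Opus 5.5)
Your proposal is correct and follows the same route as the paper. The paper also rewrites each correlator as the superoperator expectation value $\tr(U^\dagger Q_i U P_i)/2^n$ on $\kett{U}_\mathcal{C}$, with transfer matrix $Q_i\otimes P_i^*$, and then reuses the Hoeffding-plus-union-bound argument of Theorem~\ref{theorem:simul_otoc}; your handling of the unstated commutation hypothesis, the transpose sign $(-1)^{\#Y}$, and the left/right ordering of the tensor factors supplies bookkeeping the paper leaves implicit.
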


Next, observe that autocorrelators taking the form $\tr(P^\dagger P(t))/2^n$ -- necessary for e.g. the computation of the spectral form factor \cite{cotler2017chaos,fisher2023random} -- are expectation values $\langle \mathcal{D} \rangle_U$ of diagonal superoperators $\mathcal{D}(\cdot) \equiv P (\cdot) P^\dagger$. Their treatment in Section~\ref{sec:geometrical_prop} carries over fully: Lemma~\ref{lemma:diag_superop} implies that computational basis measurements on $\kett{U}_{\mathcal{P}}$ suffice to simultaneously compute all such correlators. That is, we have the analogue of Corollary~\ref{corr:diagonal_superop}:
\begin{corollary}[Diagonal two-point correlators]
Consider the task of estimating $\langle \mathcal{D} \rangle_U$ to additive error $\epsilon$ and success probability at least $1-\delta$, where $\mathcal{D}$ is a superoperator diagonal in the Pauli basis. There is an algorithm that achieves this using $\bigo{\norm{\mathcal{D}}^2 \log(1/\delta)/\epsilon^2}$ samples of $\kett{U}_\mathcal{P}$, measured in the computational basis.
\end{corollary}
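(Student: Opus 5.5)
We reduce the statement to a single-observable expectation estimation on the state $\kett{U}_\mathcal{P}$ and then apply Hoeffding's inequality, mirroring the proof of Corollary~\ref{corr:diagonal_superop} with $O(t)$ replaced by $U$.

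\textbf{Step 1: express the target as an expectation value on $\kett{U}_\mathcal{P}$.} First we normalize the vectorization of $U$: since $\tr(U^\dagger U)=2^n$, the isometry Eq.~(\ref{eq:isometry}) and the definition of the transfer matrix Eq.~(\ref{eq:def_tm}) give
\begin{equation*}
\langle \mathcal{D}\rangle_U=\frac{1}{2^n}\tr(U^\dagger\mathcal{D}(U))=\leftindex_{\mathcal{P}}{\bbra{U} M^{\mathcal{D}}_\mathcal{P}\kett{U}}_\mathcal{P}.
\end{equation*}
The normalization factor $\sqrt{\tr(U^\dagger U)\tr(\mathcal{D}(U)^\dagger \mathcal{D}(U))}$ is handled by linearity of vectorization: write $\kett{\mathcal{D}(U)}$ unnormalized as $M^\mathcal{D}_\mathcal{P}\kett{U}$.

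\textbf{Step 2: use diagonality of the transfer matrix.} By Lemma~\ref{lemma:diag_superop}, since $\mathcal{D}$ is diagonal in $\mathcal{P}$, we have $M^{\mathcal{D}}_\mathcal{P}=\sum_k\lambda_k\ketbra{k}{k}$. Hence
\begin{equation*}
\langle\mathcal{D}\rangle_U=\sum_k \lambda_k\, p_k, \qquad p_k=\abs{\leftindex_{\mathcal{P}}{\bbrakett{P_k}{U}}_\mathcal{P}}^2,
\end{equation*}
which is exactly the Monte Carlo form Eq.~(\ref{eq:computable_general_form}) with $f(P_k)=\lambda_k$. Measuring $\kett{U}_\mathcal{P}$ in the computational basis yields outcomes $k$ distributed as $p_k$. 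This state can be prepared as $R_{\mathcal{C},\mathcal{P}}(U\otimes\mathbb{I})\kett{\mathbb{I}}_\mathcal{C}$, as discussed above.

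\textbf{Step 3: bound the estimator and apply Hoeffding.} The estimator is the empirical mean of the $\lambda_{k_i}$. These eigenvalues satisfy $\abs{\lambda_k}\le\norm{M^\mathcal{D}_\mathcal{P}}=\norm{\mathcal{D}}$, and they are real because $\mathcal{D}$ is self-adjoint. If a non-self-adjoint diagonal $\mathcal{D}$ must be allowed, we would estimate the real and imaginary parts separately, each still bounded by $\norm{\mathcal{D}}$. Each sample thus lies in an interval of width $2\norm{\mathcal{D}}$, and Hoeffding's inequality gives $\Pr[\abs{\hat{\mu}-\langle\mathcal{D}\rangle_U}\ge\epsilon]\le 2\exp(-N\epsilon^2/(2\norm{\mathcal{D}}^2))$. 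This yields $N=\bigo{\norm{\mathcal{D}}^2\log(1/\delta)/\epsilon^2}$.

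The main obstacle is Step 1: checking that the normalization and the identification of $\langle\mathcal{D}\rangle_U$ with the quadratic form hold exactly for unitary $U$, which is not Hermitian, unlike the $O(t)$ setting. After that, everything is direct reuse of Lemma~\ref{lemma:diag_superop} and the proof of Corollary~\ref{corr:diagonal_superop}.
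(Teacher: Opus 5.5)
Your proposal is correct and follows the paper's route: the paper obtains this corollary as the $\kett{U}$-analogue of Corollary~\ref{corr:diagonal_superop}. It uses Lemma~\ref{lemma:diag_superop} to write $\langle\mathcal{D}\rangle_U=\sum_k\lambda_k p_k$, where $p_k$ is the computational-basis distribution of $\kett{U}_\mathcal{P}$, and then applies Hoeffding with $\abs{\lambda_k}\le\norm{\mathcal{D}}$. Your Step 1 is easier than you suggest, since $\sum_k\abs{\tr(P_kU)/2^n}^2=\tr(U^\dagger U)/2^n=1$, and the dagger in the definition of $\langle\cdot\rangle_U$ already accounts for $U$ being non-Hermitian.
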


The above results complement existing approaches for the simultaneous estimation problem based on classical shadows \cite{kunjummen2023shadow,levy2024classical}, with sample complexities that scale exponentially with the locality of the two-point correlators. In contrast, the sample complexities of our algorithms are independent of locality, and are based instead on their commutation, enabling up to exponentially many correlators to be simultaneously estimated with a sample complexity linear in $n$. It also generalizes algorithms for the estimation of diagonal PTM elements based on Bell sampling \cite{chen2022quantum,caro2024learning}, which interprets $\kett{U}$ as the Choi state of $U$.

Finally, note that the structure and classification of the eigenbases of commuting (and anticommuting) superoperators also carry over from Appendix~\ref{app:general_commutation}. In particular, in the `all simultaneously commuting' case, an $n$-qubit variant of our results hold, allowing this class of two-point correlators to be computed simultaneously with the same sample complexity, using only $n$ qubits and classical randomization, which effectively encodes $U$ as a $n$-qubit mixed state $\rho(U)$ (c.f. Appendix~\ref{app:n_qubit_algorithms}). Analogous to Table~\ref{tab:features_Ot}, a summary of properties encoded in $\kett{U}$ is provided in Table~\ref{tab:features_U} in the appendix.

\begin{figure*}
    \centering
    \includegraphics[width=.99\linewidth]{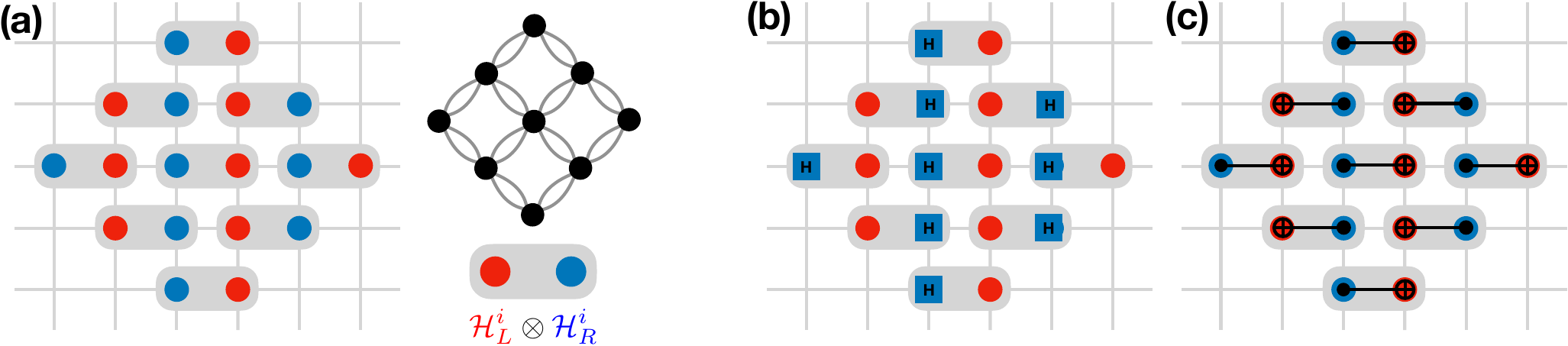}
    \caption{\textbf{Implementation of our framework in a square-lattice device.} (a) Initial embedding of the $2n = 18$ qubits of $\kett{O}$ onto a square grid. Red (blue) qubits occupy one half of the Hilbert space $\mathcal{H}_L$ ($\mathcal{H}_R$). Each red-blue qubit pair (delimited by gray capsules) encodes a single operator located at spatial position $i$. The effective propagator $e^{-iHt} \otimes e^{iH^\mathsf{T}t}$, once Trotterized, contains entangling gates that act between nearest and next-nearest-neighbors in a square grid. This is represented as edges between black nodes (each representing a local operator) of the graph at the right. Within a single Trotter step, there will be 24 entangling gates, representing by the 24 (curved) edges in this graph. (b) and (c) shows the transversal Hadamard and CNOT gates carrying out the basis transformation $R_{\mathcal{P} \rightarrow \mathcal{C}}$.}
    \label{fig:2d_embedding}
\end{figure*}

\section{Implementation proposal on quantum hardware} \label{sec:implementation_proposal}

Our algorithms are amenable to implementation on current-generation quantum computers and analog simulators, since it inherits much of the structure present in conventional quantum simulation in the Schrödinger picture. We demonstrate this by concretely outlining experiments to study operator growth and scrambling.

We consider the problem of simulating Heisenberg-picture dynamics $O \rightarrow O(t) = e^{iHt} O e^{-iHt}$ under Hamiltonians defined on a square lattice such as the mixed-field quantum Ising model, with:
\begin{equation} \label{eq:2d_mfim}
    H = h_x \sum_k \sigma_x^{(k)} + h_z \sum_k \sigma_z^{(k)} - J \sum_{\langle i,j \rangle} \sigma_z^{(i)} \sigma_z^{(j)},
\end{equation}
where $\langle i,j \rangle$ denotes summation over nearest neighbors on a square lattice, and $h_x,h_z,J$ are real parameters. The longitudinal field $h_z$ tunes the model between non-interacting and non-integrable regimes. The initial operator is taken to be a spatially local Pauli operator. We consider a prototypical task in studies of operator hydrodynamics: the mapping of the space-time profile of OTOCs/operator size, involving the computation of diagonal OTOCs for spatially local Pauli operators.

We consider implementation on digital quantum computers featuring a square grid connectivity graph (i.e. two-qubit gates are natively executable on a square lattice), which are featured in existing devices such as those based on superconducting qubits \cite{google2025quantum,google2025observation}. Samples of $\kett{O(t)}_\mathcal{P}$ are prepared, following the initialize-and-evolve protocol of Section~\ref{sec:preparation}:
\begin{enumerate}
    \item Prepare initial operator $O$ as the product state $\kett{O}_\mathcal{C}$.

    \item Implement Heisenberg evolution in $\mathcal{C}$ by time-evolving $\kett{O}_{\mathcal{C}}$ under the `super-Hamiltonian':
    \begin{equation} \label{eq:super-ham}
        \mathbb{H} \equiv -H \otimes \mathbb{I} + \mathbb{I} \otimes H^\mathsf{T}
    \end{equation}
    due to Eq.~(\ref{eq:tm_unitary}).
    
    \item Switch to $\mathcal{P}$ by applying $R_{\mathcal{P} \rightarrow \mathcal{C}}^\dagger$.

    \item Measure $\kett{O(t)}_\mathcal{P}$ in the computational basis.
\end{enumerate}

Crucially, the $2n$ qubits of $\kett{O}$ are arranged in a way that both respects the 2D geometry of the Hamiltonian $H$ (so that local terms in $H$ translate to local operations on the quantum computer), and the tensor product structure of the vectorization map (so that vectorization basis changes are also local). That is, we embed $\kett{O}$ onto the square grid of the quantum computer as shown in Fig.~\ref{fig:2d_embedding}.(a). Implementing Hamiltonian simulation of $\mathbb{H}$ via Trotterization, each term in $\mathbb{H}$ (graph with black nodes in Fig.~\ref{fig:2d_embedding}.(a)) thus translates to entangling gates between spatially local qubits on the quantum computer, regardless of lattice size. Similarly, $R_{\mathcal{P} \rightarrow \mathcal{C}}$ can be carried out in constant depth as shown in Fig.~\ref{fig:2d_embedding}.(b),(c).

For the 2D Ising model of Eq.~(\ref{eq:2d_mfim}), a single Trotter step can be compiled to a depth-5 circuit, schematically shown in Fig.~\ref{fig:2d_schedule}. It involves a sequence of gates $e^{-i\sigma_x^{(k)} \Delta t}$, $e^{-i\sigma_z^{(k)} \Delta t}$, and $e^{-i\sigma_z^{(i)} \sigma_z^{(j)} \Delta t}$, together with an additional layer of SWAP gates. 

Finally, to access the Pauli distribution of $O(t)$, it suffices to perform Bell basis measurements, i.e. transform to the Pauli basis via $R^\dagger_{\mathcal{P} \rightarrow \mathcal{C}}$. Notably, for each value of $t$, a single set of samples from $\kett{O(t)}$ suffices; the resulting (empirical) Pauli distribution allows the recovery of all geometrical properties of $O(t)$ such as the OTOC via Eq.~(\ref{eq:computable_general_form}). This enables the entire space-time profile of $O(t)$ to be obtained efficiently in practice.

Assuming local gate noise (i.e. only affecting manipulated qubits), correlated errors between $\mathcal{H}_L$ and $\mathcal{H}_R$ only occur during basis changes and possibly initialization, which prevents the appearance of globally correlated noise, rendering the noise structure similar to the Schrödinger-picture Hamiltonian simulation of $H$. In Appendix~\ref{app:details_state_prep}, we discuss additional details and error mitigation strategies unique to the Heisenberg picture, such as the possibility of fast-forwarding initial Clifford gates, and the cancellation of gates outside the lightcone of $O(t)$. We also discuss implementation considerations on analog quantum simulators that implement dynamics by tuning the parameters of a Hamiltonian, on, e.g., neutral atom arrays \cite{schuster2022many,garttner2017measuring,mi2021information,sanchez2022emergent,brown2023quantum}, in which case the state preparation protocol of Section~\ref{sec:preparation} can more naturally be phrased as a quantum quench involving an adiabatically prepared initial state of a local Hamiltonian, followed by the `super-Hamiltonian' Eq.~(\ref{eq:super-ham}).

\begin{figure}[h]
    \centering
    \includegraphics[width=1\linewidth]{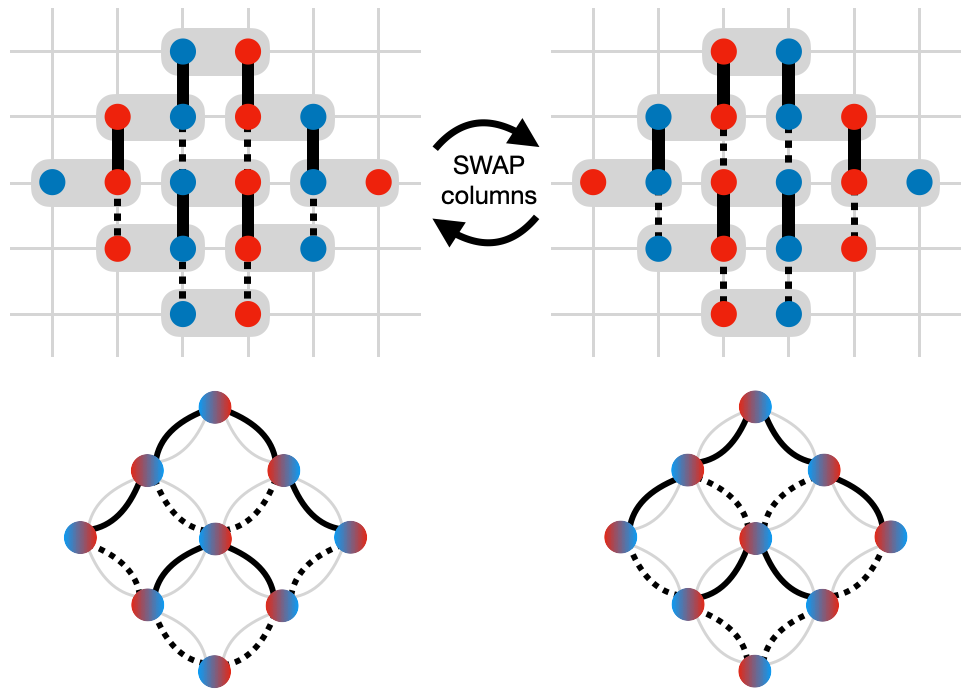}
    \caption{\textbf{Constant-depth schedule executing a single Trotter step.} Gates executed in the device (top), containing all edges in the square connectivity graph of the 2D Ising model (bottom). Solid or dotted black lines indicate entangling gates.
    The sequence of gates are: (1) solid black lines (top left) $\longrightarrow$ (2) dotted black lines (top left) $\longrightarrow$ (3) SWAP gates between qubits in odd and even columns $\longrightarrow$ (4) solid black lines (top right) $\longrightarrow$ (5) dotted black lines (top right).}
    \label{fig:2d_schedule}
\end{figure}

\section{\texorpdfstring{$\lowercase{n}$}{n}-qubit randomized algorithms} \label{sec:n_qubit_rand}
Here, we apply our formalism to a class of alternative, but closely related protocol that makes use of only $n$ qubits \cite{swingle2016measuring,mi2021information,schuster2022many,schuster2023operator,schuster2023learning,cotler2023information,google2025observation,algorithmiq}, in contrast to the main vectorization approach involving $2n$-qubit states $\kett{O}_\mathcal{Q}$. This comes at the expense of a doubled circuit depth, fixing part of the basis of measurement, and requires introducing classical randomization. Detailed discussions are delegated to Appendix~\ref{app:n_qubit_details}.

We begin by describing how one can make use of $n$ qubits of quantum memory and the ability to implement the unitary $V$ to sample from it in the computational operator basis $\mathcal{C} = \{ \ketbra{i}{j} \}$, i.e., sampling from the probability distribution $p(i,j) = \abs{\tr(\ketbra{i}{j}V)}^2/2^n = \abs{\bra{j}V\ket{i}}^2/2^n$:
\begin{enumerate}
    \item Randomly sample a bitstring $i$ with probability $p_1(i)=1/2^n$.
    \item Prepare the state $V\ket{i}$ and measure it in the computational basis, yielding $j$ with probability $p_2(j|i) = \abs{\bra{j}V\ket{i}}^2$.
\end{enumerate}
The resulting joint probability distribution of this process is $p(i,j) = p_1(i)p_2(j|i) = \abs{\bra{j}V\ket{i}}^2/2^n$. Generalizing the initial and final bases to $\{\ket{\phi_i}\}$ (with $\ket{\phi_i} \equiv U_\phi \ket{i}$) and $\{\ket{\psi_i}\}$ (with $\ket{\psi_i} \equiv U_\psi \ket{i}$) respectively, we immediately obtain:
\begin{theorem}[Operator sampling with $n$ qubits] \label{theorem:1_copy_sampling}
Given the ability to implement the unitaries $V$, $U_\psi$, and $U_\phi$, $n$ qubits of quantum memory, and classical randomization, there is an algorithm to sample from the probability distribution $\abs{\bra{\psi_j} V \ket{\phi_i}}^2/2^n$. 
\end{theorem}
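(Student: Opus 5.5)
The plan is to generalize the two-step procedure described just above the theorem by conjugating with the basis-change unitaries. Concretely, the algorithm will be: (1) draw a bitstring $i \in \{0,1\}^n$ uniformly at random, with probability $p_1(i) = 1/2^n$, using classical randomness; (2) prepare $\ket{i}$ on $n$ qubits and apply $U_\phi$ to obtain $\ket{\phi_i}$; (3) apply $V$, then $U_\psi^\dagger$, and measure in the computational basis, recording outcome $j$; (4) output the pair $(i,j)$. Each step uses only the assumed resources: classical randomness, $n$ qubits, and the ability to implement $V$, $U_\phi$ and $U_\psi$ (hence $U_\psi^\dagger$, by reversing the circuit).

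The key computation is the conditional distribution. By the Born rule, step (3) yields $j$ with probability
\begin{equation*}
p_2(j\,|\,i) = \abs{\bra{j} U_\psi^\dagger V U_\phi \ket{i}}^2 = \abs{\bra{\psi_j} V \ket{\phi_i}}^2,
\end{equation*}
since $\bra{j}U_\psi^\dagger = (U_\psi\ket{j})^\dagger = \bra{\psi_j}$. Multiplying by $p_1(i)$ gives the joint distribution $p(i,j) = \abs{\bra{\psi_j} V \ket{\phi_i}}^2/2^n$, as claimed.

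For completeness I will also check that this is a normalized distribution. Since $U_\psi^\dagger V U_\phi$ is unitary, its rows are unit vectors, so $\sum_j \abs{\bra{\psi_j}V\ket{\phi_i}}^2 = 1$ for each $i$; summing over the $2^n$ values of $i$ with weight $1/2^n$ gives $1$. One can also note the interpretation $\abs{\tr(\ketbra{\phi_i}{\psi_j}V)}^2/2^n$: this is the operator distribution of $V$ in the orthogonal operator basis $\{\ketbra{\psi_j}{\phi_i}\}$, for which $N=1$ and $\sum_k\abs{c_k}^2 = \tr(V^\dagger V)=2^n$, consistent with Eq.~(\ref{eq:op_prob_def}).

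There is no real obstacle. The only point needing care is the direction of the final basis change: measuring in the $\{\ket{\psi_j}\}$ basis requires applying $U_\psi^\dagger$, not $U_\psi$, before the computational-basis measurement.
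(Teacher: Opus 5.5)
Your proposal is correct and is essentially identical to the paper's argument: the paper's Algorithm~\ref{alg:n_qubit_sampler} is exactly your procedure (uniform classical sampling of $i$, preparing $U_\phi\ket{i}$, applying $V$ and then $U_\psi^\dagger$, and measuring in the computational basis), and the proof is the same Born-rule factorization $p(i,j)=p_1(i)\,p_2(j\,|\,i)$. Your normalization check and your identification with the operator distribution in the basis $\{\ketbra{\psi_j}{\phi_i}\}$ are correct additions. In the normalization check, the fact you actually use is that the columns $U_\psi^\dagger V U_\phi\ket{i}$ have unit norm; for a unitary the rows do as well.
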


The ensemble of $n$-qubit states -- which we denote as $\rho(V)$ -- can be viewed as encoding the unitary $V$ in the operator basis $\{\ketbra{\phi_i}{\psi_j}\}_{i,j}$. Setting $V = U^\dagger OU$ enables sampling from the operator distribution $\abs{\bra{\psi_j}O(t)\ket{\phi_i}}^2/2^n$, which simulates measurements on $\kett{O(t)}_\mathcal{C}$ in the basis $\{\ket{\psi_i} \otimes \ket{\phi^*_j}\}$, separable across $\mathcal{H}_L$ and $\mathcal{H}_R$. Compared to vectorization where each coherent preparation of $\kett{O(t)}_\mathcal{C}$ involves the application of the $2n$-qubit unitary $(U^\dagger \otimes U^\mathsf{T})(O \otimes \mathbb{I})$, this approach involves executing $U^\dagger OU$ on $n$ qubits, halving qubit count at the expense of doubling depth. Note also that initialization fixes the choice of $\{\ket{\phi_i}\}$.

An important consequence of Theorem~\ref{theorem:1_copy_sampling} is that samples of $n$-qubit states $\rho(O(t))$ suffice for tasks that only involve measuring $\kett{O(t)}_\mathcal{C}$ in bases separable across $\mathcal{H}_L$ and $\mathcal{H}_R$. An example of a \textit{non-separable} basis is the Bell basis (equivalent to sampling from the state $\kett{O(t)}_\mathcal{P}$ in the computational basis), where Theorem~\ref{theorem:1_copy_sampling} does not apply. However, separable measurements suffice for the estimation of individual OTOCs and superoperators via Eq.~(\ref{eq:superop_term_by_term}), and their simultaneous computation in the `all simultaneously commute' subcase (Fig.~\ref{fig:comutation}(a)). A small modification further leads to the ability to compute simultaneously estimate many commuting and local two-point correlators. In other words, we have:
\begin{corollary}[$n$-qubit analogues of Theorems~\ref{theorem:simul_otoc} and \ref{theorem:simul_2pc}] \label{theorem:n_qubit_main}
Let $P_i,Q_i$ be Pauli operators such that for all pairs $(i,j)$, $[P_i, P_j] = 0$ and $[Q_i, Q_j] = 0$.
\begin{enumerate}
    \item There is an algorithm that simultaneously estimates $M$ OTOCs $\{ \tr(O P_{i} O Q_{i})/2^n\}_{i=1}^M$ to $\epsilon$ error using $\bigo{\log(M/\delta)/\epsilon^2}$ samples of $\rho(O(t))$.

    \item There is an algorithm that simultaneously estimates $M$ correlators $\left\{ \tr(P_{i} Q_{i}(t))/2^n \right\}_{i=1}^M$ to $\epsilon$ error using $\bigo{\log(M/\delta)/\epsilon^2}$ samples of $\rho(U)$.
\end{enumerate}
\end{corollary}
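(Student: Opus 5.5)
The plan is to reduce both parts to Theorem~\ref{theorem:1_copy_sampling}, choosing the initial and final bases $\{\ket{\phi_i}\}$, $\{\ket{\psi_j}\}$ to be common eigenbases of the $Q_i$'s and of the $P_i$'s, respectively. Then each target quantity becomes the expectation of a bounded function of the classical sample $(i,j)$, and Hoeffding's inequality together with a union bound over the $M$ quantities gives $\bigo{\log(M/\delta)/\epsilon^2}$ samples, as in Theorem~\ref{theorem:simul_otoc}.

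\textbf{Part 1.} Since the $P_i$ pairwise commute, and likewise the $Q_i$, there are Clifford unitaries $U_\psi$ and $U_\phi$ mapping computational basis states to joint eigenbases: $P_i\ket{\psi_j}=\lambda_i(j)\ket{\psi_j}$ and $Q_i\ket{\phi_k}=\mu_i(k)\ket{\phi_k}$, with $\lambda_i(j),\mu_i(k)\in\{\pm1\}$. Insert resolutions of identity in these bases into the OTOC:
\begin{equation*}
\frac{1}{2^n}\tr(O P_i O Q_i)=\frac{1}{2^n}\sum_{j,k}\lambda_i(j)\mu_i(k)\,\bra{\phi_k}O\ket{\psi_j}\bra{\psi_j}O\ket{\phi_k}.
\end{equation*}
Because $O$ is Hermitian, the last two factors combine to $\abs{\bra{\psi_j}O\ket{\phi_k}}^2$. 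So the OTOC equals $\mathbb{E}_{(j,k)\sim p}[\lambda_i(j)\mu_i(k)]$, where $p(j,k)=\abs{\bra{\psi_j}O\ket{\phi_k}}^2/2^n$. This is exactly the distribution sampled by Theorem~\ref{theorem:1_copy_sampling} with $V=O$ (the evolved operator $O(t)$ in the application). This $p$ is normalized because the HS norm of $O$ is fixed, $\tr(O^2)=2^n$ in the normalization used for $\langle\cdot\rangle_O$.

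Each estimator $\lambda_i\mu_i$ is $\pm1$-valued. Hoeffding's inequality then gives error $\epsilon$ with failure probability $\delta/M$ from $\bigo{\log(M/\delta)/\epsilon^2}$ samples, and a union bound covers all $i$ simultaneously. The same sample set serves every $i$, because the bases are shared.

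\textbf{Part 2.} Take $V=U$ and write
\begin{equation*}
\tr(P_iU^\dagger Q_iU)=\sum_{j,k}\lambda_i(j)\mu_i(k)\,\bra{\psi_j}U^\dagger\ket{\phi_k}\bra{\phi_k}U\ket{\psi_j},
\end{equation*}
now with $U_\psi$ diagonalizing the $P_i$ and $U_\phi$ diagonalizing the $Q_i$. The product of matrix elements is $\abs{\bra{\phi_k}U\ket{\psi_j}}^2$, which is the Theorem~\ref{theorem:1_copy_sampling} distribution for $\rho(U)$ with the roles of the bases swapped. The normalization is automatic, since $U$ is unitary. The concentration argument is then identical to Part 1.

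The main obstacle I anticipate is checking that the $n$-qubit sampling distribution really produces \emph{positive} weights that factorize against $\pm1$ eigenvalue labels. This requires both Hermiticity of $O(t)$ (or unitarity of $U$) and the separability of the measurement basis. It is exactly where the pairwise commutation of the $P_i$ among themselves, and of the $Q_i$ among themselves, is needed, rather than mere commutation of the products $P_i\otimes Q_i$ as in Theorem~\ref{theorem:simul_otoc}. A secondary point is bookkeeping of the transpose and conjugation conventions when this is related back to measuring $\kett{O(t)}_\mathcal{C}$ in the basis $\{\ket{\psi_i}\otimes\ket{\phi^*_j}\}$. The direct argument above sidesteps this bookkeeping.
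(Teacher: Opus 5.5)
Your proposal is correct and follows essentially the paper's route: take the separable product of the common eigenbases of $\{P_i\}$ and $\{Q_i\}$ (Lemma~\ref{lemma:anticomm_basis}, part 2), sample with Algorithm~\ref{alg:n_qubit_sampler}, and conclude with Hoeffding and a union bound as in Theorem~\ref{theorem:simul_otoc}. The only difference is that you verify $\tr(OP_iOQ_i)/2^n=\mathbb{E}_{(j,k)\sim p}[\lambda_i(j)\mu_i(k)]$ by directly inserting resolutions of the identity, whereas the paper invokes $\tr\bigl((U_\psi^\dagger AU_\psi\otimes U_\phi^\dagger BU_\phi)\,\rho(V)\bigr)=\tr(V^\dagger AVB)/2^n$, i.e.\ separable measurement of $\kett{O(t)}_\mathcal{C}$; one point worth making explicit is that Theorem~\ref{theorem:1_copy_sampling} realizes $p$ because $O(t)$ is unitary (the sampler draws the input index uniformly, which matches the marginal of $p$ only if $\bra{\phi_k}O(t)^2\ket{\phi_k}=1$ for every $k$), not merely because $\tr(O^2)=2^n$, though this is implicit once $\rho(O(t))$ is assumed to exist.
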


The two approaches can be naturally related through graphical/tensor network notation \cite{schuster2022many}. The equality of Fig.~\ref{fig:penrose_1copy}.(a) -- obtained by `unfolding' the elbow ($\supset$) representing $n$ Bell pairs at the LHS -- suggests that one can interpret $\mathcal{H}_L$ (associated with the index $j$) and $\mathcal{H}_R$ (associated with the index $i$) as representing the output and input parts respectively of the $n$-qubit computation.
\begin{figure}[H]
    \centering
    \includegraphics[width=1\linewidth]{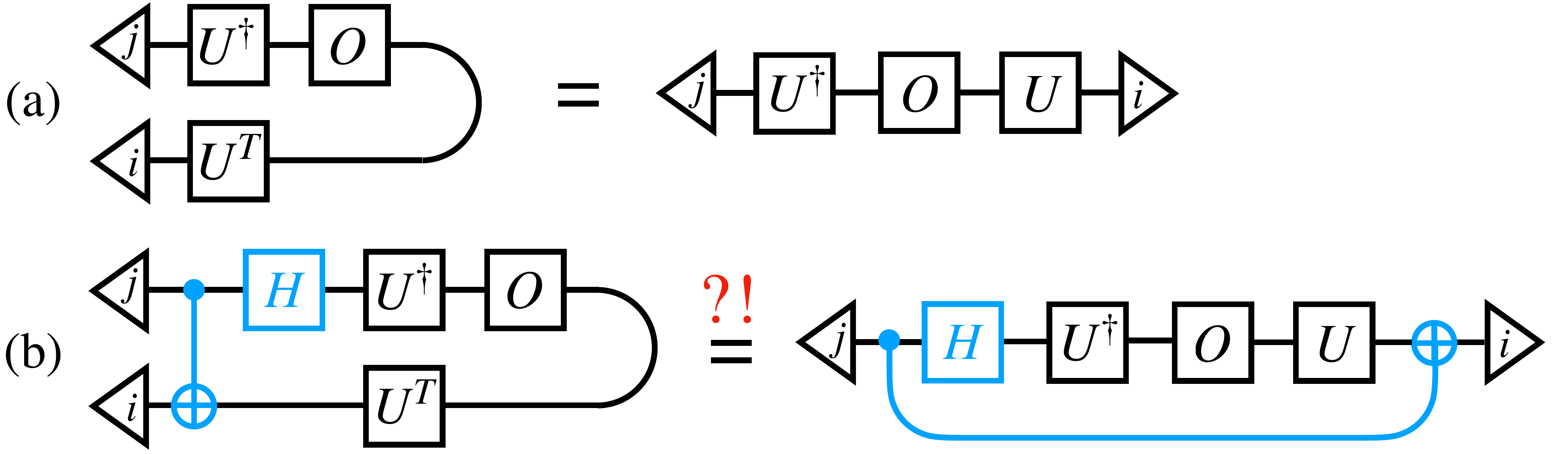}
    \caption{\textbf{Relation between the $2n$-qubit vectorization (LHS) and $n$-qubit (RHS) approaches.} Each wire contains $n$ qubits. (a) represents the expression $\langle j, i \kett{O(t)}_\mathcal{C} = \bra{j}O(t)\ket{i}$, while the LHS of (b) represents $\langle j, i \kett{O(t)}_\mathcal{P}$. The ?! indicates a hypothetical equality that would need to hold for the corresponding $n$-qubit algorithm to exist.}
    \label{fig:penrose_1copy}
\end{figure}

This relationship, together with Lemma~\ref{lemma:anticomm_basis}, hints at why it may be difficult to sample from operator distributions that require entanglement between $\mathcal{H}_L$ and $\mathcal{H}_R$ using only $n$ qubits. Using the Pauli basis -- involving the Bell basis transformation Eq.~(\ref{eq:rotation_cp}) -- as an example, the $n$-qubit computation would require entangling the input and output in time as indicated in Fig.~\ref{fig:penrose_1copy}.(b), which does not appear naively realizable. The vectorization approach is therefore expected to be advantageous (in sample complexity, beyond the memory-depth tradeoff) when operator bases that entangle $\mathcal{H}_L$ and $\mathcal{H}_R$ are considered. The computation of operator geometrical properties (described in Section~\ref{sec:geometrical_prop}) is a concrete example, where the vectorization approach is able to access the Pauli basis and therefore achieve the lowest sample complexity, matching Corollary~\ref{corr:diagonal_superop}. 

\section{Discussion}

In this work we introduced a Heisenberg-picture quantum simulation framework based on vectorization, mapping time-evolved operators $O(t)$ to states $\kett{O(t)}$ on a doubled Hilbert space. This unifies a range of protocols to probe operator dynamics as standard state-property estimation problems and yields concrete new or improved protocols for a number of physically motivated operator properties, as summarized in Tables~\ref{tab:features_Ot} and \ref{tab:algorithms_index}. Natural generalizations such as to finite-temperature, open system dynamics, and qudits further illustrate the versatility of our framework.

Our framework yields new algorithms for the computation of the operator stabilizer entropy (OSE) and the local operator entanglement (LOE), by reducing them to tasks involving sampling from the Pauli distribution of $O(t)$, and multi-copy purity/overlap estimations. Together with the experimental amenability of our approach, this opens up new directions for quantifying the complexity of the dynamics of quantum systems. 

More generally, the framework formalizes prior OTOC estimation approaches expressed via vectorization/superoperators, and it makes clear how to generalize them to simultaneously estimate many commuting superoperator observables from shared measurement data. This is especially relevant to operator-hydrodynamic questions, where one typically wants space-time resolved operator-growth diagnostics rather than a single correlator. Thus our viewpoint provides a practical toolkit for extracting such families of observables within one experimental/simulation workflow.

It is intuitive that simultaneously learning/sampling many operator properties without $2n$ qubits should be inefficient in general. Formalizing this intuition into rigorous learning separations is left open for future work. One promising direction is to cast families of Heisenberg-picture inference tasks (e.g., estimating broad classes of Pauli-diagonal superoperator expectation values, or learning coarse-grained operator hydrodynamic profiles) into an information-theoretic model where restrictions on quantum memory, access to time reversal, or allowable measurement structure provably limit achievable sample complexity. These questions are addressed in a follow-up work \cite{paper2}.

% Below removes entry from TOC
\begingroup
\makeatletter
\let\addcontentsline\@gobblethree % don't write TOC/minitoc entry
\makeatother
\begin{acknowledgments}
SC, AA and GC thank Silvia Pappalardi and Xhek Turkeshi for useful discussions on magic monotones and out-of-time-ordered correlators. 
ZH thanks Sergey Filippov and Guillermo García-Pérez for highlighting the power of $n$-qubit randomized methods in this context.  
SC, ZH and GC acknowledge the funding support from NCCR SPIN, a National Centre of Competence in Research, funded by the Swiss National Science Foundation (grant number 225153). 
AA and ZH acknowledge support from the Sandoz Family Foundation-Monique de Meuron program for Academic Promotion.
\end{acknowledgments}

\bibliography{quantum}

\endgroup

\clearpage

\onecolumngrid

\counterwithin*{equation}{section}
\renewcommand\theequation{\thesection\arabic{equation}}

\renewcommand\partname{} 
\appendix
\begin{center}
 {\Large \textbf{Appendix} }   
\end{center}

\part{}
\parttoc 

\vspace{15pt}

\begin{table*}[ht!]
\renewcommand{\arraystretch}{1.4}
\setlength{\tabcolsep}{6pt}
\begin{tabular}{p{6.5cm} p{3.45cm} p{3.25cm}}

\textbf{Algorithm} & \textbf{$2n$-qubit encoding} & \textbf{$n$-qubit encoding} \\
\Xhline{1pt}

Operator sampling & Sec.~\ref{sec:main_vect} & App.~\ref{app:n_qubit_details} \\

Operator Stabilizer Entropy (OSE) & Sec.~\ref{sec:ose}, App.~\ref{app:ose} & -- \\

Local Operator Entropy (LOE) & Sec.~\ref{sec:loe} & --  \\

OTOCs/superoperator expectation values & Sec.~\ref{subsec:self-adjoint}, \ref{sec:geometrical_prop}, App.~\ref{app:superop} & Sec.~\ref{sec:n_qubit_rand}, App.~\ref{app:n_qubit_algorithms}  \\

Two-point correlators &   &   \\

\hspace{5mm} by commutation &  Sec.~\ref{sec:2pc_expt} &   App.~\ref{app:n_qubit_algorithms} \\

\hspace{5mm} by amplitude estimation & Sec.~\ref{sec:2pc_amplitude}, App.~\ref{app:2pc_amplitude} & -- \\

\Xhline{1pt}
\end{tabular}
\caption{Index of algorithms introduced in this work.}
\label{tab:algorithms_index}
\end{table*}

\clearpage

\section{Background and preliminaries} \label{app:background_prelim}

\subsection{Relation to existing work} \label{app:relation_existing}

Our approaches share connections to certain existing encodings and protocols, generalizing and unifying them via the vectorization, superoperator, and transfer matrix formalisms. We provide a non-exhaustive list of related work below.

~ \\ \noindent \textbf{Relation to existing protocols for OTOCs and operator size:} Our algorithms for OTOCs (more generally, superoperator expectation values) extend prior methods to enable simultaneous estimation. Recasting them as expectation values of Hermitian transfer matrices enables us to exploit mutual commutativity to directly bound their sample complexities, and exposes conditions under which they are jointly computable.
\begin{itemize}
    \item The vectorization encoding (Section~\ref{sec:superop}) is closely connected to $2n$-qubit approaches that require access to time-reversal $U^\dagger$ and transposition $U^\mathsf{T}$, or $U$ and complex-conjugation $U^*$ -- see e.g. \cite{yoshida2019disentangling,landsman2019verified,schuster2022many,sundar2022proposal,green2022experimental,xu2024scrambling}. When considering the computation of a single OTOC $\tr(O^\dagger(t) P^\dagger O(t) P)/2^n$, our approach yields many existing experimental protocols \cite{green2022experimental,sundar2022proposal,schuster2022many}; due to the equality: $\tr(O^\dagger(t) P^\dagger O(t) P)/2^n = \tr(O^\dagger P^\dagger(-t) O P(-t))/2^n$, when computing a single OTOC term, queries to $U^\dagger$ and $U^\mathsf{T}$ can be swapped for queries to $U$ and $U^*$ (in the vectorization language, this is equivalent to vectorizing and forward-time-evolving $P$ (instead of backward-time-evolving $O$), and measuring in the eigenbasis of the transfer matrix $O^\dagger \otimes O^\mathsf{T}$; c.f. also Corollary~\ref{corr:backward_otoc}). To simultaneously compute the OTOCs $\left\{ \tr(O(t)^\dagger P^\dagger_{i} O(t) Q_{i})/2^n \right\}$, it is necessary to vectorize $O(t)$ as we have done. For the operator size, an existing Bell-sampling approach \cite{hu2023quantum} is an instance of our approach based on sampling from $O(t)$ in the Pauli basis.

    \item The $n$-qubit encoding (Section~\ref{sec:n_qubit_rand}) is related to existing $n$-qubit approaches (for individual estimation) that implement $O(t)$ as a unitary, and possibly employ an additional ancilla qubit -- see e.g. \cite{swingle2016measuring,mi2021information,schuster2022many,schuster2023learning,cotler2023information,google2025observation,algorithmiq}. In particular, \cite{schuster2022many} also discusses related connections between $n$- and $2n$-qubit approaches. \cite{algorithmiq} discusses the simultaneous estimation of OTOCs in the same sense of Corollary~\ref{theorem:n_qubit_main}.1.

    An alternative class of protocols involve encoding $O$ as a highly mixed $n$-qubit state $\propto \mathbb{I} + O$; see e.g. \cite{li2017measuring,garttner2017measuring,schuster2023learning,moulik2024dqc1}. They are more naturally realized in e.g. NMR quantum computers based on the DQC1 model of quantum computation \cite{knill1998power}. \cite{schuster2023operator} describes a protocol for the operator size within this class of methods.
\end{itemize}

~ \\ \noindent \textbf{Relation to existing protocols for two-point correlators:} Our framework supplements and generalizes existing protocols related to learning the PTM elements of quantum channels \cite{kunjummen2023shadow,levy2024classical,chen2022quantum,caro2024learning}. For example, \cite{caro2024learning} describes protocols based on working with the $2n$-qubit Choi state of general channels using an existing Pauli-tomography algorithm \cite{huang2021information}, which in our language corresponds to estimating all two-point correlators $\{\tr(P_i U^\dagger P_j U)/2^n\}_{i,j=1}^{4^n}$ when the channel is taken to be unitary. Whereas \cite{huang2021information} requires jointly entangling multiple copies of the Choi state but is able to obtain all possible two-point correlators, our approach only requires measuring individual copies of $\kett{O(t)}$, when there is (commutation) structure to the set $\{P_i \otimes P_j\}$. Similarly, the approach of \cite{chen2022quantum} based on randomized measurements are limited to accessing low-weight Pauli operators.

~ \\ \noindent \textbf{Encoding linear operators as quantum states:} The encoding of quantum channels, Hamiltonians, or density matrices as quantum states so that they may be studied on quantum simulators has been explored in different contexts. We expect our results to also be useful and generalizable to these contexts.

\vspace{5pt}

Quantum channels directly encoded as quantum states via the Choi map -- which we discuss in Section~\ref{sec:2pc_main} and Table~\ref{tab:features_U} for the unitary case -- allows the learning of their structure \cite{low2009learning, montanaro2010quantum, garcia2021quantum, chen2023testing, angrisani2023learning, kunjummen2023shadow,caro2024learning,levy2024classical} and properties such as the operator entanglement \cite{zanardi2001entanglement} (of a unitary $U$; which is different from the LOE) \cite{mcginley2022quantifying}. Similarly, Hamiltonians $H$ encoded via the vectorization map (where $H$ is unknown and the encoding is prepared via queries to $e^{-iHt}$) have been studied for Hamiltonian learning \cite{low2009learning,zhao2024learning,castaneda2025hamiltonian}; we anticipate their results to be useful in our context, for the learning of Heisenberg operators. \cite{somma2025shadow} proposes reduced representations of quantum states that can be regarded as marginalizations of the vectorization map; applications to Heisenberg-picture simulation are briefly discussed. \cite{kamakari2022digital} encodes density matrices as $2n$-qubit quantum states via the vectorization map for open-system simulations.

~ \\ \noindent \textbf{Double-copy algorithms:} The $2n$-qubit approaches are also generally related to \textit{double copy algorithms} requiring access to conjugate queries $U^*$ (or complex-conjugated states). Bell sampling \cite{montanaro2017learning,gross2021schur,hangleiter2024bell} and related quantum learning protocols \cite{huang2021information,king2024exponential} can be regarded as vectorizing a quantum state in the Pauli basis, $\rho \rightarrow \kett{\rho}_\mathcal{P}$.

\subsection{Details on vectorization, superoperators, and transfer matrices} \label{app:details_vect}
We begin by discussing general properties of the vectorization map, superoperators, and transfer matrices in the context of our framework, following the brief description and notation of Section~\ref{sec:prelim}. They are direct consequences of the linear-algebraic structure of the vectorization and transfer matrix formalisms; basic correspondences are summarized in Table~\ref{tab:comparison}. Details on specific properties will be introduced in the following Appendices as needed. 

\begin{table}[h!]
\centering
\renewcommand{\arraystretch}{1.55} 
\begin{tabular}{lll} 
\thickhline
 & Heisenberg picture & Vectorization \\
\hline
State space & $\mathcal{L}(\mathcal{H}) = \mathcal{L}(\mathbb{C}^{2^n})$ & $\mathcal{H}' =\mathbb{C}^{4^n}$ \\
States & $O(t)$ & $\kett{O(t)}_\mathcal{Q}$ \\
Dual linear form & $\tr({O}(t)^\dagger(\cdot))$ & $ \leftindex_{\mathcal{Q}}{\bbra{O(t)}}$\\
Dynamics (in $\mathcal{C}$) & $U^\dagger(\cdot)U$ & $U^\dagger \otimes U^\mathsf{T}$ \\
Generator of dynamics (in $\mathcal{C}$)& $H$ & $-H \otimes \mathbb{I} + \mathbb{I} \otimes H^\mathsf{T}$ \\
Linear operator (over state space)~~~ & $\mathcal{A}$ & $M^\mathcal{A}_\mathcal{Q}$ \\ 
Inner product & $\frac{1}{2^n} \tr(A^\dagger B)$ & $\leftindex_{\mathcal{Q}}{\bbrakett{A}{B}}_\mathcal{Q}$ \\
Statistical moments & $\frac{1}{2^n} \tr(O(t)^\dagger \mathcal{A}^k(O(t)))$ & $\leftindex_{\mathcal{Q}}{\bbra{O(t)} (M^{\mathcal{A}}_\mathcal{Q})^k \kett{O(t)}}_\mathcal{Q}$~  \\ 
Projectors & $\frac{1}{2^n} Q_k \tr(Q_k^\dagger(\cdot))$ & $\kett{Q_k}_{\mathcal{Q} ~ \mathcal{Q}} \! \bbra{Q_k}$ \\
Completeness & $\frac{1}{2^n} \sum_k Q_k \tr(Q_k^\dagger(\cdot)) = (\cdot)$ ~~~~ & $\sum_k \kett{Q_k}_{\mathcal{Q} ~ \mathcal{Q}} \! \bbra{Q_k} = \mathbb{I}$ \\
Basis change& $\frac{1}{2^n} \sum_k Q'_{k} \tr(Q^\dagger_{k} (\cdot))$ & $\sum_k \kett{Q'_k}_{\mathcal{Q}' ~ \mathcal{Q}} \! \bbra{Q'_k}$  \\
Commutation & $[\mathcal{A}, \mathcal{B}] = 0$ & $[M^\mathcal{A}_\mathcal{Q}, M^\mathcal{B}_\mathcal{Q}] = 0$    \\
\thickhline
\end{tabular}
\caption{Several objects and identities that hold for $\mathcal{L}(\mathcal{H})$, and their equivalent in the doubled Hilbert space $\mathcal{H}'$ via the vectorization map.}
\label{tab:comparison}
\end{table}

\subsubsection{Basic properties of superoperators and transfer matrices}

The transfer matrix (TM) of a linear superoperator $\mathcal{A}$ in the orthogonal operator basis $\mathcal{Q} = \{Q_k\}_{k=1}^{4^n}$, denoted as $M^\mathcal{A}_\mathcal{Q}$, is the unique complex matrix that satisfies $\kett{\mathcal{A}(O)}_\mathcal{Q} = M^\mathcal{A}_\mathcal{Q} \kett{O}_\mathcal{Q} ~ \forall O \in \mathcal{L}(\mathcal{H})$. Conversely, given a $4^n \times 4^n$ matrix $M$, there is a unique superoperator $\mathcal{A}_\mathcal{Q}^M$ satisfying $M \kett{O}_\mathcal{Q} = \kett{\mathcal{A}_\mathcal{Q}^M (O)}_\mathcal{Q}$.  

An arbitrary superoperator $\mathcal{A}$ can always be written in the operator-sum-representation Eq.~(\ref{eq:superop_def}). The adjoint of $\mathcal{A}$, defined as the superoperator $\mathcal{A}^\dagger(\cdot)$ such that $\tr((\mathcal{A}^\dagger(X))^\dagger Y) = \tr(X^\dagger \mathcal{A}(Y)) ~\forall X,Y \in \mathcal{L}(\mathcal{H})$, has the decomposition:
\begin{equation}
    \mathcal{A}^\dagger(\cdot) = \sum_{kl} f^*_{kl} Q_k^\dagger(\cdot)Q_l.
\end{equation}
The following basic properties can be shown, following the definition of the transfer matrix:
\begin{align}
    & \text{Identity superoperator $\mathcal{I}$ maps to identity matrix $\mathbb{I}$:} && M_\mathcal{Q}^{\mathcal{I}} = \mathbb{I} \\
    & \text{TM of composed superoperators = Composition of TMs:} && M^{\mathcal{A}_1 \circ \mathcal{A}_2}_\mathcal{Q} = M^{\mathcal{A}_1}_\mathcal{Q} M^{\mathcal{A}_2}_\mathcal{Q} \\
    & \text{Superoperator of composed matrices = Composition of superoperators:} ~~ && \mathcal{A}^{MN}_\mathcal{Q} = \mathcal{A}^M_\mathcal{Q} \circ \mathcal{A}^N_\mathcal{Q} \\
    & \text{Adjoint of TM = TM of adjoint:} && \left(M^{\mathcal{A}}_\mathcal{Q}\right)^\dagger = M^{(\mathcal{A})^\dagger}_\mathcal{Q} \\
    & \text{Superoperator of adjoint matrix = adjoint of superoperator:} && \mathcal{A}^{M^\dagger}_\mathcal{Q} = \left(\mathcal{A}^M_\mathcal{Q} \right)
^\dagger
\end{align}

$\mathcal{A}$ is said to be \textit{self-adjoint} iff $\mathcal{A}^\dagger = \mathcal{A}$. Using the above properties, it can be shown that they map to Hermitian matrices via the vectorization map, i.e., $M_\mathcal{Q}^{\mathcal{A}} = \left( M_\mathcal{Q}^\mathcal{A} \right)^\dagger$ for self-adjoint $\mathcal{A}$, and conversely $\mathcal{A}^H_\mathcal{Q} = \left( \mathcal{A}^{H}_\mathcal{Q} \right)^\dagger$ for Hermitian matrix $H$. Further properties of self-adjoint superoperators are delegated to Appendix~\ref{app:superop}, where their mapping to physically measurable observables are exploited for the computation of operator properties on quantum computers, as discussed in Section~\ref{sec:superop}. They also encode two-point correlators when viewed as observables over the Choi state of the unitary carrying out dynamics; this is discussed in Appendix~\ref{app:2pc_choi_u}. \\

$\mathcal{A}$ is said to be \textit{unitary} iff $\mathcal{A}^\dagger \mathcal{A} = \mathcal{I}$, i.e., it preserves HS inner products between operators. Similarly, the above properties can be used to show that unitary superoperators map to unitary matrices, i.e., $M^{\mathcal{A}^\dagger}_\mathcal{Q} M^{\mathcal{A}}_\mathcal{Q} = \mathbb{I}$ for unitary $\mathcal{A}$, and conversely $\mathcal{A}^U_\mathcal{Q} \circ \mathcal{A}^{U^\dagger}_\mathcal{Q} = \mathcal{I}$ for unitary matrix $U$. In particular, consider the unitary:
\begin{equation} \label{eq:basis_change_unitary}
    R_{\mathcal{Q}, \mathcal{Q}'} \equiv \sum_k \kett{Q'_k}_{\mathcal{Q}' ~ \mathcal{Q}} \! \bbra{Q'_k} = \sum_k \kett{Q_k}_{\mathcal{Q} ~ \mathcal{Q}} \! \bbra{Q'_k},
\end{equation}
which carries out vectorization basis changes $R_{\mathcal{Q}, \mathcal{Q}'} \kett{O}_\mathcal{Q} = \kett{O}_{\mathcal{Q}'}$. Define the unitary superoperator:
\begin{equation}
    \mathcal{R}_{\mathcal{Q},\mathcal{Q}'}(\cdot) \equiv \frac{1}{N} \sum_k Q'_k \tr(Q_k^\dagger (\cdot))
\end{equation} 
affecting a change in operator basis from $\mathcal{Q}$ to $\mathcal{Q}'$, i.e., $\mathcal{R}_{\mathcal{Q},\mathcal{Q}'}(Q_i) = Q'_i ~\forall i$. Then:
\begin{lemma}
$R_{\mathcal{Q}', \mathcal{Q}}$ is the transfer matrix of $\mathcal{R}_{\mathcal{Q},\mathcal{Q}'}$ in the basis $\mathcal{Q}'$, i.e., $R_{\mathcal{Q}', \mathcal{Q}} = M^{\mathcal{R}_{\mathcal{Q},\mathcal{Q}'}}_{\mathcal{Q}'}$.
\end{lemma}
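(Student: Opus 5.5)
The plan is to show that the two linear maps on $\mathcal{H}'$ agree on a basis. By the definition of the transfer matrix, $M^{\mathcal{R}_{\mathcal{Q},\mathcal{Q}'}}_{\mathcal{Q}'}$ is the unique matrix with $M^{\mathcal{R}_{\mathcal{Q},\mathcal{Q}'}}_{\mathcal{Q}'}\kett{O}_{\mathcal{Q}'} = \kett{\mathcal{R}_{\mathcal{Q},\mathcal{Q}'}(O)}_{\mathcal{Q}'}$ for all $O$. So it suffices to check that $R_{\mathcal{Q}',\mathcal{Q}}$ obeys the same relation on a spanning family of operators. I would take the family $\{Q_i\}_{i=1}^{4^n}$, the basis that $\mathcal{R}_{\mathcal{Q},\mathcal{Q}'}$ acts on most simply. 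Since the $Q_i$ span $\mathcal{L}(\mathcal{H})$, the vectors $\kett{Q_i}_{\mathcal{Q}'}$ span $\mathcal{H}'$.

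First, I would confirm that $\mathcal{R}_{\mathcal{Q},\mathcal{Q}'}(Q_i) = Q'_i$. This follows from orthogonality: $\frac{1}{N}\sum_k Q'_k \tr(Q_k^\dagger Q_i) = \sum_k Q'_k\delta_{ki} = Q'_i$. It then follows that
\begin{equation*}
M^{\mathcal{R}_{\mathcal{Q},\mathcal{Q}'}}_{\mathcal{Q}'}\kett{Q_i}_{\mathcal{Q}'} = \kett{Q'_i}_{\mathcal{Q}'} \propto \ket{i}.
\end{equation*}

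Second, I would use Eq.~(\ref{eq:vect_basis_change}) in the form $R_{\mathcal{Q}',\mathcal{Q}}\kett{O}_{\mathcal{Q}'} = \kett{O}_{\mathcal{Q}}$ (equivalently, the explicit sum in Eq.~(\ref{eq:basis_change_unitary}) with the roles of $\mathcal{Q}$ and $\mathcal{Q}'$ swapped). Applied to $O = Q_i$, this gives
\begin{equation*}
R_{\mathcal{Q}',\mathcal{Q}}\kett{Q_i}_{\mathcal{Q}'} = \kett{Q_i}_{\mathcal{Q}} \propto \ket{i}.
\end{equation*}
Both maps therefore send $\kett{Q_i}_{\mathcal{Q}'}$ to the same computational basis state, and linearity extends the equality to all of $\mathcal{H}'$.

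The main obstacle is bookkeeping of phases and normalizations rather than anything conceptual. The statement ``$\kett{Q_k}_\mathcal{Q}\propto\ket{k}$ up to unit phase'' has to be matched exactly on both sides, and the normalized vectorization of Eq.~(\ref{eq:vect_map}) is not literally linear. To handle this, I would work with the unnormalized linear map $O\mapsto\sum_k c_k\ket{k}$. Both $\mathcal{R}_{\mathcal{Q},\mathcal{Q}'}$ and basis changes are HS-norm preserving, because each $Q'_i$ has the same norm $\tr(Q_i'^\dagger Q'_i)=N$, so normalization factors cancel.

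As a cross-check, I would also compare matrix elements directly. Using Eq.~(\ref{eq:tm_elements}),
\begin{equation*}
[M^{\mathcal{R}_{\mathcal{Q},\mathcal{Q}'}}_{\mathcal{Q}'}]_{ij} = \tfrac{1}{N}\tr\!\left(Q_i'^\dagger\mathcal{R}_{\mathcal{Q},\mathcal{Q}'}(Q'_j)\right) = \tfrac{1}{N}\tr(Q_i^\dagger Q'_j),
\end{equation*}
where the last step uses orthogonality of $\mathcal{Q}'$. From the sum form, $\bra{i}R_{\mathcal{Q}',\mathcal{Q}}\ket{j} = \leftindex_{\mathcal{Q}'}{\bbrakett{Q_i}{Q'_j}}_{\mathcal{Q}'}$, which equals $\tfrac{1}{N}\tr(Q_i^\dagger Q'_j)$ by the isometry Eq.~(\ref{eq:isometry}). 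This confirms the identity entrywise, including the phases.
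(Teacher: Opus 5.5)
Your proof is correct, but your main argument follows a slightly different line from the paper's. The paper argues purely entrywise. It evaluates $\bra{i}M^{\mathcal{R}_{\mathcal{Q},\mathcal{Q}'}}_{\mathcal{Q}'}\ket{j}$ through Eq.~(\ref{eq:tm_elements}), substitutes the definition of $\mathcal{R}_{\mathcal{Q},\mathcal{Q}'}$, and recognizes the result as the matrix element of the explicit sum form of $R_{\mathcal{Q}',\mathcal{Q}}$ in Eq.~(\ref{eq:basis_change_unitary}). This is exactly your closing cross-check, and your version keeps the normalization right, whereas the paper's intermediate expression drops a factor $1/N$. Your primary argument instead tests both linear maps on the spanning family $\{\kett{Q_i}_{\mathcal{Q}'}\}$, on which $\mathcal{R}_{\mathcal{Q},\mathcal{Q}'}$ acts trivially. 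It uses only the defining property $R_{\mathcal{Q}',\mathcal{Q}}\kett{O}_{\mathcal{Q}'}=\kett{O}_{\mathcal{Q}}$ from Eq.~(\ref{eq:vect_basis_change}), so it needs neither the explicit sum formula nor the isometry. What your route buys is transparency: both maps visibly send $\kett{Q_i}_{\mathcal{Q}'}$ to $\ket{i}$. The paper's computation is more mechanical but yields the explicit entries $\tfrac{1}{N}\tr(Q_i^\dagger Q'_j)$. The phase bookkeeping you flag is automatic: by Eq.~(\ref{eq:vect_map}) the amplitudes of $Q_i$ in $\mathcal{Q}$, and of $Q'_i$ in $\mathcal{Q}'$, are exactly $\delta_{ki}$, so both images equal $\ket{i}$ exactly, not just up to phase. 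On the other hand, the common-$N$ convention you invoke (and the paper uses silently) does real work. If $\mathcal{Q}$ and $\mathcal{Q}'$ had different constants $N\neq N'$, as for $\mathcal{C}$ and $\mathcal{P}$, then $\mathcal{R}_{\mathcal{Q},\mathcal{Q}'}$ would not be norm-preserving. The entrywise transfer matrix of Eq.~(\ref{eq:tm_elements}) would then come out as $\sqrt{N'/N}\,R_{\mathcal{Q}',\mathcal{Q}}$.
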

\begin{proof}
\begin{equation*}
    \bra{i} M^{\mathcal{R}_{\mathcal{Q},\mathcal{Q}'}}_{\mathcal{Q}'} \ket{j} = \frac{1}{N} \tr(Q_i'^\dagger \mathcal{R}_{\mathcal{Q},\mathcal{Q}'}(Q_j')) = \frac{1}{N} \sum_k \tr(Q_i'^\dagger Q_k') \tr(Q_k^\dagger Q'_j) = \bra{i} \left(\sum_k \kett{Q'_k}_{\mathcal{Q}' ~ \mathcal{Q}'} \! \bbra{Q_k} \right) \ket{j} = \bra{i} R_{\mathcal{Q}', \mathcal{Q}} \ket{j}
\end{equation*}
\end{proof}
As discussed in Section~\ref{sec:prelim}, the unitary $R_{\mathcal{Q}, \mathcal{Q}'}$ also affects a basis change for transfer matrices via conjugation (Eq.~(\ref{eq:TM_basis_change})), since:
\begin{equation*}
    \left[ M_\mathcal{Q}^\mathcal{A} \right]_{ij} = \frac{1}{N} \tr(Q_i \mathcal{A}(Q_j)) 
    = \leftindex_{\mathcal{Q}'}{\bbra{Q_i}} M^\mathcal{A}_{\mathcal{Q}'} \kett{Q_j}_{\mathcal{Q}'} 
    = \leftindex_{\mathcal{Q}}{\bbra{Q_i}} R_{\mathcal{Q}, \mathcal{Q}'}^\dagger M^\mathcal{A}_{\mathcal{Q}'} R_{\mathcal{Q}, \mathcal{Q}'} \kett{Q_j}_{\mathcal{Q}}
    = \left[ R_{\mathcal{Q}, \mathcal{Q}'}^\dagger M^\mathcal{A}_{\mathcal{Q}'} R_{\mathcal{Q}, \mathcal{Q}'} \right]_{ij}.
\end{equation*}
Straightforward algebra further yields the identities $R_{\mathcal{Q} , \mathcal{Q}'} = (R_{\mathcal{Q}' , \mathcal{Q}})^\dagger$ and $R_{\mathcal{Q}_1 , \mathcal{Q}_3} = R_{\mathcal{Q}_2 , \mathcal{Q}_3} R_{\mathcal{Q}_1 , \mathcal{Q}_2}$, where $\mathcal{Q}_1, \mathcal{Q}_2, \mathcal{Q}_3$ denote arbitrary orthonormal bases.

\subsubsection{Vectorized operators as pure states}
When the basis of vectorization $\mathcal{Q}$ is chosen appropriately, the quantum state $\kett{O(t)}_\mathcal{Q} \in \mathcal{H}'$ inherits the geometrical structure of the Heisenberg operator $O(t)$, which facilitates its preparation on quantum hardware, especially on connectivity-constrained devices.

We say that $\mathcal{Q}$ is a \textit{k-producible operator basis} (over a fixed partition $K$) if any $Q_i \in \mathcal{Q}$ can be written as a tensor product of operators that each, across the same partitions, act on at most $k$ qubits. That is, denoting $K = \{K_1,...,K_m\}$ a partition of $\{1,...,n\}$, $Q_i$ takes the form:
\begin{equation} \label{eq:k_loc_basis}
    Q_i = \bigotimes_{K_j \in K}^{\abs{K}} q_i^{K_j},
\end{equation}
where the number of partitions $\abs{K} \leq n$ and $\text{supp}(q_i^{K_j}) \leq k$, in which case we also write $\mathcal{Q} = \bigotimes_{K_j \in K} \mathcal{Q}^{K_j}$. For instance, the computational and Pauli operator bases $\mathcal{C} = \bigotimes_{j=1}^{n} \mathcal{C}^{j}$ and $\mathcal{P} = \bigotimes_{j=1}^{n} \mathcal{P}^{j}$ are both $1$-producible/fully local. 

Following the main text and as illustrated in Fig.~\ref{fig:doubled_space}, order $\mathcal{H}'$ spatially as $\mathcal{H}' =\mathcal{H}_L \otimes \mathcal{H}_R = (\bigotimes_{i=1}^n \mathcal{H}_L^i ) \otimes (\bigotimes_{i=1}^n \mathcal{H}_R^i )$. $\mathcal{H}_L$ (the left half of $\mathcal{H}'$) and $\mathcal{H}_R$ (the right half of $\mathcal{H}'$) are each of dimension $2^n$, and each $\mathcal{H}_L^i$ or $\mathcal{H}_R^i$ has local dimension 2. Choosing a mapping between local basis operators and local basis states allows the vectorized state to also retain this tensor product structure. For instance, for fully local ($k=1$) bases such as $\mathcal{C}$ and $\mathcal{P}$,
\begin{equation} \label{eq:fully_local_mapping}
    q^{(1)}_{i}\otimes ... \otimes q^{(n)}_{i} \longleftrightarrow \kett{q^{(1)}_{i} \otimes ... \otimes q^{(n)}_{i}}_\mathcal{Q} = \kett{q^{(1)}_{i}}_{\mathcal{Q}^{1}} \otimes ... \otimes \kett{q^{(n)}_{i}}_{\mathcal{Q}^{n}} = \ket{k_1} \otimes ... \otimes \ket{k_n},
\end{equation}
where each $\ket{k_i} \in \mathcal{H}^i_L \otimes \mathcal{H}^i_R$.

Consider the computational operator basis $\mathcal{C}= \{ \ketbra{i}{j}: i, j = 0,1 \}^{\otimes n}$. The choice of mapping:
\begin{equation} \label{eq:def_comp_vect}
    \ketbra{i^{1}}{j^{1}} \otimes ... \otimes \ketbra{i^{n}}{j^{n}} 
    \longleftrightarrow \left(\ket{i^{(1)}} \otimes \ket{j^{(1)}} \right) \otimes ... \otimes \left(\ket{i^{(n)}} \otimes \ket{j^{(n)}} \right)
\end{equation}
corresponds to `row-stacking' a $2^n \times 2^n$ dimensional matrix to form a $4^n$ dimensional vector. In this basis, the single-qubit identity and Pauli operators map to Bell states:
\begin{align}
    \kett{{\mathbb{I}}}_\mathcal{C} = \frac{1}{\sqrt{2}}(\ket{00} + \ket{11}), &\quad \kett{X}_\mathcal{C} = \frac{1}{\sqrt{2}}(\ket{01} + \ket{10}), \\
    \kett{Z}_\mathcal{C} = \frac{1}{\sqrt{2}}(\ket{00} - \ket{11}), &\quad \kett{Y}_\mathcal{C} = \frac{i}{\sqrt{2}}(-\ket{01} + \ket{10}),
\end{align}
so $n$-qubit Pauli operator map to tensor products of $n$ Bell states.

Next, consider the Pauli basis $\mathcal{P}$, which is also fully local, and therefore retains the tensor product structure of an operator. Single-qubit Pauli operators are mapped, up to a complex phase, to computational basis states, i.e.,
\begin{align}
    \kett{{\mathbb{I}}}_\mathcal{P} = \ket{00}, &\quad \kett{X}_\mathcal{P} = \ket{01}, \\
    \kett{Z}_\mathcal{P} = \ket{10}, &\quad \kett{Y}_\mathcal{P} = -i\ket{11}.
\end{align}
More compactly, via the symplectic representation of Pauli operators, we have:
\begin{equation} \label{eq:def_pauli_vect}
    \kett{Z^{a_z} X^{a_x}}_\mathcal{P} = \ket{a_z} \otimes \ket{a_x} \implies 
    \kett{Z^{a^1_z} X^{a^1_x} \otimes ... \otimes Z^{a^n_z} X^{a^n_x}}_\mathcal{P} = \left(\ket{a^1_z} \otimes \ket{a^1_x} \right) \otimes ... \otimes \left(\ket{a^n_z} \otimes \ket{a^n_x}\right),
\end{equation}
where the binary vector $k = (a^1_z, a^1_x, ..., a^n_z, a^n_x) \in \mathbb{Z}_2^{2n}$ indexes one of the $4^n$ Pauli operators.

Changes in the bases of vectorization are affected by the unitary $R_{\mathcal{Q} \rightarrow \mathcal{Q}'}$ of Eq.~(\ref{eq:basis_change_unitary}). The form of $R_{\mathcal{C} \rightarrow \mathcal{P}}$, determined via Eq.~(\ref{eq:basis_change_unitary}) and the fact that both bases are fully local (of the form Eq.~(\ref{eq:fully_local_mapping})), is given by Eq.~(\ref{eq:rotation_cp}), which coincides with the (state) basis change from the computational to the Bell basis. It can always be implemented as a transversal Clifford operation in depth-2. Note that the $-i$ phase in the definition of $\kett{Y}_\mathcal{P}$ is necessary for $R_{\mathcal{C} \rightarrow \mathcal{P}}$ to take this form. In Appendix~\ref{app:qudit}, we additionally show that the $\text{CNOT}$-$\text{H}$ form of $R_{\mathcal{C} \rightarrow \mathcal{P}}$ can be generalized to the case of $n$-qudit operators, with $\text{CNOT}$ and $\text{H}$ replaced by their natural qudit generalizations.

\section{Details on preparation of \texorpdfstring{$\kett{O(t)}$}{|O(t)>>}} \label{app:details_state_prep}
Here, we detail the preparation of the vectorized state $\kett{O(t)}$ discussed in Section~\ref{sec:preparation}, firstly on gate-based digital quantum computers, and secondly on analog quantum simulators based on tunable Hamiltonian parameters. Appendices~\ref{app:channel_gen} and \ref{app:qudit} discuss generalizations to time-evolution under general unital channels and $n$ qudits respectively.

\subsection{Preparation on digital quantum computers}
Consider the setting where an initial Heisenberg operator $O$ evolves to $O(t) = U^\dagger O U$ under the action of a general unitary $U$ that may correspond to a sequence $U = \prod_{l} U_l$ of implementable gates $U_l$, or the propagator $U=e^{-iHt}$ of a Hamiltonian $H$. 

We will discuss conditions for the efficient preparation of $\kett{O(t)}_\mathcal{Q}$ for more general choices of initial operators and operator bases, beyond simple initial operators (such as Pauli operators) and operator bases such as the Pauli basis $\mathcal{P}$ where the basis change unitary to/from $\mathcal{C}$ can be implemented exactly. We will show that it can be achieved under the conditions considered in the main text, i.e.:
\begin{proposition} \label{prop:vect_prep}
$\kett{O(t)}_\mathcal{Q}$ can be prepared to $\epsilon$ error in trace distance with $L_U + \bigo{\textup{poly(n)} \textup{polylog}(n/\epsilon)}$ gates if:
    \begin{enumerate}
    \item $O(t=0)$ can be expanded as a sum of $\bigo{\textup{poly}(n)}$-many terms in a $k$-producible, orthogonal operator basis $\mathcal{Q}_i$ with $k=\bigo{\log(n)}$, or is a unitary that can be implemented using $\bigo{\textup{poly}(n)}$ gates,
    \item $U^\dagger$, and $U^\mathsf{T}$ or $U$ can each be implemented to error $\epsilon/6$ with $L_U$ gates,
    \item and the final basis $\mathcal{Q}$ has a tensor product structure over partitions of at most size $\bigo{\log(n)}$. 
\end{enumerate}
\end{proposition}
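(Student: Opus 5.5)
The plan is to follow the decomposition of Eq.~(\ref{eq:state_prep}), $\kett{O(t)}_\mathcal{Q} = R_{\mathcal{C}, \mathcal{Q}} M^\mathcal{U}_{\mathcal{C}} R_{\mathcal{Q}_i, \mathcal{C}} \kett{O}_{\mathcal{Q}_i}$. We bound the gate count and error of each of the four factors separately and then combine them. Each unitary is implemented to accuracy $\epsilon/6$ in operator norm, so the output pure state is $\epsilon/2$-close in Euclidean norm and hence $\epsilon$-close in trace distance. There are six error contributions: state preparation, two basis changes, $U^\dagger$, $U^\mathsf{T}$ (or $U$), and slack. A telescoping argument, $\|V_1\cdots V_m - \tilde V_1\cdots\tilde V_m\| \le \sum_j \|V_j-\tilde V_j\|$, gives the additive error budget.

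\textbf{Initial operator and basis changes.} For the first alternative of condition 1, write $O=\sum_{k=1}^s c_k Q_k$ with $s=\textup{poly}(n)$. Then $\kett{O}_{\mathcal{Q}_i}$ is an $s$-sparse state, and it can be prepared with $\bigo{ns}$ gates using the cited sparse state preparation algorithms. Next we apply $R_{\mathcal{Q}_i,\mathcal{C}}$. Since $\mathcal{Q}_i$ and $\mathcal{C}$ are both $k$-producible over the same partition (refine $K$ if necessary; $\mathcal{C}$ is 1-producible), Eq.~(\ref{eq:basis_change_unitary}) shows that $R_{\mathcal{Q}_i,\mathcal{C}}=\bigotimes_{K_j} R^{K_j}$. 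Each factor is a unitary on $2|K_j|=\bigo{\log n}$ qubits, i.e.\ on a $\textup{poly}(n)$-dimensional space. By Solovay--Kitaev combined with a generic synthesis of $\bigo{4^{2k}}=\textup{poly}(n)$ two-qubit gates, each factor is implemented to error $\epsilon/(6n)$ with $\textup{poly}(n)\,\textup{polylog}(n/\epsilon)$ gates. Summing over at most $n$ blocks gives error $\epsilon/6$, and the same argument applies to $R_{\mathcal{C},\mathcal{Q}}$ by condition 3.

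For the second alternative, where $O$ is a $\textup{poly}(n)$-gate unitary, we skip the basis change. We prepare $\kett{\mathbb{I}}_\mathcal{C}$ with $n$ Bell pairs and apply $O\otimes\mathbb{I}$, using the ricochet identity Eq.~(\ref{eq:ricochet}) to obtain $\kett{O}_\mathcal{C}$ directly.

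\textbf{Time evolution.} By Eq.~(\ref{eq:tm_unitary}), $M^\mathcal{U}_\mathcal{C}=U^\dagger\otimes U^\mathsf{T}$ acts in parallel on $\mathcal{H}_L,\mathcal{H}_R$. If $U^\mathsf{T}$ is available, condition 2 gives $2L_U$ gates in depth $L_U$, with error $\epsilon/3$. The constant factor is absorbed, or one reads $L_U$ as the cost per half. If only $U$ is available, the main obstacle is avoiding $U^\mathsf{T}$. Here I would use the ricochet identity in reverse: $(\mathbb{I}\otimes U^\mathsf{T})\kett{X}_\mathcal{C}=(U\otimes \mathbb{I})\kett{X}_\mathcal{C}$ up to ordering. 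Concretely, we write $\kett{U^\dagger O U}_\mathcal{C}=(U^\dagger O U\otimes \mathbb{I})\kett{\mathbb{I}}_\mathcal{C}$ when $O$ is unitary, applying $U$, $O$ and $U^\dagger$ sequentially on $\mathcal{H}_L$. For a sparse non-unitary $O$, I would apply this to each term $Q_k$ via an LCU/block-encoding of $O$, or simply use the Hermitian-sum structure together with the unitality trick of Fig.~\ref{fig:2pc_cirq}. Controlling the postselection or success probability in this LCU step is the part I expect to require the most care. I would try to bound it by $\|c\|_1/\|c\|_2\le\sqrt{s}$, which keeps the overhead at $\textup{poly}(n)$ with amplitude amplification.

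Finally, we add up the gate counts, $L_U+\bigo{\textup{poly}(n)\,\textup{polylog}(n/\epsilon)}$, and the six errors of at most $\epsilon/6$ each, which concludes the proof.
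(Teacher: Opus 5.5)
Your main line is the paper's proof. You use the same factorization Eq.~(\ref{eq:state_prep}), exact $s$-sparse state preparation (or Bell pairs plus the ricochet identity for unitary $O$), block-local basis changes synthesized to error $\bigo{\epsilon/n}$ per block as in Lemma~\ref{thm:basis_change_approx}, and parallel $U^\dagger\otimes U^\mathsf{T}$. There is one bookkeeping slip: four operator-norm errors of $\epsilon/6$ give Euclidean error $2\epsilon/3$, not $\epsilon/2$. This is harmless only because, as the paper shows, $D_{\mathrm{tr}}\le\norm{\cdot}_2$ for pure states, so you should not pay a factor of two when converting.

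The step that fails is where you go beyond the paper: sparse \emph{non-unitary} $O$ when only $U$ (not $U^\mathsf{T}$) is available.
\begin{itemize}
\item In that case $O$ has to be applied as an operation between $U$ and $U^\dagger$ on $\mathcal{H}_L$. Moving a pre-prepared $\kett{O}_\mathcal{C}$ past $U$ is exactly what costs $U^\mathsf{T}$.
\item For any block encoding of $O/\alpha$ with $\alpha\ge\norm{O}_\infty$, the postselection therefore succeeds with probability $\tr(O^\dagger O)/(2^n\alpha^2)$.
\item Your bound $\norm{c}_1/\norm{c}_2\le\sqrt{s}$ makes this $\ge 1/s$ only when the basis elements are unitaries with $\tr(Q_k^\dagger Q_k)=2^n$, i.e.\ Pauli-like. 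Condition 1 allows any $k$-producible orthogonal basis. For example, $O=\ketbra{0}{0}^{\otimes n}$ is a single element of $\mathcal{C}$, yet it gives success probability at most $2^{-n}$ however the block encoding is built.
\item Even in the Pauli case, amplitude amplification re-runs $U$ and $U^\dagger$ $\bigo{\sqrt{s}}$ times. That gives $\bigo{\sqrt{s}\,L_U}$ gates rather than $L_U+\bigo{\textup{poly}(n)\textup{polylog}(n/\epsilon)}$.
\item The unitality circuit of Fig.~\ref{fig:2pc_cirq} does not rescue this, since it needs a controlled-$O$ and only outputs overlaps.
\end{itemize}
The paper's proof treats the $U^\mathsf{T}$ branch. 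Appendix~\ref{app:avoiding_transpose} uses the $U$-only route only for unitary $O$, and mentions LCU without analysis. You should restrict in the same way: pair sparse $O$ with $U^\mathsf{T}$, and allow unitary $O$ with either. Alternatively, state the Pauli-sparse, $U$-only case separately as a heralded preparation with success probability $\norm{c}_2^2/\norm{c}_1^2\ge 1/s$.
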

\begin{proof}
Following the discussion of Section~\ref{sec:preparation}, to prove this statement, begin by writing $\kett{O(t)}_\mathcal{Q} = R_{\mathcal{C}, \mathcal{Q}} M^U_{\mathcal{C}} R_{\mathcal{Q}_i, \mathcal{C}} \kett{O}_{\mathcal{Q}_i}$ so that state preparation is composed of the following general steps:
\begin{enumerate}
    \item Preparation of initial encoded operator $\kett{O}_{\mathcal{Q}_i}$ in the basis $\mathcal{Q}_i$.
    \item Vectorization basis change between $\mathcal{Q}_i$ and $\mathcal{C}$ ($\kett{O}_{\mathcal{C}} = R_{\mathcal{Q}_i,\mathcal{C}} \kett{O}_{\mathcal{Q}_i}$), and between $\mathcal{C}$ and $\mathcal{Q}$ ($\kett{O}_{\mathcal{Q}} = R_{\mathcal{C}, \mathcal{Q}}, \kett{O(t)}_{\mathcal{C}}$).
    \item Time evolution $\kett{U^\dagger O U}_{\mathcal{C}} = M^U_{\mathcal{C}} \kett{O}_{\mathcal{C}} = U^\dagger \otimes U^\mathsf{T} \kett{O}_{\mathcal{C}}$ in $\mathcal{C}$.
\end{enumerate}

Denote the approximately prepared initial and final states as $\widetilde{\kett{O}}_{\mathcal{Q}_i}$ and $\widetilde{\kett{O(t)}}_{\mathcal{Q}}$, and the approximately implemented basis changes and time evolution as $\widetilde{R}_{\mathcal{Q}_i,\mathcal{C}}$, $\widetilde{R}_{\mathcal{C},\mathcal{Q}}$, and $\widetilde{M}^U_{\mathcal{C}}$ respectively. Using $\norm{\prod_l \tilde{U}_l - \prod_l {U}_l}_\infty \leq \sum_l \norm{\tilde{U}_l - U_l}_\infty$, i.e., that their errors accumulate linearly, the total state preparation error $\epsilon$ is:
\begin{align*}
    \epsilon = \norm{\widetilde{\kett{O(t)}}_\mathcal{Q} - \kett{O(t)}_\mathcal{Q} }_2 &= \norm{ \widetilde{R}_{\mathcal{C},\mathcal{Q}} \widetilde{M}^U_{\mathcal{C}} \widetilde{R}_{\mathcal{Q}_i,\mathcal{C}} \widetilde{\kett{O}}_{\mathcal{Q}_i} - R_{\mathcal{C}, \mathcal{Q}} M^U_{\mathcal{C}} R_{\mathcal{Q}_i, \mathcal{C}} \kett{O}_{\mathcal{Q}_i} }_2 \\
    & \leq \epsilon_1 + \epsilon_2 + \epsilon_3 + \epsilon_4,
\end{align*}
with the error due to each step defined as:
\begin{alignat*}{2}
\epsilon_1   & \equiv \norm{ \widetilde{\kett{O}}_{\mathcal{Q}_i} - \kett{O}_{\mathcal{Q}_i} }_2, && ~~~~~ \epsilon_2 \equiv \norm{ \widetilde{R}_{\mathcal{Q}_i,\mathcal{C}} - R_{\mathcal{Q}_i, \mathcal{C}} }_\infty,\\
\epsilon_3   & \equiv \norm{ \widetilde{R}_{\mathcal{C},\mathcal{Q}} - R_{\mathcal{C}, \mathcal{Q}} }_\infty, && ~~~~~ \epsilon_4 \equiv \norm{ \widetilde{M}^U_{\mathcal{C}} -  M^U_{\mathcal{C}} }_\infty,
\end{alignat*}
where $\norm{\cdot}_2$ denotes the Euclidean distance and $\norm{\cdot}_\infty$ the spectral/operator norm for operators. We will proceed to describe each step in detail in Subsections \ref{app:initial_sparse_prep}, \ref{app:vect_basis_change_err}, and \ref{app:time_evolution} below; individually bounding their errors (with the choice $\epsilon_1 = 0$, $\epsilon_2 = \epsilon_3 = \epsilon_4 = \epsilon/3$) proves Proposition~\ref{prop:vect_prep}. 

Finally, we relate the Euclidean distance $\norm{\widetilde{\kett{O(t)}}_\mathcal{Q} - \kett{O(t)}_\mathcal{Q} }_2$ to the trace distance:
\begin{align}
    D_{\mathrm{tr}}\left(\widetilde{\kett{O(t)}}_\mathcal{Q} , \kett{O(t)}_\mathcal{Q} \right)&\equiv\sqrt{1 - \abs{_\mathcal{Q}\widetilde{\llangle{O(t)}}\kett{O(t)}_\mathcal{Q} }^2}
    \\&\le \sqrt{2-2\abs{_\mathcal{Q}\widetilde{\llangle{O(t)}}\kett{O(t)}_\mathcal{Q} }} \label{eq:fid_to_l2_step1}\\
    &= \min_{\alpha\in\mathbb R}\ \Big\|\widetilde{\kett{O(t)}}_\mathcal{Q}-e^{i\alpha}\kett{O(t)}_\mathcal{Q}\Big\|_2 \label{eq:fid_to_l2_step2}\\
    &\le \Big\|\widetilde{\kett{O(t)}}_\mathcal{Q}-\kett{O(t)}_\mathcal{Q}\Big\|_2, \label{eq:fid_to_l2_step3}
\end{align}
where in \eqref{eq:fid_to_l2_step1} we used the elementary inequality $1-s^2\le 2-2s$ valid for all $s\in[0,1]$, in \eqref{eq:fid_to_l2_step2} we used the identity
$\min_{\alpha\in\mathbb R}\|\ket\psi-e^{i\alpha}\ket\phi\|_2^2=2-2|\langle\psi|\phi\rangle|$ for normalized vectors, and in \eqref{eq:fid_to_l2_step3} we used that the minimum over $\alpha$ is upper bounded by the value at $\alpha=0$.
\end{proof}
At the end of this subsection, we discuss how queries to $U^\mathsf{T}$ can be replaced by queries to $U$. 

\subsubsection{Preparation of initial state} \label{app:initial_sparse_prep}
We wish to prepare the state $\kett{O}_Q$, and we are either given (1) a classical description of the initial operator $O = \sum_k c_k Q_k$ in terms of the coefficients $c_k$ of its decomposition over basis operators $Q_k \in \mathcal{Q}$, in which only $s$ terms are non-zero, or (2) the ability to implement $O$ efficiently.

By definition of the vectorization map in the basis $\mathcal{Q}$, Eq.~(\ref{eq:vect_map}), $\kett{O}_Q \propto \sum_k c_k \ket{k}$, a linear combination of $s$ computational basis states. We can therefore directly make use of existing exact state preparation algorithms that prepare arbitrary $s$-sparse quantum states, which generically require resources that scale at most as $\bigo{ns}$. For this purpose, many constructions exist, which exhibit tradeoffs across circuit gate count, depth, ancilla qubits, and T-gate count \cite{gleinig2021efficient,zhang2022quantum,li2024nearly,vilmart2025resource}, possibly achieving $\bigo{\log(ns)}$ scaling in one or more of the metrics, and at most $\bigo{ns}$ in the rest. For instance, \cite{gleinig2021efficient} describes exact state preparation algorithms taking $\bigo{ns}$ gates, without the need for ancillas, which is sufficient for our situation. By assumption (1), i.e. that $s=\bigo{\text{poly}(n)}$, $\kett{O}_{\mathcal{Q}_i}$ can be prepared exactly using $\bigo{\text{poly}(n)}$ gates.

Alternatively, if $O$ is a unitary operator that can be efficiently implemented (such as a Pauli operator, or a linear combination of anticommuting Pauli operators \cite{zhao2020measurement}), we can make use of Eq.~(\ref{eq:ricochet}) directly to prepare $\kett{O}_C$ by first preparing $n$ Bell pairs $\kett{\mathbb{I}^{\otimes n}}$ (which can be achieved in depth-2 regardless of $n$), and applying the unitary $O \otimes \mathbb{I}^{\otimes n}$ or $\mathbb{I}^{\otimes n} \otimes O^\mathsf{T}$ to it. This can be straightforwardly extended, e.g. to sparse linear combinations of implementable unitaries using the linear combination of unitaries technique \cite{childs2012hamiltonian}.

In physical contexts, $O$ is usually a Hermitian operator that represents a physical observable, such as the local spin in the $z$ direction ($Z_i = \mathbb{I} \otimes ... \mathbb{I} \otimes Z \otimes \mathbb{I} ... \otimes \mathbb{I}$ corresponding to a one-hot computational basis state), or the total magnetization ($\sum_{i=1}^n Z_i$, corresponding to the permutationally symmetric $n$-qubit W state: $\ket{W_n} \equiv (\ket{10...0} + \ket{010...0} + ... + \ket{0...01})/\sqrt{n}$), and so can be expressed as a sum of $\text{poly}(n)$-many Pauli operators.

\subsubsection{Vectorization basis change} \label{app:vect_basis_change_err}
Vectorization basis changes $\kett{O}_{\mathcal{Q}} \rightarrow \kett{O}_{\mathcal{Q}'}$ are necessary when the bases for initial state preparation, time evolution, and final measurement differ from one another, or if there are constraints in implementation that render working in certain bases necessary or more convenient (for instance, hardware connectivity constraints that render time-evolving in local bases more convenient, or the inability to perform discrete basis changes that necessitate time-evolving in entangled bases). 

Given an encoded state $\kett{O}_\mathcal{Q}$, a change in the basis of vectorization to $\mathcal{Q}'$ can always be affected by the unitary $R_{\mathcal{Q},\mathcal{Q}'}$ of Eq.~(\ref{eq:basis_change_unitary}). The depth of $R_{\mathcal{Q},\mathcal{Q}'}$ depends on the structures of $\mathcal{Q}$ and $\mathcal{Q}'$, possibly scaling exponentially as $n$ in the worst case if it involves a global change in entanglement. However, when $\mathcal{Q}$ and $\mathcal{Q}'$ are related by a basis change that is only locally entangling, $R_{\mathcal{Q},\mathcal{Q}'}$ admits efficient implementation.

The simplest instance where this occurs is when $\mathcal{Q}$ and $\mathcal{Q}'$ both have tensor product structures over partitions of small sizes, i.e. are $k$-producible with small $k$. An important example is the basis change between the computational and Pauli bases, where the basis operators in both bases are fully separable ($k=1$), and the basis change unitary $R_{\mathcal{C},\mathcal{P}}$ is precisely the Bell basis transformation Eq.~(\ref{eq:rotation_cp}). Notably, $R_{\mathcal{C},\mathcal{P}}$ can be implemented (exactly) as a local, transversal circuit in depth 2, and is a Clifford circuit, rendering its implementation particularly easy. The CNOT-H structure of Eq.~(\ref{eq:rotation_cp}) also generalizes to the case of qudits -- c.f. Appendix~\ref{app:qudit}.

More generally, Let $\mathcal{Q}$ be a $k$-producible operator basis of the form Eq.~(\ref{eq:k_loc_basis}). Subsequently, writing $R_{\mathcal{Q},\mathcal{Q}'} = R_{\mathcal{C},\mathcal{Q}'} R_{\mathcal{Q},\mathcal{C}} $ ($\mathcal{C}$ can be replaced by any other $1$-producible basis), $R_{\mathcal{Q},\mathcal{C}}$ (and $R_{\mathcal{C},\mathcal{Q}'}$) takes the form:
\begin{equation} \label{eq:local_rotation_tensor}
    R_{\mathcal{Q},\mathcal{C}} = \bigotimes_{j \in K}^{\abs{K}} R_{\mathcal{Q}^{j}, \bigotimes_{j' \in P_j} \mathcal{C}^{j'}}
\end{equation}
where each unitary $R_{\mathcal{Q}^{j}, \bigotimes_{j' \in P_j} \mathcal{C}^{j'}}$ acts locally on the $\abs{K_j} \leq 2k$ qubits in $(\bigotimes_{i \in P_j} \mathcal{H}_L^i ) \otimes (\bigotimes_{i \in P_j} \mathcal{H}_R^i )$. 

As a result, $R_{\mathcal{Q},\mathcal{Q}'}$ can always be efficiently approximated on a quantum computer using a discrete set of universal gates as long as the partitions are small, i.e. of size $k = \bigo{\log(n)}$:
\begin{lemma} \label{thm:basis_change_approx}
Let $\mathcal{Q}$ and $\mathcal{Q}'$ be operator bases that are both $k$-producible, with $k=\bigo{\log(n)}$. Then, there exists a unitary $\widetilde{R}_{\mathcal{Q},\mathcal{Q}'}$ of depth $L=\bigo{\textup{poly}(n)\textup{polylog}(n/\epsilon)}$ consisting of 1- and 2-qubit/local gates such that:
\begin{equation}
   \norm{\tilde{R}_{\mathcal{Q},\mathcal{Q}'} - R_{\mathcal{Q},\mathcal{Q}'}}_\infty \leq \epsilon,
\end{equation}
where $\norm{\cdot}_\infty$ denotes the operator/spectral norm.
\end{lemma}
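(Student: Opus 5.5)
We factor the basis change through the computational basis, $R_{\mathcal{Q},\mathcal{Q}'} = R_{\mathcal{C},\mathcal{Q}'} R_{\mathcal{Q},\mathcal{C}}$, using the composition identity Eq.~(\ref{eq:vect_basis_composition}). We approximate each factor separately and combine the errors with the linear accumulation bound $\norm{\prod_l \tilde U_l - \prod_l U_l}_\infty \le \sum_l \norm{\tilde U_l - U_l}_\infty$. Splitting the budget as $\epsilon/2$ per factor, it suffices to show that each of $R_{\mathcal{Q},\mathcal{C}}$ and $R_{\mathcal{C},\mathcal{Q}'} = R_{\mathcal{Q}',\mathcal{C}}^\dagger$ can be approximated to error $\epsilon/2$ in the stated depth. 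The adjoint is handled by reversing the circuit and inverting its gates, which preserves both depth and error.

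By Eq.~(\ref{eq:local_rotation_tensor}), $k$-producibility of $\mathcal{Q}$ over a partition $K=\{K_1,\dots,K_m\}$ with $m\le n$ makes $R_{\mathcal{Q},\mathcal{C}}$ a tensor product of $m$ blocks. Each block acts on at most $2k$ qubits, namely the pairs $(\mathcal{H}_L^i,\mathcal{H}_R^i)$ for $i\in K_j$. One point needs checking here: if the local basis operators $q^{K_j}$ are normalized to be orthogonal within each block, then the global $R_{\mathcal{Q},\mathcal{C}}$ really factorizes blockwise. This follows from Eq.~(\ref{eq:basis_change_unitary}) and the factorization of $\kett{\cdot}$ on product operators, Eq.~(\ref{eq:fully_local_mapping}), up to an ordering/relabeling of qubits that is a fixed permutation with no effect on operator norms.

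Each block is an arbitrary unitary on $2k = \bigo{\log n}$ qubits, that is, of dimension $d = 4^k = \textup{poly}(n)$. Standard exact synthesis (for example, a QR/Givens or quantum Shannon decomposition) writes it as $\bigo{d^2}=\textup{poly}(n)$ one- and two-qubit gates. Applying Solovay--Kitaev to each of these gates to precision $\epsilon' = \epsilon/(2 m\, \textup{poly}(n))$ costs $\textup{polylog}(n/\epsilon)$ gates per gate. Summing errors over all gates and blocks gives total error at most $\epsilon/2$ per factor. Since the blocks act on disjoint qubits they run in parallel, so the depth is $\textup{poly}(n)\textup{polylog}(n/\epsilon)$; even serial execution over all $m\le n$ blocks stays within this bound.

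The main obstacle is the bookkeeping in the factorization step, in particular verifying that the blocks really act only on the qubits in $\bigotimes_{i\in K_j}(\mathcal{H}_L^i\otimes\mathcal{H}_R^i)$ under the chosen ordering of $\mathcal{H}'$. The bases $\mathcal{Q}$ and $\mathcal{Q}'$ may also use different partitions, which is why we route through $\mathcal{C}$, a basis that is $1$-producible and so compatible with any partition. If the partition's sites are non-adjacent in the physical ordering, the two-qubit gates become nonlocal, and "local gates" should then be read as gates on arbitrary qubit pairs. Converting to nearest-neighbour form would add at most a $\textup{poly}(n)$ SWAP overhead, which does not change the stated scaling.
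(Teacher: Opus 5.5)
Your proposal is correct and follows essentially the same route as the paper: factor $R_{\mathcal{Q},\mathcal{Q}'} = R_{\mathcal{C},\mathcal{Q}'} R_{\mathcal{Q},\mathcal{C}}$, then use the blockwise tensor-product structure of Eq.~(\ref{eq:local_rotation_tensor}) so that each factor is a product of at most $n$ unitaries on at most $2k$ qubits, synthesize each block with $\bigo{k^2 16^k}$ elementary gates refined by Solovay--Kitaev, and sum errors linearly (the paper allots $\epsilon/2n$ per block where you allot $\epsilon/2$ per factor). Your extra caveats, namely that $k$-producibility must be read as $\mathcal{Q} = \bigotimes_j \mathcal{Q}^{K_j}$ for the factorization to hold and the SWAP overhead for non-adjacent sites, are assumptions the paper makes implicitly, so they sharpen the argument rather than change it.
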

% $L=\bigo{\text{poly}(n)\text{polylog}(\text{poly}(n)/\epsilon)}$ $\bigo{k^24^k \log(k^24^k/(\epsilon/2n))}$
\begin{proof}
Let $P$ and $P'$ be the partitions of $\mathcal{Q}$ and $\mathcal{Q}'$ respectively. Writing $R_{\mathcal{Q},\mathcal{Q}'} = R_{\mathcal{C},\mathcal{Q}'} R_{\mathcal{Q},\mathcal{C}}$, each unitary has the tensor product structure of Eq.~(\ref{eq:local_rotation_tensor}). Since errors accumulate linearly over tensor and matrix products, it suffices to implement each local unitary $R_{\mathcal{Q}^{j}, \bigotimes_{j' \in P_j} \mathcal{C}^{j'}}$ (and $R_{\bigotimes_{j' \in P_j} \mathcal{C}^{j'}, \mathcal{Q}'^{j}}$) in parallel to error at most $\epsilon/2n$.

Standard circuit approximation constructions allow general $k$-qubit unitaries to be implemented with $\bigo{k^2 4^k \text{polylog}(k^2 4^k / \epsilon)}$ gates \cite{nielsen2010quantum}. Given that each local unitary acts on at most on $2k$ qubits, the total approximation error can be bounded to at most $\epsilon$ using $L = \bigo{k^2 16^k \text{polylog}(k^2 16^k n/ \epsilon)}$ gates from a universal gateset. In particular, when $k=\bigo{\textup{log}(n)}$, $L=\bigo{\textup{poly}(n)\textup{polylog}(n/\epsilon)}$, and when $k=\bigo{1}$, $L=\bigo{\text{polylog}(n/\epsilon)}$.
\end{proof}

As a result of Lemma~\ref{thm:basis_change_approx}, conditions (1) and (3) of Proposition~\ref{prop:vect_prep} respectively ensure that $\epsilon_2$ (due to the initial basis change $R_{\mathcal{Q}_i, \mathcal{C}}$) and $\epsilon_3$ (due to the final basis change $R_{\mathcal{C}, \mathcal{Q}}$) can be bounded by $\epsilon/3$ using $\bigo{\textup{poly}(n) \textup{polylog}(n/\epsilon)}$ gates.

\subsubsection{Time evolution} \label{app:time_evolution}
Following the discussion of Section~\ref{sec:preparation}, time evolution in $\mathcal{C}$ is affected by the unitary $M^\mathcal{U}_\mathcal{C} = U^\dagger \otimes U^\mathsf{T}$. On a digital quantum computer, it can be implemented by simply reversing the order of the gates in the decomposition $U= \prod_{l} U_l$, and applying $U_l^\dagger \otimes U_l^\mathsf{T}$ sequentially. Implementing $U^\dagger$ and $U^\mathsf{T}$ to error $\epsilon/6$ each then bounds the time evolution error to $\epsilon_4 = \epsilon/3$, completing the proof of Proposition~\ref{prop:vect_prep}. 

$M^\mathcal{U}_\mathcal{C}$ inherits the circuit depth and geometry of $U$/$U^\dagger$, which are the main implementation considerations on near-term, non-error-corrected quantum computers. Furthermore, it inherits their Cliffordness; this is crucial for their implementation on fault-tolerant error-corrected quantum computers, where magic resources (injected via protocols such as magic state distillation \cite{bravyi2005universal}) are often regarded as the leading resource overhead \cite{litinski2019game}. Together with the fact that state preparation and vectorization basis changes can also be fully Clifford operations when switching between $\mathcal{C}$ and $\mathcal{P}$ via Eq.~(\ref{eq:rotation_cp}), we find that the overall preparation of $\kett{O(t)}_\mathcal{P}$ is also magic-resource-preserving. In Appendix~\ref{app:channel_gen}, we describe how our framework can be extended to simulate Heisenberg time evolution under general quantum channels $O \rightarrow \mathcal{E}^\dagger(O)$, using a probabilistic construction based on block encodings.

Next, we describe several features of time evolution in the Heisenberg picture -- distinct and absent from the Schrödinger picture -- that lead to simplifications in implementation. Note that they also apply to the $n$-qubit protocol, detailed in Appendix~\ref{app:n_qubit_details}. 
\begin{itemize}
    \item \textit{Fast-forwarding initial Clifford unitaries:} Time evolution $\kett{O}_\mathcal{P} \rightarrow \kett{U^\dagger P U}_\mathcal{P}$ is generally implemented as $\kett{O(t)}_\mathcal{P} = R_{\mathcal{C} \rightarrow \mathcal{P}} (U^\dagger \otimes U^\mathsf{T}) R_{\mathcal{P} \rightarrow \mathcal{C}} \kett{O}_\mathcal{P}$. In the case where the initial operator $O$ is a Pauli operator and $U$ is a Clifford unitary decomposable into $l=\bigo{\text{poly}(n)}$ single and two-qubit Clifford gates, this transformation can be efficiently `fast-forwarded' and implemented with just two layers of single-qubit gates. 
    
    Observing that Clifford unitaries simply permute a Pauli operator $P$ to another Pauli operator $Q$ (up to a phase), it suffices to track the permutations sequentially in classical memory, and finally apply single-qubit $X$ and phase gates that permutes between the computational basis states $\kett{P}_\mathcal{P}$ and $\kett{Q}_\mathcal{P}$. The global phase can alternatively be stored in classical memory, avoiding the need to implement the phase gates.
    
    This is distinct from the Schrödinger picture; while initial Clifford gates transform stabilizer states to other stabilizer states, it may be difficult to prepare arbitrary highly entangled stabilizer states on quantum computers. This feature can be taken advantage of by compilation strategies that commute Clifford gates to the beginning of the circuit, at the expense of increasing the connectivity of the remaining non-Clifford gates \cite{litinski2019game}.

    \item \textit{Preservation of lightcone structure:} Vectorization fully preserves the lightcone structure of Heisenberg dynamics. Dynamics in the Heisenberg picture is unital (i.e. identity preserving), which manifests in via vectorization as the invariance of the Bell state $\kett{\mathbb{I}}_\mathcal{C} = \frac{1}{\sqrt{2}} (\ket{00} + \ket{11})$ to unitary transformations of the form $U_1 \otimes U_2$, where $U_1U_2^\mathsf{T} = \mathbb{I}$, which is satisfied by $U^\dagger \otimes U^\mathsf{T}$. Due to this property, unitaries outside the lightcone of an initial local operator act trivially on the vectorized state, allowing them to be removed from the quantum circuit. In other words, the circuit $U_L ... U_0 \kett{O}$ can be simplified by commuting $U_i$'s located outside the lightcone of $O$ to the beginning of the circuit and removed via $U_i \kett{\mathbb{I}} = \kett{\mathbb{I}}$. 
    
    In practice, the non-participating qubits can be left idling or initialized only when they intersect the lightcone, both of which reduce gate errors. This feature can also be applied to the $n$-qubit approach, which was also implemented in e.g. \cite{mi2021information,google2025observation}, where idling qubits were subjected to dynamical decoupling gates to reduce idling error.

\end{itemize}

\subsubsection{Avoiding queries to \texorpdfstring{$U^\mathsf{T}$}{U T}} \label{app:avoiding_transpose}

We briefly discuss how queries to the transposed unitary $U^\mathsf{T}$ can be replaced with $U$, at the price of having to implement $U$, $O$, and $U^\dagger$ in series, effectively doubling the depth of the time evolution step in Proposition~\ref{prop:vect_prep}.

This is a direct consequence of the ricochet identity Eq.~(\ref{eq:ricochet}), which implies that $\kett{O(t)}_\mathcal{Q} =R_{\mathcal{C},\mathcal{Q}} \left( U^\dagger O U \otimes \mathbb{I} \right) \kett{\mathbb{I}}_\mathcal{C}$. That is, it suffices to apply the unitary $U^\dagger O U$ to $\mathcal{H}_L$, followed by $R_{\mathcal{C},\mathcal{Q}}$ to $n$ Bell pairs $\kett{\mathbb{I}}_\mathcal{C}$. Note that this requires $O$ to be unitary (e.g. Pauli operators, or linear combinations of anticommuting Pauli operators). The more general case can be achieved using e.g. the linear combination of unitaries technique \cite{childs2012hamiltonian}.

\subsection{Considerations for analog quantum simulators}
The above section assumes the ability to implement discrete gates from a universal gateset (as is the natural case for digital quantum computers), in which case $\kett{O}_\mathcal{Q}$ can be efficiently prepared, as long as the conditions of Proposition~\ref{prop:vect_prep} are satisfied. In the case of analog quantum simulators and certain experimental platforms that implement dynamics via tunable Hamiltonian parameters, it is more natural to reduce the state preparation problem to a Hamiltonian simulation problem $e^{-i H' t}$ that is tractable for the analog simulator \cite{glaetzle2017coherent,notarnicola2023randomized}. 

Firstly, making use of the identity Eq.~(\ref{eq:ricochet}), it can be straightforwardly shown that $\kett{O(t)}_\mathcal{C}$ satisfies the Schrödinger equation $ih \partial_t \kett{O(t)}_\mathcal{C} = \overline{H} \kett{O(t)}_\mathcal{C}$, where $\overline{H} \equiv -H \otimes \mathbb{I} + \mathbb{I} \otimes H^\mathsf{T}$. Its solution is thus given by:
\begin{equation}
    \kett{O(t)}_\mathcal{C} = e^{-i \overline{H} t} \kett{O}_\mathcal{C} = e^{iHt} \otimes e^{-iH^\mathsf{T} t} \kett{O}_\mathcal{C},
\end{equation}
which amounts to independently evolving the subsystem $\mathcal{H}_L$ backward in time with $H$, and $\mathcal{H}_R$ forward in time with $H^\mathsf{T}=H^*$. Introducing an intermediate operator basis $\mathcal{Q}$, Eq.~(\ref{eq:state_prep}) can then be written as:
\begin{alignb}
    \kett{O(t)}_\mathcal{Q} &= R_{\mathcal{Q}, \mathcal{C}}^\dagger e^{-i \overline{H} t} R_{\mathcal{Q}, \mathcal{C}} \kett{O}_\mathcal{Q} \\
    &= e^{-i(R_{\mathcal{Q}, \mathcal{C}}^\dagger \overline{H} R_{\mathcal{Q}, \mathcal{C}})t} \kett{O}_\mathcal{Q} \\
    & \equiv e^{-i \overline{H}_\mathcal{Q} t} \kett{O}_\mathcal{Q} \\
    &= e^{i(R_{\mathcal{Q}, \mathcal{C}}^\dagger (H \otimes \mathbb{I}) R_{\mathcal{Q}, \mathcal{C}})t} e^{-i(R_{\mathcal{Q}, \mathcal{C}}^\dagger (\mathbb{I} \otimes H^\mathsf{T}) R_{\mathcal{Q}, \mathcal{C}})t} \kett{O}_\mathcal{Q}, \label{eq:analog_ham_sim_separated}
\end{alignb}
where in the third line we defined the rotated Hamiltonian:
\begin{equation*}
    \overline{H}_\mathcal{Q} \equiv R_{\mathcal{Q}, \mathcal{C}}^\dagger \overline{H} R_{\mathcal{Q}, \mathcal{C}}
    = -R_{\mathcal{Q}, \mathcal{C}}^\dagger (H \otimes \mathbb{I}) R_{\mathcal{Q}, \mathcal{C}} + R_{\mathcal{Q}, \mathcal{C}}^\dagger (\mathbb{I} \otimes H^\mathsf{T}) R_{\mathcal{Q}, \mathcal{C}},
\end{equation*}
and in the fourth line made use of $[H \otimes \mathbb{I}, \mathbb{I} \otimes H^\mathsf{T}] = 0$. We conclude that the preparation of $\kett{O(t)}_\mathcal{Q}$ hinges on whether the simulation of $\overline{H}_\mathcal{Q}$ and the preparation of $\kett{O}_\mathcal{Q}$ are efficient in some basis $\mathcal{Q}$. \\

Several considerations for the simulation of $\overline{H}_\mathcal{Q}$ are as follows:
\begin{itemize}
    \item \textit{Locality of $\overline{H}_\mathcal{Q}$}: The conjugation of $H \otimes \mathbb{I}$ and $\mathbb{I} \otimes H^\mathsf{T}$ by a non-local basis transformation $R_{\mathcal{Q}, \mathcal{C}}$ generally leads to $\overline{H}_\mathcal{Q}$ having increased non-locality. However, if $R_{\mathcal{Q}, \mathcal{C}}$ has a tensor product structure over local partitions of constant size $p$ -- which, by Proposition~\ref{thm:basis_change_approx}, occurs precisely when $\mathcal{Q}$ has a tensor product structure over the same sized partitions -- this increase is only by the same constant $p$. For example, if $H$ (and therefore $\overline{H}$) is $k$-local (i.e., it can be written as a sum $H = \sum_k c_k P_k$ of Pauli operators $P_k$ that are supported non-trivially on at most $k$ sites), then $\overline{H}_\mathcal{Q}$ is at most $(k+p)$-local. For example, a 2-local Hamiltonian $H$ is transformed into a 4-local Hamiltonian $H'_\mathcal{P}$ in the Pauli basis. The increased connectivity imposes implementational difficulties, but can nonetheless be accounted by e.g. resolving them into tractable lower-order terms \cite{glaetzle2017coherent}.

    \item \textit{Time-reversal}: Simulating the first term in Eq.~(\ref{eq:analog_ham_sim_separated}) generally requires time-reversal. As discussed in the previous section, on digital quantum computers this simply amounts to reversing the order of the gates and implementing their inverses. The same maneuver in analog simulators require the ability to invert the signs of all terms in $\overline{H}_\mathcal{Q}$. Whether this is possible generally depends on the form of $\overline{H}_\mathcal{Q}$ and the tunability of the physical platform, and has been achieved in many experimental platforms based on superconducting qubits, trapped ions, neutral atoms, and nuclear spins \cite{schuster2022many,garttner2017measuring,mi2021information,sanchez2022emergent,brown2023quantum}.
    
    \item \textit{Transpose}: Simulating the second term in Eq.~(\ref{eq:analog_ham_sim_separated}) generally requires simulating time evolution under $H^\mathsf{T}$. Similar to time-reversal, this is straightforward on digital quantum computers, but on analog simulators require inverting the signs of terms in $H$ that contain an odd number of $Y$ Pauli operators. Notably, time-reversal symmetric Hamiltonians with $H=H^\mathsf{T}$ (and thus $U=U^\mathsf{T}$) -- e.g. real-entried Hamiltonians, or those containing only Pauli operators with $X$ and $Z$ terms -- are unaffected. Note that the discussion in Appendix~\ref{app:avoiding_transpose} also holds here, allowing the transposition to be avoided under certain conditions.
\end{itemize}

The initial state $\kett{O}_\mathcal{Q}$ can be prepared e.g. by adiabatic state preparation, as long as $\kett{O}_\mathcal{Q}$ corresponds to the ground state of a simulatable Hamiltonian \cite{notarnicola2023randomized} (e.g. the preparation of the computational basis state $\kett{O}_\mathcal{P}$ where $O\in\mathcal{P}$, in which case a local Hamiltonian consisting of only $Z$ terms suffice).

\section{Details on algorithms for OTOCs / superoperator expectation values} \label{app:superop}
This appendix extends the discussion of Section~\ref{sec:superop} on the computation of the superoperator expectation values $\tr(O^\dagger \mathcal{A}(O))/2^n$ of self-adjoint superoperators (satisfying $\mathcal{A}=\mathcal{A}^\dagger$). As discussed there, they map to Hermitian transfer matrices, and therefore can be interpreted as the Heisenberg-picture analogue of physically measurable observables in the Schrödinger picture. This observation -- which is another consequence of the bijective isometry realized by the vectorization map -- not only enables the individual computation of the expectation values of self-adjoint superoperators on quantum computers via Eq.~(\ref{eq:super_expt_vect}), but also their simultaneous estimation via commutation.

Throughout this section, we assume the operator $O$ to be Hermitian for simplicity, allowing us to drop the adjoints. We also take the Hilbert space dimension to be $N = 2^n$.

\subsection{Diagonal superoperators}
We begin with superoperators diagonal in the Pauli basis, of the form Eq.~(\ref{eq:superop_diagonal}), which encode \textit{all} geometrical properties of $O$. Firstly, we have the following Lemma:
\begin{lemma} \label{lemma:diag_superop}
Let $\mathcal{D}$ be a superoperator. The following statements are equivalent:
\begin{enumerate}
    \item $M^\mathcal{D}_\mathcal{P}$ is a diagonal matrix with diagonal entries $[M^\mathcal{D}_\mathcal{P}]_{kk} = \lambda_k$.
    \item $M^\mathcal{D}_\mathcal{C} = \sum_k f_k P_k \otimes P_k^\mathsf{T}$.
    \item $\mathcal{D}$ has a diagonal operator-sum decomposition, i.e., $\mathcal{D}(\cdot) = \sum_k f_k P_k (\cdot) P_k$.
    \item $\mathcal{D}$ can be written as a sum of projectors onto $P_k$'s, i.e., $\mathcal{D}(\cdot) = \frac{1}{2^n} \sum_k \lambda_k P_k \tr(P_k (\cdot))$.
\end{enumerate}
The vectors $\vec{f}$ and $\vec{\lambda}$ are related by a Walsh-Hadamard transform $\vec{\lambda} = K \vec{f}$ and its inverse $\vec{f} = \frac{1}{4^n} K \vec{\lambda}$, with entries $K_{ij} \in \{-1,1\}$ taking value 1 when $[P_i,P_j]=0$ and $-1$ otherwise.
\end{lemma}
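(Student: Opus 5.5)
We prove the cycle of equivalences (2)$\Leftrightarrow$(3), (3)$\Rightarrow$(4)$\Rightarrow$(1), and (1)$\Rightarrow$(3). The Walsh--Hadamard relation between $\vec f$ and $\vec\lambda$ falls out along the way.

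\textbf{(2)$\Leftrightarrow$(3).} This is immediate from Eq.~(\ref{eq:tm_cb_identity}): a superoperator has operator-sum decomposition $\sum_{kl} f_{kl} Q_k(\cdot)Q_l^\dagger$ if and only if its transfer matrix in $\mathcal{C}$ is $\sum_{kl} f_{kl} Q_k\otimes Q_l^*$. We specialize to $Q_k=P_k$ and $f_{kl}=f_k\delta_{kl}$. Since Pauli operators are Hermitian, $P_k^\dagger=P_k$ and $P_k^*=P_k^\mathsf{T}$.

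\textbf{(3)$\Rightarrow$(4).} Compute $\mathcal{D}(P_j)$ for each basis element $P_j$. We have $P_kP_jP_k = K_{kj}P_j$, with $K_{kj}=+1$ if $P_k,P_j$ commute and $-1$ if they anticommute. Hence $\mathcal{D}(P_j)=\big(\sum_k K_{jk} f_k\big) P_j \equiv \lambda_j P_j$. Each $P_j$ is therefore an eigenoperator of $\mathcal{D}$. By linearity and the Pauli expansion $O=\frac{1}{2^n}\sum_j P_j\tr(P_jO)$, this gives $\mathcal{D}(\cdot)=\frac{1}{2^n}\sum_j\lambda_j P_j\tr(P_j(\cdot))$, with $\vec\lambda = K\vec f$.

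\textbf{(4)$\Rightarrow$(1).} Apply Eq.~(\ref{eq:tm_elements}) with $N=2^n$. This gives $[M^\mathcal{D}_\mathcal{P}]_{ij}=\frac{1}{2^n}\tr(P_i\mathcal{D}(P_j))=\lambda_j\delta_{ij}$, using $\tr(P_iP_j)=2^n\delta_{ij}$.

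\textbf{(1)$\Rightarrow$(3).} This is the step that needs the most care. Given the diagonal entries $\lambda_k$, we define $\vec f=\frac{1}{4^n}K\vec\lambda$. We then need $K$ to be invertible with $K^2=4^n\mathbb{I}$, so that $K\vec f=\vec\lambda$. Once we have this, the superoperator $\sum_k f_kP_k(\cdot)P_k$ has, by the computation in the previous paragraphs, PTM equal to $\mathrm{diag}(\vec\lambda)$. Since a transfer matrix determines its superoperator uniquely, this superoperator equals $\mathcal{D}$.

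To show $K^2=4^n\mathbb{I}$, we use the symplectic representation (Eq.~(\ref{eq:def_pauli_vect})). Writing $P_i\leftrightarrow a_i\in\mathbb{Z}_2^{2n}$, we have $K_{ij}=(-1)^{\langle a_i,a_j\rangle_s}$, where $\langle\cdot,\cdot\rangle_s$ is the symplectic form. Then
\[
(K^2)_{ij}=\sum_k(-1)^{\langle a_i+a_j,a_k\rangle_s}.
\]
This sum equals $4^n$ if $a_i=a_j$. Otherwise it vanishes, because the symplectic form is nondegenerate, so the character $a\mapsto(-1)^{\langle b,a\rangle_s}$ is nontrivial for $b\neq0$. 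The matrix $K$ is thus a (symplectically permuted) Walsh--Hadamard matrix, which justifies the stated inverse. The only point requiring care here is the nondegeneracy argument; everything else is bookkeeping.
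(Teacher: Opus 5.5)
Your proof is correct. For (2)$\Leftrightarrow$(3), (3)$\Rightarrow$(4) and (4)$\Rightarrow$(1) it coincides with the paper's argument; the genuine difference is how you get out of statement (1).

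The paper proves (1)$\Rightarrow$(2) constructively. It conjugates $\mathrm{diag}(\vec\lambda)$ back to the computational basis, which gives $\sum_k \lambda_k$ times the rank-one projector onto $\kett{P_k}_\mathcal{C}$. It then expands this in the $2n$-qubit Pauli basis $\{P_i\otimes P_j\}$ and uses the ricochet identity to evaluate the coefficient of $P_i\otimes P_j$ as $\frac{1}{4^n}\sum_k \lambda_k K_{ik}\,\delta_{ij}\,\psi(P_j)$, where $P_j^*=\psi(P_j)P_j$. This shows at once that the cross terms with $i\neq j$ vanish and that $\vec f=\frac{1}{4^n}K\vec\lambda$, without ever inverting $K$. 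The identity $K^2=4^n\mathbb{I}$ is asserted after the lemma but never proved; it is only implicit in going around the cycle and using uniqueness of transfer matrices.

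You instead take $\vec f=\frac{1}{4^n}K\vec\lambda$ as an ansatz and reuse your forward computation to see that $\sum_k f_k P_k(\cdot)P_k$ has PTM $\mathrm{diag}(K\vec f)$. You then close with $K^2=4^n\mathbb{I}$, via character orthogonality for the nondegenerate symplectic form, plus uniqueness of a superoperator given its transfer matrix.

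Your route avoids the doubled-space expansion and the transpose/phase bookkeeping. It also makes the claimed inverse Walsh--Hadamard relation explicit and rigorous. The paper's route is more direct about where $\vec f$ comes from and needs no separate Fourier-analytic fact.
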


\begin{proof}
    \noindent ($1 \implies 2$): Write the diagonal transfer matrix as $M^\mathcal{D}_\mathcal{P} = \sum_k \lambda_k \ketbra{k}{k} = \sum_k \lambda_k \kett{P_k}_\mathcal{P} \leftindex_{\mathcal{P}}{\bbra{P_k}}$. Conjugating it by the basis change unitary $R_{\mathcal{P} \rightarrow \mathcal{C}} = R_{\mathcal{C} \rightarrow \mathcal{P}}^\dagger$ yields:
    \begin{equation}
        M^\mathcal{D}_\mathcal{C} = R^\dagger_{\mathcal{C} \rightarrow \mathcal{P}} M^\mathcal{D}_\mathcal{P} R_{\mathcal{C} \rightarrow \mathcal{P}} 
        = \sum_k \lambda_k R^\dagger_{\mathcal{C} \rightarrow \mathcal{P}} \kett{P_k}_\mathcal{P} \leftindex_{\mathcal{P}}{\bbra{P_k}} R_{\mathcal{C} \rightarrow \mathcal{P}} 
        = \sum_k \lambda_k \kett{P_k}_\mathcal{C} \leftindex_{\mathcal{C}}{\bbra{P_k}}.
    \end{equation}
    Expanding $M^\mathcal{D}_\mathcal{C}$ in the Pauli basis $\{P_i \otimes P_j\}_{i,j=1}^{4^n,4^n}$ and substituting the previous expression, we have:
    \begin{align*}
        M^\mathcal{D}_\mathcal{C} &= \frac{1}{4^n}\sum_{ij} \tr((P_i \otimes P_j) M^\mathcal{D}_\mathcal{C}) P_i \otimes P_j \\
        &= \frac{1}{4^n}\sum_{ij} \tr((P_i \otimes P_j) \sum_k \lambda_k \kett{P_k}_\mathcal{C} \leftindex_{\mathcal{C}}{\bbra{P_k}}) P_i \otimes P_j \\
        &= \frac{1}{4^n}\sum_{ijk} \lambda_k \leftindex_{\mathcal{C}}{\bbra{P_k}} P_i \otimes P_j \kett{P_k}_\mathcal{C} P_i \otimes P_j \\
        &= \frac{1}{4^n}\sum_{ijk} \lambda_k \tr(P_k^\dagger P_i P_k P_j^*)/2^n P_i \otimes P_j \\
        &= \frac{1}{4^n}\sum_{ijk} \lambda_k \left( K_{ik} \delta_{ij} \psi(P_j) \right) P_i \otimes P_j \\
        &= \sum_i \left( \frac{1}{4^n} \sum_k \lambda_k K_{ik} \right) P_i \otimes P_i^* \\
        &= \sum_i f_i P_i \otimes P_i^*,
    \end{align*}
    where $\psi(P_j)$ is the phase acquired by complex-conjugating/transposing, $P_j^*=\psi(P_j) P_j$, and $K_{ij} \in \{-1,1\}$ are entries of the Walsh-Hadamard transform as defined in the Lemma statement.

    ~\newline
    \noindent $(2 \implies 3)$: Immediate via Eq.~(\ref{eq:tm_cb_identity}) relating $M_\mathcal{C}^\mathcal{D}$ with the operator-sum decomposition of $\mathcal{D}$.
    
    ~\newline
    \noindent ($3 \implies 4$): Inserting the completeness relation for operators $\mathcal{I}(\cdot) = \sum_i P_i \tr(P_i (\cdot))/2^n$, we have:
    \begin{align*}
        \mathcal{D}(\cdot) &= \frac{1}{2^n} \sum_k f_k P_k \left(\sum_i P_i \tr(P_i (\cdot)) \right) P_k \\
        &= \frac{1}{2^n} \sum_{k,i} f_k \tr(P_i (\cdot)) P_k P_i P_k \\
        &= \frac{1}{2^n} \sum_{k,i} f_k \tr(P_i (\cdot)) (K_{ik} P_i P_k) P_k \\
        &= \frac{1}{2^n} \sum_i \left(\sum_k f_k K_{ik} \right) P_i \tr(P_i (\cdot)) \\
        &= \frac{1}{2^n} \sum_i \lambda_i P_i \tr(P_i (\cdot)).
    \end{align*}
    
    ~\newline
    \noindent ($4 \implies 1$): By definition of the transfer matrix elements Eq.~(\ref{eq:tm_elements}), we have:
    \begin{equation*}
        [M^\mathcal{D}_\mathcal{P}]_{kl} = \frac{1}{4^n} \tr(P_k \sum_i \lambda_i P_i \tr(P_i P_l)) = \lambda_l \delta_{kl}.
    \end{equation*}
\end{proof}

Lemma~\ref{lemma:diag_superop} provides a connection between $\vec{\lambda}$ and $\vec{f}$, the decompositions of $\mathcal{D}$ in the operator-sum and projective representations respectively. The vectors $\vec{f} \equiv (f_1,...,f_{4^n})^\intercal$ and $\vec{\lambda} \equiv (\lambda_1,...,\lambda_{4^n})^\intercal$ are related by a Walsh-Hadamard transform, where $K$ is symmetric and proportional to an orthogonal matrix, with entries $K_{ij} \in \{-1,1\}$ taking value 1 when $[P_i,P_j]=0$ and $-1$ otherwise. Related discussions on the Walsh-Hadamard transform and the relationship between the projective and operator-sum decomposition of superoperators can be found e.g. in \cite{flammia2020efficient,nambu2005matrix}. 

Alternatively, that the PTM of diagonal superoperators are diagonal in the computational basis can be inferred directly from the fact that the basis transformation unitary $R_{\mathcal{C} , \mathcal{P}}$ only conjugates diagonal operators of the form $P \otimes P^*$ to Pauli operators containing only $I$ or $Z$ terms:
\begin{table}[h!]
\renewcommand{\arraystretch}{1.4}
\centering
\begin{tabular}{|l|l|l|l|}
\hline
\textbf{II $\rightarrow$ $\,\text{II}$} & IX $\rightarrow$ $\text{IX}$ & IZ $\rightarrow$ $\text{XZ}$ & IY $\rightarrow$ $\,\text{XY}$ \\
XI $\rightarrow$ $\text{ZX}$ & \textbf{XX $\rightarrow$ $\text{ZI}$} & XZ $\rightarrow$ $\text{YY}$ & XY $\rightarrow$ $-\,\text{YZ}$ \\
ZI $\rightarrow$ $\text{XI}$ & ZX $\rightarrow$ $\text{XX}$ & \textbf{ZZ $\rightarrow$ $\text{IZ}$} & ZY $\rightarrow$ $\,\text{IY}$ \\
YI $\rightarrow$ $-\,\text{YX}$ & YX $\rightarrow$ $-\,\text{YI}$ & YZ $\rightarrow$ $\,\text{ZY}$ & \textbf{YY $\rightarrow$ $-\text{ZZ}$} \\
\hline
\end{tabular}
\caption{Conjugation of Pauli operators by $R_{\mathcal{C} , \mathcal{P}} (\cdot) R^\dagger_{\mathcal{C} , \mathcal{P}}$.}
\label{tab:conjugation}
\end{table}

Note that Lemma~\ref{lemma:diag_superop} does not hold for arbitrary operator bases $\mathcal{Q}$ -- the 2-qubit superoperator $Z\mathbb{I}(\cdot)ZZ$ does not admit a diagonal operator-sum expansion in the computational operator basis $\mathcal{C}$, but $M^\mathcal{A}_\mathcal{C} = Z\mathbb{I} \otimes ZZ$ is diagonal. Nevertheless, a straightforward partial generalization to bases that satisfy the (looser) condition $Q_i Q_j = \phi_{ij} Q_j Q_i$ (more generally known as q-commutation, satisfied by e.g. Majorana operators \cite{xu2025dynamics,miller2025simulation}) can be made:
\begin{lemma} \label{app:tm_diagonal}
Let $\mathcal{Q} = \{ Q_i\}$ be an orthogonal operator basis such that $Q_i Q_j = \phi_{ij}Q_j Q_i$ for $i \neq j$, and $\tr(Q_k^\dagger Q_l)= \delta_{kl} N$. If a superoperator $\mathcal{A}$ can be expanded as $\mathcal{A}(\cdot) = \sum_k f_k Q_k(\cdot)Q_k^\dagger$, then its transfer matrix in the same basis $M^\mathcal{A}_\mathcal{Q}$ is diagonal, with entries $\lambda_i \equiv [M^\mathcal{A}_\mathcal{Q}]_{ii} = \sum_{k} f_{k} \phi_{ki}$.
\end{lemma}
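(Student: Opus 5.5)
The plan is to compute the transfer-matrix elements directly from Eq.~(\ref{eq:tm_elements}), $[M^\mathcal{A}_\mathcal{Q}]_{ij} = \frac{1}{N}\tr(Q_i^\dagger \mathcal{A}(Q_j))$, as was done in the step ($4 \implies 1$) of Lemma~\ref{lemma:diag_superop}. Substituting the expansion $\mathcal{A}(\cdot) = \sum_k f_k Q_k(\cdot)Q_k^\dagger$ gives
\begin{equation*}
    [M^\mathcal{A}_\mathcal{Q}]_{ij} = \frac{1}{N}\sum_k f_k \tr(Q_i^\dagger Q_k Q_j Q_k^\dagger).
\end{equation*}
The key move is to commute $Q_k$ past $Q_j$ using the q-commutation relation $Q_k Q_j = \phi_{kj} Q_j Q_k$. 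This gives $Q_k Q_j Q_k^\dagger = \phi_{kj} Q_j Q_k Q_k^\dagger$.

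Next I would argue that $Q_k Q_k^\dagger \propto \mathbb{I}$, so that the trace collapses to $\tr(Q_i^\dagger Q_j) = N\delta_{ij}$ by orthogonality. This is the step I expect to be the main obstacle, because it is not literally among the hypotheses. For the intended bases (Paulis, Majorana strings, and Weyl--Heisenberg operators) each $Q_k$ is unitary, so $Q_kQ_k^\dagger = \mathbb{I}$, and I would state this as implicitly assumed.

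If only the stated hypotheses are to be used, I would try to derive it. Applying q-commutation twice shows that $Q_kQ_k^\dagger$ commutes with each $Q_j$, up to factors $\phi_{kj}\overline{\phi_{kj}}$ that must equal $1$ for consistency. Since $\mathcal{Q}$ spans the full operator algebra, an operator commuting with every basis element is central, hence a scalar multiple $c_k\mathbb{I}$. Taking the trace fixes the scalar: $\tr(Q_kQ_k^\dagger) = N$ gives $c_k = N/d$ with $d$ the Hilbert-space dimension. With the normalization $N = d$ used in this appendix, $c_k = 1$.

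The case $i\neq j$ but $j=k$ is excluded from the relation hypothesis. There I would set $\phi_{jj}=1$ trivially, since $Q_j$ commutes with itself. With that, $[M^\mathcal{A}_\mathcal{Q}]_{ij} = \sum_k f_k \phi_{kj}\,\delta_{ij}$, so the matrix is diagonal with $\lambda_i = \sum_k f_k\phi_{ki}$, which is the claimed result.
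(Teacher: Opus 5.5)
Your proposal is correct and is essentially the paper's proof. The paper likewise expands $[M^\mathcal{A}_\mathcal{Q}]_{ij}=\frac{1}{N}\sum_k f_k\tr(Q_i^\dagger Q_kQ_jQ_k^\dagger)$, moves $Q_k$ past $Q_j$ at the cost of $\phi_{kj}$, and collapses the trace using $Q_kQ_k^\dagger=\mathbb{I}$ and orthogonality, using unitarity of the $Q_k$ (and $\phi_{kk}=1$, $N=d$) silently, exactly as you flag.

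Your optional derivation of $Q_kQ_k^\dagger\propto\mathbb{I}$ from the bare hypotheses is only heuristic as written. The relation $Q_k^\dagger Q_j=\overline{\phi_{kj}}Q_jQ_k^\dagger$ is not a hypothesis, and ``$|\phi_{kj}|=1$ for consistency'' is not an argument. One way to close it: $\ker Q_k$ is invariant under every $Q_j$, so $Q_k$ is invertible; then $\phi_{kj}^{d}=\det(Q_kQ_jQ_k^{-1}Q_j^{-1})=1$; so $X\mapsto Q_kXQ_k^{-1}$ has an orthogonal eigenbasis with unimodular eigenvalues, hence is Hilbert--Schmidt unitary, which forces $Q_k^\dagger Q_k\propto\mathbb{I}$. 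None of this is needed to match the paper's proof.
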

\begin{proof}
The matrix elements of $M^{\mathcal{A}}_\mathcal{Q}$ are:
\begin{equation*} \label{eq:walsh-had-sum}
    \bra{i} M^{\mathcal{A}}_\mathcal{Q} \ket{j} = \frac{1}{N} \trace(Q_i^\dagger \mathcal{A}(Q_j)) 
    = \frac{1}{N} \sum_{k} f_{k} \tr(Q_i^\dagger Q_k Q_j Q_k^\dagger) 
    = \frac{1}{N} \sum_{k} f_{k} \phi_{kj} \tr(Q_i^\dagger Q_j Q_k Q_k^\dagger) 
    = \delta_{ij} \sum_{k} f_{k} \phi_{ki}. 
\end{equation*}
\end{proof}

\subsection{Commuting superoperators and the structure of their common eigenbasis} \label{app:general_commutation}
Transfer matrices inherit the notion of commutation, Eq.~(\ref{eq:commutation_inheritance}); this is because $[M^\mathcal{A}, M^\mathcal{B}] \kett{O} = \kett{[\mathcal{A}, \mathcal{B}](O)} = M^{[\mathcal{A}, \mathcal{B}]} \kett{O}$ for arbitrary $O$, so $M^{[\mathcal{A}, \mathcal{B}]} = [M^\mathcal{A}, M^\mathcal{B}]$, which vanishes if and only if $[\mathcal{A}, \mathcal{B}] = 0$. Consequently, as discussed in Section~\ref{sec:superop}, the expectation values $\{\tr(O^\dagger \mathcal{A}_i(O))/2^n\}_i$ of commuting superoperators $\{\mathcal{A}_i \}$ can be simultaneously estimated, by sampling from $\kett{O(t)}_\mathcal{C}$ in the (not necessarily unique) common eigenbasis of their transfer matrices $\{M_\mathcal{C}^{\mathcal{A}_i}\}_i$. Given samples of $\kett{O(t)}_\mathcal{C}$, this procedure is efficient if the basis transformation can be carried out efficiently, which generally depends on the set $\{\mathcal{A}_i \}_i$.

To investigate the structure of this common eigenbasis, begin by observing that the pairwise commutation of $\{\mathcal{A}_i \}_i$ can be expressed in terms of local commutation/anticommutation relations, i.e.:
\begin{lemma}[Necessary and sufficient condition for transfer matrix commutation, from local commutation/anticommutation]\label{app:commutation_condition}~\\
Let $\{\mathcal{A}_1, ..., \mathcal{A}_M \}$ be a set of superoperators where $\mathcal{A}_i(\cdot) = P_{i} (\cdot) Q_{i}^\dagger$, $P_i,Q_i \in \mathcal{P}$. Then all of them mutually commute, i.e.:
\begin{equation} \label{eq:commutation_superop}
    [P_i \otimes Q_i^*, P_j \otimes Q_j^*] = 0
\end{equation}
for all pairs $(i,j)$, if and only if for all pairs $(i,j)$, either:
\begin{align}
    \{P_i, P_j\} = 0 ~\text{and}~ \{Q_i, Q_j\} = 0,& ~~\text{or} \label{eq:lemma_all_anticomm} \\ 
    [P_i, P_j] = 0 ~\text{and}~ [Q_i, Q_j] = 0.& \label{eq:lemma_all_commuting}
\end{align}
\end{lemma}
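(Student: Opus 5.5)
Since each $\mathcal{A}_i$ has transfer matrix $M^{\mathcal{A}_i}_\mathcal{C} = P_i \otimes Q_i^*$ by Eq.~(\ref{eq:tm_cb_identity}), and commutation of superoperators is equivalent to commutation of their transfer matrices by Eq.~(\ref{eq:commutation_inheritance}), the statement reduces to a purely algebraic claim about tensor products of Pauli operators. The plan is to reduce it to single-factor commutation signs.

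\textbf{Sign identity.} Any two Pauli operators either commute or anticommute. Write $P_iP_j = s_{ij} P_jP_i$ and $Q_iQ_j = t_{ij} Q_jQ_i$ with $s_{ij}, t_{ij} \in \{\pm 1\}$.

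\textbf{Conjugation preserves the signs.} Complex conjugation is a ring homomorphism. Conjugating $Q_iQ_j = t_{ij}Q_jQ_i$ therefore gives $Q_i^*Q_j^* = t_{ij} Q_j^* Q_i^*$, since $t_{ij}$ is real. Equivalently, $Q^* = \psi(Q) Q$ with $\psi(Q) \in \{\pm1\}$, so the phases cancel in the commutator.

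\textbf{Computing the product.} Using the mixed-product rule,
\begin{equation*}
(P_i\otimes Q_i^*)(P_j\otimes Q_j^*) = P_iP_j \otimes Q_i^*Q_j^* = s_{ij}t_{ij}\,(P_j\otimes Q_j^*)(P_i\otimes Q_i^*).
\end{equation*}
Hence the commutator in Eq.~(\ref{eq:commutation_superop}) equals $(s_{ij}t_{ij}-1)(P_j\otimes Q_j^*)(P_i\otimes Q_i^*)$.

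\textbf{Reading off the condition.} The operator $(P_j\otimes Q_j^*)(P_i\otimes Q_i^*)$ is unitary, hence nonzero. So the commutator vanishes if and only if $s_{ij}t_{ij} = 1$. This holds precisely when $s_{ij}=t_{ij}=1$, which is case (\ref{eq:lemma_all_commuting}), or $s_{ij}=t_{ij}=-1$, which is case (\ref{eq:lemma_all_anticomm}). Applying this pairwise for all $(i,j)$ gives both directions of the equivalence.

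\textbf{Where care is needed.} No step is a genuine obstacle. The only points requiring attention are that the phase $\psi(Q)$ from conjugation does not alter the commutation sign, and that the tensor-product operator is nonzero, so that a vanishing commutator forces $s_{ij}t_{ij}=1$. The case $i=j$ is trivial and falls under (\ref{eq:lemma_all_commuting}).
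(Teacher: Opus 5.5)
Your proof is correct and is essentially the paper's argument. The paper expands the commutator via the general identity $[A\otimes B,C\otimes D]=\tfrac12\bigl([A,C]\otimes\{B,D\}+\{A,C\}\otimes[B,D]\bigr)$ and then invokes the commute-or-anticommute dichotomy, whereas you factor it directly as $(s_{ij}t_{ij}-1)(P_j\otimes Q_j^*)(P_i\otimes Q_i^*)$. Your version also makes explicit why the mixed cases give a nonvanishing commutator (a nonzero multiple of a unitary), a point the paper leaves implicit.
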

\begin{proof}
For any pair $(i,j)$, applying the identity:
\begin{equation}
    [P_i \otimes Q_i, P_j \otimes Q_j] = \frac{1}{2}([P_i, P_j] \otimes \{Q_i, Q_j \} + \{P_i, P_j \} \otimes [Q_i, Q_j])
\end{equation}
to the left-hand-side of Eq.~(\ref{eq:commutation_superop}) yields:
\begin{equation}
    [P_i \otimes Q_i^*, P_j \otimes Q_j^*] = \frac{1}{2}([P_i, P_j] \otimes \{Q_i, Q_j \}^* + \{P_i, P_j\} \otimes [Q_i, Q_j]^*).
\end{equation}
Since Pauli operators either commute or anticommute, this expression is zero if and only if either Eq.~(\ref{eq:lemma_all_anticomm}) or Eq.~(\ref{eq:lemma_all_commuting}) holds.
\end{proof}

In other words, the left-acting operator pairs $P_i, P_j$ and right-acting operator pairs $Q_i, Q_j$ must simultaneously commute or anticommute. This allows us to classify sets of commuting superoperators $\{\mathcal{A}_i \}$ into those that commute because:
\begin{enumerate}[label=(\roman*)]
    \item all left factors commute pairwise and all right factors commute pairwise (i.e. $\forall ~ i \neq j$, $[P_i,P_j]=0$ and $[Q_i,Q_j]=0$), or 
    \item there exists at least one pair $\mathcal{A}_i$, $\mathcal{A}_j$ where their left factors anticommute and their right factors anticommute (i.e. $\exists ~ i \neq j$ such that $\{P_i,P_j \}=0$ and $\{Q_i,Q_j \}=0$).
\end{enumerate}
Fig.~\ref{fig:comutation} illustrates some examples. Further examples for these two cases and their common eigenbases are as follows.

\begin{figure}[h]
    \centering
    \includegraphics[width=.45\linewidth]{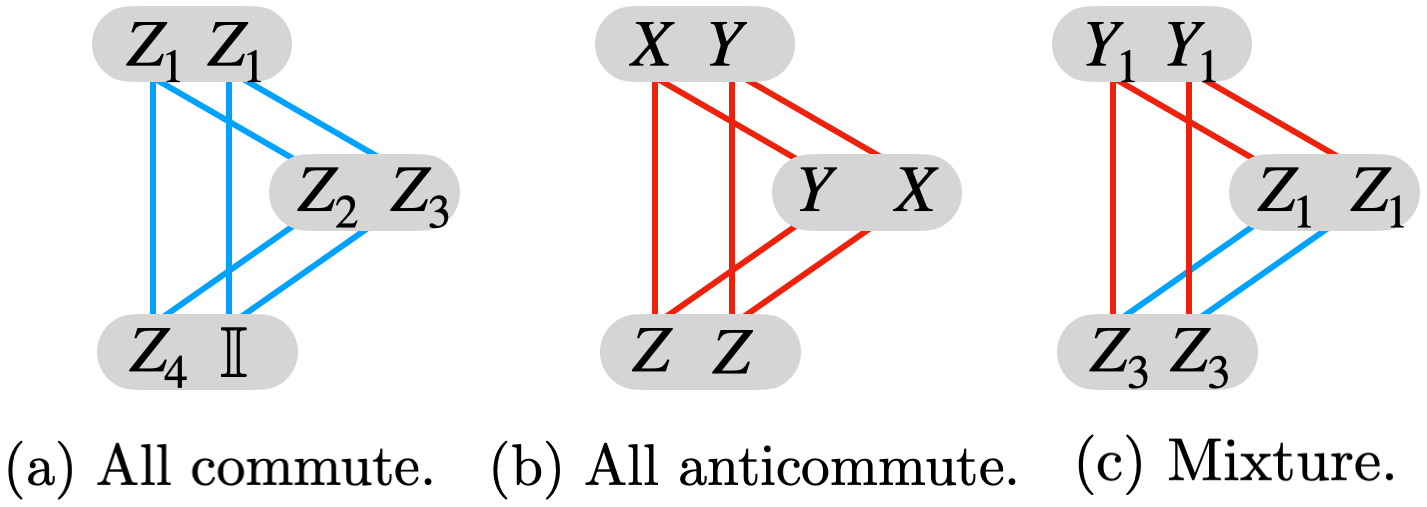}
    \caption{\textbf{Examples of superoperator sets that commute due to different reasons.} Grey capsules denote superoperators $P_i(\cdot)Q_i$, while blue (red) lines denote commutation (anticommutation) between the left or right-acting operators. (a) $\{Z_1(\cdot)Z_1, Z_2(\cdot)Z_3, Z_4(\cdot)\mathbb{I}\}$ commute due to simultaneous commutation. (b) $\{X(\cdot)Y, Y(\cdot)X, Z(\cdot)Z\}$ commute due to simultaneous anticommutation. (c) $\{Y_1(\cdot)Y_1, Z_1(\cdot)Z_1, Z_3(\cdot)Z_3\}$ commute due to a mixture of both cases.}
    \label{fig:comutation}
\end{figure}

\noindent \textbf{All simultaneously commute (Fig.~\ref{fig:comutation}(a)):} 
\begin{itemize}
    \item $\{\tr(O Z_i O Z_j)/2^n\}_{i,j=1}^{2^n}$ where $Z_i \in \{\mathbb{I},Z \}^{\otimes n}$, with common eigenbasis $\{\ket{0}, \ket{1}\}^{\otimes 2n}$.
    \item $\{\tr(O X_i O X_j)/2^n\}_{i,j=1}^{2^n}$ where $X_i \in \{\mathbb{I},X \}^{\otimes n}$, with common eigenbasis $\{\ket{+}, \ket{-}\}^{\otimes 2n}$.
    \item $\{\tr(O Z_i O X_j)/2^n\}_{i,j=1}^{2^n}$ with common eigenbasis $\{\ket{0}, \ket{1}\}^{\otimes n} \otimes \{\ket{+}, \ket{-}\}^{\otimes n}$.
\end{itemize}

Notably, this includes off-diagonal OTOC terms (where left- and right-acting terms differ). In each case, all $4^n$ terms are diagonal in the same basis. \\

\noindent \textbf{At least one pair simultaneously anticommute (Fig.~\ref{fig:comutation}.(b) \& (c)):}
\begin{itemize}
    \item $\{\tr(O P_i O P_i)/2^n\}_{i=1}^{4^n}$ where $P_i \in \mathcal{P}$, with common eigenbasis being the Bell basis which entangles $\mathcal{H}_L$ and $\mathcal{H}_R$. Contains both simultaneously commuting and anticommuting pairs.
    
    \item $\{\tr(O X O Y)/2^n, \tr(O Y O X)/2^n, \tr(O Z O Z)/2^n \}$, with common eigenbasis entangling $\mathcal{H}_L$ and $\mathcal{H}_R$. All pairs simultaneously anticommute.
\end{itemize}

From the above examples, we observe that the common eigenbasis in case (a) is separable across $\mathcal{H}_L$ and $\mathcal{H}_R$, while for cases (b) and (c) are entangled across $\mathcal{H}_L$ and $\mathcal{H}_R$. Indeed, this observation is general :
\begin{itemize}
    \item On one hand, any eigenstate of the eigenbasis in cases (b) and (c) above must be entangled across $\mathcal{H}_L$ and $\mathcal{H}_R$. Thus, simultaneous computation requires measuring $\kett{O(t)}_\mathcal{C}$ in a basis that is entangled across them. An important example is again the case of the diagonal superoperators discussed in Section~\ref{sec:geometrical_prop}, where their common eigenbasis -- the Bell basis -- entangles $\mathcal{H}_L$ and $\mathcal{H}_R$ via the Bell basis unitary Eq.~(\ref{eq:rotation_cp}).

    \item On the other hand, for case (a) there exists a common eigenbasis that is separable across $\mathcal{H}_L$ and $\mathcal{H}_R$, which is simply the tensor product of the common eigenbasis of the left-acting operators $\{P_i\}$ and that of the right-acting operators $\{Q_i\}$. As a consequence, this tensor product form allows an `unfolding' of the current algorithm to a $n$-qubit algorithm, allowing this class of quantities to be computed simultaneously with the same sample complexity, but using only $n$ qubits and classical randomization (c.f. Section~\ref{sec:n_qubit_rand}, App.~\ref{app:n_qubit_algorithms}).
\end{itemize}
In other words, we have the following:
\begin{lemma}[Entanglement structure of common eigenbasis] \label{lemma:anticomm_basis}
Let $S = \{P_i\otimes Q_i\}_{i=1}^M$ be a set of pairwise commuting Pauli operators in $\mathcal{L}(\mathcal{H}_L \otimes \mathcal{H}_R)$. 
\begin{enumerate}
    \item If there exists $i\neq j$ with $\{P_i,P_j\}=0$ and $\{Q_i,Q_j\}=0$, then any simultaneous eigenvector of $S$ is entangled across $\mathcal{H}_L$ and $\mathcal{H}_R$.
    \item On the other hand, if $\forall ~i\neq j$, $[P_i,P_j]=0$ and $[Q_i,Q_j]=0$, then $S$ admits a common eigenbasis where all eigenstates are separable across $\mathcal{H}_L \otimes \mathcal{H}_R$, namely the basis $\{\ket{p_k} \otimes \ket{q_l}\}_{k,l=1}^{2^n}$ where $\{\ket{p_k} \}_{k=1}^{2^n}$ ($\{\ket{q_l} \}_{l=1}^{2^n}$) is the common eigenbasis of $\{P_i\}$ ($\{Q_i\}$).
\end{enumerate}
\end{lemma}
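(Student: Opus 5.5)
For part 1, I would argue by contradiction. Suppose $\ket{\psi} = \ket{a} \otimes \ket{b}$ is a product state that is a simultaneous eigenvector of every element of $S$. Fix the pair $i \neq j$ with $\{P_i,P_j\}=0$ and $\{Q_i,Q_j\}=0$. Since $P_i \otimes Q_i$ is a Pauli operator and hence unitary, its eigenvalue is nonzero. The eigenvalue equation then gives $P_i\ket{a} \otimes Q_i \ket{b} = \lambda_i \ket{a}\otimes\ket{b}$ with $\lambda_i \neq 0$. Because both factors are nonzero vectors, a product of vectors equal to a product state forces $P_i\ket{a} = \alpha_i \ket{a}$ and $Q_i \ket{b} = \beta_i \ket{b}$, with $\alpha_i\beta_i = \lambda_i$. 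This is the key step: a rank-one tensor equality $u\otimes v = c\, a\otimes b$ with $c\neq 0$ implies $u \propto a$ and $v\propto b$. The same holds for index $j$. So $\ket{a}$ is a common eigenvector of $P_i$ and $P_j$. But anticommuting Paulis have no common eigenvector: $\alpha_i\alpha_j\ket{a} = P_iP_j\ket{a} = -P_jP_i\ket{a} = -\alpha_i\alpha_j\ket{a}$, while $\alpha_i\alpha_j \neq 0$ by unitarity. This is a contradiction. Indeed, the contradiction already arises from the left factor alone, so the hypothesis on $Q$ only serves to make $S$ commuting, via Lemma~\ref{app:commutation_condition}. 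It is also worth stating "entangled" as "not a product vector", which is the sense used here.

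For part 2, the sets $\{P_i\}$ and $\{Q_i\}$ are each pairwise commuting Hermitian Pauli operators, so each admits an orthonormal common eigenbasis, $\{\ket{p_k}\}$ and $\{\ket{q_l}\}$ respectively. Then $(P_i\otimes Q_i)(\ket{p_k}\otimes\ket{q_l}) = \mu^{(i)}_k \nu^{(i)}_l \ket{p_k}\otimes\ket{q_l}$. The tensor products form an orthonormal basis of $\mathcal{H}_L\otimes\mathcal{H}_R$, which gives a separable common eigenbasis.

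I expect no real obstacle. The only care needed is the factorization step in part 1, which relies on the eigenvalue being nonzero; unitarity of Paulis supplies this. A small point is that the lemma is phrased for $P_i \otimes Q_i$ while the superoperators of Lemma~\ref{app:commutation_condition} use $Q_i^*$. Complex conjugation preserves (anti)commutation of Paulis, so the statement applies verbatim to either form.
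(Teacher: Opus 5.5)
Your proposal is correct and follows the paper's proof: part 1 assumes a product simultaneous eigenvector and derives a contradiction because its left factor would be a common eigenvector of the anticommuting $P_i,P_j$, and part 2 tensors the common eigenbases of $\{P_i\}$ and $\{Q_i\}$. Your explicit justification of the factorization step (a nonzero eigenvalue by unitarity, plus the rank-one tensor argument) and of the nonexistence of common eigenvectors of anticommuting Paulis supplies details that the paper asserts without proof.
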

\begin{proof}
(1): Suppose, for the sake of contradiction, that $\ket{\psi} = \ket{\psi_L}\otimes\ket{\psi_R}$ is a simultaneous eigenvector of $\{P_i\otimes Q_i\}_{i=1}^M$. Then:
\begin{align}
    (P_i \otimes Q_i) \bigl(\ket{\psi_L}\otimes\ket{\psi_R}\bigr) &= \lambda_i\,\ket{\psi_L}\otimes\ket{\psi_R}, \\
    (P_j \otimes Q_j )\bigl(\ket{\psi_L}\otimes\ket{\psi_R}\bigr) &= \lambda_j\,\ket{\psi_L}\otimes\ket{\psi_R}.
\end{align}
For the equalities to hold, $\ket{\psi_L}$ ($\ket{\psi_R}$) must be the common eigenstate of $P_i$ and $P_j$ ($Q_i$ and $Q_j$). This contradicts the assumption that $\{P_i,P_j\}=0$ and $\{Q_i,Q_j\}=0$, since anticommuting Pauli operators have no common eigenstates.

(2): The states $\ket{p_k} \otimes \ket{q_l}$ form a complete basis, and are common eigenstates of all operators in $S$ since $(P_i \otimes Q_i) \ket{p_k} \otimes \ket{q_l} \propto \ket{p_k} \otimes \ket{q_l}$.
\end{proof}

A consequence of Lemma~\ref{lemma:anticomm_basis} is that $n$-qubit approaches are unable to simultaneously estimate OTOCs resulting from sets of Pauli operators satisfying the first (pairwise anticommutation) condition by sampling from their common eigenbasis, while on the contrary those satisfying the second (pairwise commutation) condition are jointly computable; c.f. Appendix~\ref{app:n_qubit_details}. 

\vspace{5pt}

Finally, as mentioned in the main text, the anticommutation between transfer matrices can analogously be used to group terms together as unitaries for their simultaneous estimation, enabling the use of unitary grouping strategies for measurement reduction \cite{izmaylov2019unitary,zhao2020measurement}. We have the negation of Lemma~\ref{app:commutation_condition}:
\begin{lemma}[Necessary and sufficient condition for transfer matrix anticommutation, from local commutation/anticommutation]\label{app:anticommutation_condition}~\\
Let $\{\mathcal{A}_1, ..., \mathcal{A}_M \}$ be a set of superoperators where $\mathcal{A}_i(\cdot) = P_{i} (\cdot) Q_{i}^\dagger$, $P_i,Q_i \in \mathcal{P}$. Then all distinct pairs anticommute, i.e.:
\begin{equation} \label{eq:anticommutation_superop}
    \{P_i \otimes Q_i^*, P_j \otimes Q_j^* \} = 0
\end{equation}
for all pairs $(i,j)$, if and only if for all pairs $(i,j)$, either:
\begin{align}
    \{P_i, P_j\} = 0 ~\text{and}~ [Q_i, Q_j] = 0,& ~~\text{or}  \\ 
    [P_i, P_j] = 0 ~\text{and}~ \{Q_i, Q_j \} = 0.& 
\end{align}
\end{lemma}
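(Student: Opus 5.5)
I would prove this exactly as Lemma~\ref{app:commutation_condition} was proved, but with the companion tensor-product identity that isolates the anticommutator. For any operators $A,B,C,D$, expanding both sides directly gives
\begin{equation*}
    \{A\otimes B, C\otimes D\} = \frac{1}{2}\left([A,C]\otimes[B,D] + \{A,C\}\otimes\{B,D\}\right).
\end{equation*}
The terms $AC\otimes BD$ and $CA\otimes DB$ each appear with coefficient $1$. The cross terms $AC\otimes DB$ and $CA\otimes BD$ cancel. Checking this is the only computation needed.

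I would apply the identity with $A=P_i$, $B=Q_i^*$, $C=P_j$, $D=Q_j^*$. Complex conjugation is entrywise, so $[Q_i^*,Q_j^*]=[Q_i,Q_j]^*$ and $\{Q_i^*,Q_j^*\}=\{Q_i,Q_j\}^*$. Hence
\begin{equation*}
    \{P_i\otimes Q_i^*, P_j\otimes Q_j^*\} = \frac{1}{2}\left([P_i,P_j]\otimes[Q_i,Q_j]^* + \{P_i,P_j\}\otimes\{Q_i,Q_j\}^*\right).
\end{equation*}

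Next, I would use that two Pauli operators either commute or anticommute. If they commute, their commutator vanishes and their anticommutator is $2P_iP_j$, which is nonzero because it is invertible. If they anticommute, the anticommutator vanishes and the commutator is $2P_iP_j\neq 0$. The same holds for $Q_i,Q_j$, and a tensor product of two nonzero operators is nonzero.

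This leaves four cases for each pair $(i,j)$, $i\neq j$:
\begin{itemize}
    \item Both pairs commute: the right-hand side is $\frac{1}{2}\{P_i,P_j\}\otimes\{Q_i,Q_j\}^*\neq 0$.
    \item Both pairs anticommute: the right-hand side is $\frac{1}{2}[P_i,P_j]\otimes[Q_i,Q_j]^*\neq 0$.
    \item Exactly one pair commutes and the other anticommutes: each term contains a vanishing factor, so the right-hand side is $0$.
\end{itemize}
So the anticommutator vanishes exactly in the two mixed cases listed in the statement. Since the condition is imposed pair by pair, the ``for all pairs'' equivalence follows at once.

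There is no real obstacle. The only points needing care are verifying the tensor identity and noting that conjugation commutes with forming commutators and anticommutators. One could also remark that $P_i\otimes Q_i^*$ is itself (up to phase) a $2n$-qubit Pauli, so its anticommutation is the symplectic product of its two halves; this gives an alternative one-line proof.
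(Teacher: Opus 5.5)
Your proof is correct and is essentially the paper's own argument: the paper's alternative proof uses exactly the identity $\{P_i \otimes Q_i, P_j \otimes Q_j\} = \frac{1}{2}(\{P_i,P_j\}\otimes\{Q_i,Q_j\} + [P_i,P_j]\otimes[Q_i,Q_j])$ together with the Pauli commute/anticommute dichotomy, and its primary one-line proof negates Lemma~\ref{app:commutation_condition}. That negation works because the $P_i\otimes Q_i^*$ are Paulis up to phase, so failing to commute means anticommuting, which is the content of your closing symplectic remark. You are more careful than the paper in two places: you handle the complex conjugation explicitly, and you justify that the surviving term is nonzero because it equals $2P_iP_j$, which is invertible.
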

\begin{proof}
The result follows by negating Lemma~\ref{app:commutation_condition}. Alternatively, the identity:
\begin{equation*}
    \{P_i \otimes Q_i, P_j \otimes Q_j \} = \frac{1}{2}(\{P_i, P_j \} \otimes \{Q_i, Q_j \} + [P_i, P_j ] \otimes [Q_i, Q_j])
\end{equation*}
can be used; it vanishes if and only if the above conditions are satisfied.
\end{proof}

Subsequently, given a superoperator of the form $\mathcal{A} = \sum_i c_i \mathcal{A}_i$ where $c_i \in \mathbb{R}$ and $\sqrt{\sum_i \abs{c_i}^2} = 1$, if all $\mathcal{A}_i$'s satisfy the condition of the above Lemma, $\mathcal{A}$ (and its transfer matrix) will be unitary, enabling $\langle \mathcal{A} \rangle_O$ to be measured as an overlap \cite{izmaylov2019unitary} or diagonalized \cite{zhao2020measurement}.

\subsection{Additional numerics and comparison of sample complexities} \label{app:superop_comparison}

This appendix discusses the number of samples needed to estimate general superoperator expectation values $\langle \mathcal{A} \rangle_{O}$, where $\mathcal{A}(\cdot) = \sum_{kl} f_{kl} P_k (\cdot) P_l^\dagger$, and $P_k, P_l \in \mathcal{P}$. We compare between the $n$- and $2n$-qubit approaches by making use of existing results on measurement reduction for energy/observable estimation, together with simple numerical experiments.

As discussed in Section~\ref{sec:superop}, the mapping of this problem to the estimation of the expectation values of Hermitian operators enables measurement optimization strategies for the latter to be applied. Examples are term grouping by commutation (as discussed in Section~\ref{subsec:self-adjoint}) and unitary term grouping by anticommutation (as discussed in the previous subsection); see e.g. \cite{tilly2022variational}, Section 5 in the context of energy computation in variational quantum eigensolvers. 

% The optimal allocation of $n_i$'s -- obtained via Lagrange multipliers \cite{rubin2018application,crawford2021efficient} -- is proportional to the variances, i.e. $n_i \propto \sqrt{\text{Var}_O{(\mathcal{A}^i)}}$, for a total of $N = \left(\sum_{i=1}^K \sqrt{\text{Var}_O(\mathcal{A}^i)} \right)^2/\epsilon^2$ samples.

As an illustration, we apply known results to our situation. Consider estimation to root-mean-squared error $\epsilon_{\text{RMS}}$. Suppose that $\mathcal{A}$ has been grouped into $M$ groups of mutually commuting superoperators $\mathcal{A}^1,...,\mathcal{A}^M$ as in Eq.~(\ref{eq:superoperator_term_group}) where each $\mathcal{A}^i = \sum_{j=1}^{K_i} \mathcal{A}_j^i$ contains $K_i$ OTOC terms $ \mathcal{A}_j^i (\cdot) = \norm{\mathcal{A}_j^i} P^i_j(\cdot)Q^i_j$ that mutually commute, i.e. $[\mathcal{A}_k^i, \mathcal{A}_l^i]=0 ~ \forall k,l$. Each term $\langle \mathcal{A}^i\rangle_O$ is then computed by taking $n_i$ samples from $\kett{O}_\mathcal{C}$ in the common eigenbasis of $\{P^i_j \otimes Q^i_j\}_{j=1}^{K_i}$, for a total of $N_\text{comm} \equiv \sum_i n_i$ samples. The optimal allocation of $n_i$'s -- given by $n_i \propto \sqrt{\text{Var}(\mathcal{A}_i)}$ via Lagrange multipliers \cite{rubin2018application,crawford2021efficient}, and achieves $N_{\text{comm}} = (\sum_i \text{Var}(\mathcal{A}_i)_O)^2/\epsilon^2_{\text{RMS}}$ -- require variance information. In the more practical situation where this is absent, each group can be allocated shots proportional to their weights \cite{arrasmith2020operator}, i.e.:
\begin{equation} \label{eq:wds_alloc}
    n_i = N_\text{comm} \frac{\sum_j \norm{\mathcal{A}^i_j}}{\sum_{i,j} \norm{\mathcal{A}^i_j}} \implies N_\text{comm} = \frac{\sum_{i,j} \norm{\mathcal{A}^i_j}}{\epsilon_{\text{RMS}}^2} \mathlarger{\sum}_{i=1}^M \frac{\text{Var}(\mathcal{A}^i)_O}{\sum_{j=1}^{K_i} \norm{\mathcal{A}^i_j}}.
\end{equation}
Similar analyses for estimators based on naive term-by-term estimation that do not exploit commutativity (denoted $N_{\text{naive}}$) and one that directly samples in the eigenbasis of $\mathcal{A}$ (denoted $N_{\text{full comm}}$) yields:
\begin{equation}
    N_{\text{naive}} = \frac{\sum_{i,j} \norm{\mathcal{A}^i_j}}{\epsilon_{\text{RMS}}^2} \mathlarger{\sum}_{i,j} \frac{\text{Var}(\mathcal{A}^i_j)_O}{\norm{\mathcal{A}^i_j}}, ~~~ N_{\text{full comm}} = \frac{\text{Var}(\mathcal{A})_O}{\epsilon_{\text{RMS}}^2}
\end{equation}
respectively. Applications of the subadditivity of the standard deviation and the triangle and Cauchy-Schwarz inequalities then yield $N_{\text{naive}} \geq N_\text{comm} \geq N_{\text{full comm}}$, as one would intuitively expect. 

The ratios between the different sample complexities quantify their relative performance. To remove the dependence on $O$, the number of samples needed can be averaged over the uniform spherical measure to provide an estimate of the average performance of the above estimators \cite{crawford2021efficient}. This allows the variances in their expressions to be replaced by $ \text{Var}(\mathcal{A})_O \rightarrow \sum_{ij} \norm{\mathcal{A}^i_j}^2$, $\text{Var}(\mathcal{A}^i)_O \rightarrow \sum_j \norm{\mathcal{A}^i_j}^2$, and $\text{Var}(\mathcal{A}^i_j)_O \rightarrow \norm{\mathcal{A}^i_j}^2$ for more tractable analysis. We use overbars ($\overline{N}_{\text{naive}}, \overline{N}_\text{comm}, \overline{N}_{\text{full comm}}$) to indicate the averaged number of samples. 

\begin{center}
    ~ \\ \textbf{Application to size superoperator} ~\\
\end{center}

\begin{figure}[h]
    \centering
    \includegraphics[width=1\linewidth]{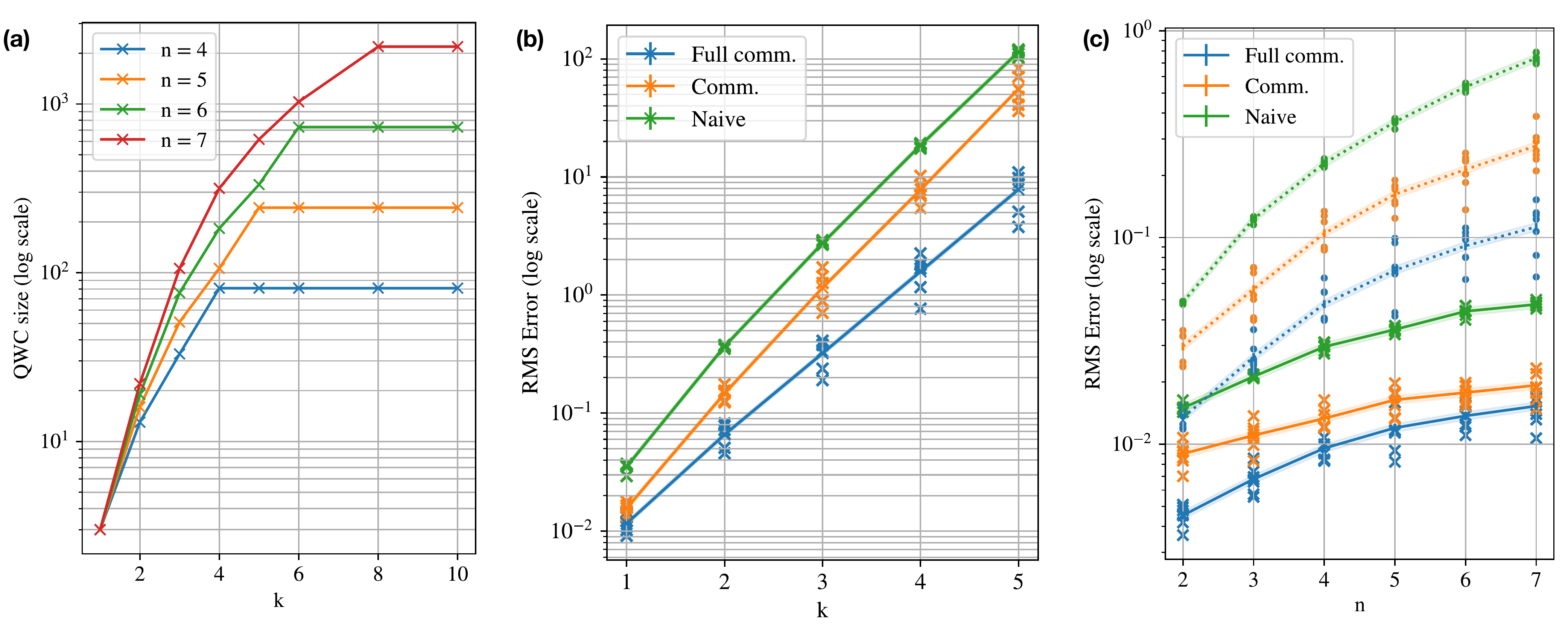}
    \caption{\textbf{Numerical experiments on the computation of the size superoperator.} (a) Number of qubit-wise-commuting groups in $\mathcal{S}$ (in log scale) as a function of $k$ for the size superoperator for different problem sizes $n=4,5,6,7$ (in different colors). Notably, from $k=2$ onwards the number of QWC groups increases with $n$. (b) RMS error $\epsilon$ (in log scale) as a function of $k$ for the size superoperator for problem size $n=5$. (c) RMS error $\epsilon_{\text{RMS}}$ (in log scale) as a function of problem size $n$ for $k=1$ (solid lines) and $k=2$ (dotted lines). In (b) and (c), curves for full commutation representing the vectorization approach (blue), limited commutation grouping representing the $n$-qubit approach (yellow), and naive term-by-term estimation (green) are displayed. Crosses represent different values of $t$, while the solid line represents their average. Each point is obtained with $N=10^3$ samples, averaged over 500 realizations; the shaded regions indicate the standard error.}
    \label{fig:size_numerics}
\end{figure}

To illustrate, we compare between the $n$- and $2n$-qubit approaches by applying the above discussion to the size superoperator $\mathcal{S}$ of Eq.~(\ref{eq:size_superop}) and its higher order moments $\langle \mathcal{S}^k \rangle_{O}$.

Since $\mathcal{S}$ is diagonal in the Pauli basis, the $2n$-qubit approach achieves the best scaling in sample complexity of $N_{\text{full comm}} = \text{Var}(\mathcal{S})_{O}/\epsilon_{\text{RMS}}^2$ samples (by directly sampling from $\kett{O}_\mathcal{P}$). 

For the $n$-qubit approach, first separate the $3n$ non-identity terms of Eq.~(\ref{eq:size_superop}) into three groups $\mathcal{S}= \frac{3n}{4}\mathbb{I} - \frac{1}{4}(\mathcal{S}_x + \mathcal{S}_y + \mathcal{S}_z)$ with:
\begin{equation*}
    \mathcal{S}_x = \sum_{j=1}^n X^{(j)}(\cdot)X^{(j)}, ~~ \mathcal{S}_y = \sum_{j=1}^n Y^{(j)}(\cdot)Y^{(j)}, ~~ \mathcal{S}_z = \sum_{j=1}^n Z^{(j)}(\cdot)Z^{(j)},
\end{equation*}
where $X^{(j)}, Y^{(j)}, Z^{(j)}$ are the weight-1 Pauli $X,Y,Z$ operators at position $j$. $\mathcal{S}_x, \mathcal{S}_y, \mathcal{S}_z$ can then be separately measured via the $n$-qubit approach by applying Algorithm~\ref{alg:n_qubit_sampler} with the operator bases:
\begin{equation*}
    \{H^{\otimes n}\ketbra{i}{j} H^{\otimes n}\}, ~~ \{(S^\dagger H)^{\otimes n}\ketbra{i}{j}(S^\dagger H)^{\otimes n}\}, ~~ \{\ketbra{i}{j}\}
\end{equation*}
respectively. Following the discussion of Appendix~\ref{app:n_qubit_algorithms}, this procedure is equivalent to separately estimating the observables $M^{\mathcal{S}_z}_\mathcal{C}$, $M^{\mathcal{S}_x}_\mathcal{C}$, and $M^{\mathcal{S}_y}_\mathcal{C}$ by sampling from $\kett{O}_\mathcal{C}$ in the bases:
\begin{equation*}
    \{H^{\otimes n}\ket{i}\otimes H^{\otimes n}\ket{j}\}, ~~ \{(S^\dagger H)^{\otimes n}\ket{i} \otimes (HS^\dagger)^{\otimes n}\ket{j}\}, ~~\{\ket{i}\otimes\ket{j}\}
\end{equation*}
respectively, where each group belongs to the `all simultaneously commute' case (Fig.~\ref{fig:comutation}(a)). We allocate samples to each group based on their weights, following Eq.~(\ref{eq:wds_alloc}).

Averaged over the uniform spherical measure, the ratio of the number of samples needed by the $n$-qubit approach to that of the $2n$-qubit approach (to estimate $\langle \mathcal{S} \rangle_{O(t)}$ fixed error $\epsilon$) is $\overline{N}_{\text{comm}}/\overline{N}_{\text{full comm}} = 3$. This implies that the sample complexity of the $2n$-qubit approach is smaller by a constant factor over the $n$-qubit approach on average. 

As one considers higher order moments $\langle \mathcal{S}^k \rangle_{O}$ -- relevant, for example, in the presence of dissipation \cite{schuster2023operator} and for the characterization of hydrodynamic behavior \cite{von2018operator} -- the number of terms and qubit-wise-commuting (QWC) groups generally increases exponentially with $k$, leading to an increase in the average performance ratio $\overline{N}_{\text{comm}}/\overline{N}_{\text{full comm}}$; c.f. Fig.~\ref{fig:size_numerics}.(a). Indeed, for $k \geq 2$, the number of qubit-wise-commuting groups of $\mathcal{S}^k$ begins to depend on system size $n$, leading $\overline{N}_{\text{comm}}/\overline{N}_{\text{full comm}}$ to no longer be constant in $n$. The $2n$-qubit approach is then significantly favorable compared to the $n$-qubit approach for large problem sizes. 

Fig.~\ref{fig:size_numerics}.(b) and (c) displays numerical experiments that illustrate this behaviour. We consider time-evolving the Heisenberg operator $O(t=0) = Z^{(1)}$ under the 1D TFIM Hamiltonian $H = \sum_{i=1}^n Z^{(i)} + \sum_{i=1}^{n-1} X^{(i)} X^{(i+1)}$ to different times, and plot the RMS error $\epsilon_{\text{RMS}}$ as a function of either $k$ (Fig.~\ref{fig:size_numerics}.(b)) or $n$ (Fig.~\ref{fig:size_numerics}.(c)) for a fixed number of $N=10^3$ shots. This is done for all three approaches: naive term-by-term estimation, full commutation representing the vectorization approach, and limited commutation grouping representing the $n$-qubit approach.

Finally, we remark that diagonal superoperators without a sparse operator-sum decomposition cannot be straightforwardly treated via the $n$-qubit approach (as we have done for the $\mathcal{S}$), unless it is possible to sample from the normalized distribution proportional to the coefficients $\abs{f_{kl}}$. This appears to be true for operators such as the right boundary superoperator: they admit a simple projector decomposition $\mathcal{D}(\cdot) = \frac{1}{2^n} \sum_k \lambda_k P_k \tr(P_k (\cdot))$ with easily computable diagonal entries $\lambda_k$ (so that $M^{\mathcal{D}}_\mathcal{P} = \sum_k \lambda_k \ketbra{k}{k}$), but has exponentially many terms in its operator-sum decomposition in the Pauli basis.

\section{Details on \texorpdfstring{$\kett{U}$}{|U>>} and algorithm for two-point correlators}

\subsection{Algorithms involving \texorpdfstring{$\kett{U}$}{|U>>}} \label{app:2pc_choi_u}
As discussed in Section~\ref{sec:2pc_main}, our approach based on the encoding $O(t) \rightarrow \kett{O(t)}$ also applies fully for the encoding $U \rightarrow \kett{U}$, allowing properties of $U$ to be extracted, using the same framework and algorithms we have developed for general Heisenberg operators $O(t)$. Table~\ref{tab:features_U} below -- obtained by specializing Table~\ref{tab:features_Ot} to the case where $O(t) = U$ -- summarizes this operator-state mapping. See also Appendix~\ref{app:n_qubit_algorithms} for $n$-qubit encodings of $U$, i.e. the application of Section~\ref{sec:n_qubit_rand} to $U$. 
\begin{table*}[ht]
\renewcommand{\arraystretch}{1.4}
\setlength{\tabcolsep}{6pt}
\begin{tabular}{P{4.5cm}|P{5.75cm}|P{5.7cm}}
\thickhline
\textbf{Properties of $U$} & \textbf{Properties of $\kett{U}_\mathcal{Q}$} & \textbf{Method of extraction} \\
\thickhline

Operator distribution in $\mathcal{Q}$ & Probability distribution in computational basis & Measurement in computational basis; Appendix~\ref{app:n_qubit_details} \\

\hline

Two-point correlators/transfer matrix elements, $\tr(U^\dagger P U Q)$ & Expectation value of $P \otimes Q$, $\leftindex_{\mathcal{Q}}{\bbra{U} P \otimes Q \kett{U}}_\mathcal{Q}$ & Measurement in eigenbasis of $P \otimes Q$ and evaluation of Monte Carlo estimator; Section~\ref{sec:2pc_expt}, \cite{caro2024learning} \\

\hline

Operator amplitudes $\tr(Q_k U)/2^n$, inner products $\tr(U_1 U_2)/2^n$ & Amplitude in computational basis, inner product between states & Amplitude/inner-product estimation \\

\hline

Operator entanglement \cite{zanardi2001entanglement} & Subsystem entanglement entropy of $\kett{U}_\mathcal{Q}$ & Multivariate trace estimation, classical shadows \cite{mcginley2022quantifying} \\

\thickhline
\end{tabular}
\caption{\textbf{Summary of properties of $U$ that can be computed under our framework.} Summary of relevant properties of the unitary $U$, corresponding properties encoded in $\kett{U}_\mathcal{Q}$ as a result of the operator-state mapping, and methods for their extraction on a quantum computer. The analogue of Table~\ref{tab:features_Ot} for $\kett{U}_\mathcal{Q}$.}
\label{tab:features_U}
\end{table*}

As mentioned in Appendix~\ref{app:relation_existing}, $\kett{U}$ can also be viewed as the Choi state of the unitary $U$. The Choi state of an $n$-qubit quantum channel $\mathcal{N}$ is defined as the $2n$-qubit quantum state $J(\mathcal{N}) \equiv (\mathbb{I}^{\otimes n} \otimes \mathcal{N})(\kettbbra{\mathbb{I}^{\otimes n}}{\mathbb{I}^{\otimes n}})$. Setting $\mathcal{N}(\cdot) = \mathcal{U}(\cdot) \equiv U^\dagger (\cdot) U$ and making use of the ricochet identity Eq.~(\ref{eq:ricochet}) then yields $J(\mathcal{U}) = \kett{U}_\mathcal{C} \leftindex_{\mathcal{C}}{\bbra{U}}$. The extraction of the properties of the channel $\mathcal{N}$ from $J(\mathcal{N})$ is studied in works such as \cite{caro2022learning,mcginley2022quantifying,kunjummen2023shadow,levy2024classical}.

\subsection{Algorithms for estimating two-point correlators} \label{app:2pc_amplitude}

Here, we describe a method to estimate $c(O(t),O'(t'))$ and $c_k$ using $2n$ qubits. As discussed in Section~\ref{sec:2pc_main}, a naive approach based on the Hadamard test requires access to controlled time evolution unitaries. The algorithm below -- exploiting the unitality of Heisenberg time evolution -- estimates $c(O(t), O'(t'))$ using only access to non-controlled time evolution unitaries. This `interferometric' approach is related to the \textit{Pseudo-Choi state} recently introduced in the context of Hamiltonian learning \cite{castaneda2025hamiltonian}. The circuit diagram is shown in Fig.~\ref{fig:2pc_cirq}.

Define $c(V)$ as the unitary $V \in \mathcal{L}(H')$ controlled on the ancilla qubit, $c(V) \equiv \mathbb{I} \otimes \ketbra{0}{0} + V \otimes \ketbra{1}{1}$. Firstly, the $2n+1$ qubit state:
\begin{equation} \label{eq:pseudo_choi}
    \ket{\Psi(O)} \equiv \frac{1}{\sqrt{2}}\!\left(\kett{O' U' O(t) U'^\dagger}_{\mathcal C} \otimes \ket{1} + \kett{\mathbb{I}}_{\mathcal C} \otimes \ket{0}\right)
\end{equation}
is prepared using the following sequence of unitaries, as shown in the circuit diagram Fig.~\ref{fig:2pc_cirq}:
\begin{align*}
    \ket{0}^{\otimes(2n+1)}
    \xrightarrow{\ (R_{\mathcal P \rightarrow \mathcal C}^{\otimes n} \otimes H)\ } &
    \kett{\mathbb{I}}_{\mathcal C} \otimes \ket{+}
    \\
     \xrightarrow{\ c(O \otimes \mathbb{I})\ } &
    \frac{1}{\sqrt{2}} \left(\kett{O}_{\mathcal C} \otimes \ket{1} + \kett{\mathbb{I}}_{\mathcal C} \otimes \ket{0}\right)
    \\
     \xrightarrow{\ (U^\dagger \otimes U^{\mathsf T}) \otimes \mathbb{I}\ } &
    \frac{1}{\sqrt{2}} \left(\kett{O(t)}_{\mathcal C} \otimes \ket{1} + \kett{\mathbb{I}}_{\mathcal C} \otimes \ket{0}\right)
    \\
    \xrightarrow{\ (U' \otimes U'^* ) \otimes \mathbb{I}\ } &
    \frac{1}{\sqrt{2}} \left(\kett{U' O(t) U'^\dagger}_{\mathcal C} \otimes \ket{1} + \kett{U' U'^\dagger}_{\mathcal C} \otimes \ket{0}\right) 
    \\
    = & \frac{1}{\sqrt{2}} \left(\kett{U' O(t) U'^\dagger}_{\mathcal C} \otimes \ket{1} + \kett{\mathbb{I}}_{\mathcal C} \otimes \ket{0}\right) 
    \\
     \xrightarrow{\ c(O' \otimes \mathbb{I})\ } &
    \frac{1}{\sqrt{2}} \left(\kett{O' U' O(t) U'^\dagger}_{\mathcal C} \otimes \ket{1} + \kett{\mathbb{I}}_{\mathcal C} \otimes \ket{0}\right) = \ket{\Psi(O)}.
\end{align*}
Subsequently, measuring the ancilla qubit in the $\{\ket{+},\ket{-}\}$ basis to compute the expectation value of the Pauli $X$ operator yields the desired result as:
\begin{align*}
    \bra{\Psi(O)} \mathbb{I} \otimes X \ket{\Psi(O)} &= \frac{1}{2} \left(\bbra{O' U' O(t) U'^\dagger} \otimes \bra{0} + \bbra{\mathbb{I}} \otimes \bra{1} \right) \left(\kett{O' U' O(t) U'^\dagger} \otimes \ket{1} + \kett{\mathbb{I}} \otimes \ket{0} \right) \\
    &= \frac{1}{2} \left(\bbrakett{O' U' O(t) U'^\dagger}{\mathbb{I}} + \bbrakett{\mathbb{I}}{O' U' O(t) U'^\dagger} \right) \\
    &= \frac{1}{2^n} \tr(O'(t')^\dagger O(t)).
\end{align*}

This procedure enables $c(O(t),O'(t'))$ to be estimated to $\epsilon$ accuracy with $\bigo{1/\epsilon^2}$ samples of the $2n+1$ qubit state $\ket{\Psi(O)}$. It can be modified to compute the amplitude $\tr(P_k O(t))/2^n$ by removing the unitaries $U'$ and $U'^*$, and substituting the controlled-$O'$ with a controlled-$P_k$.

Note that it can also be straightforwardly modified to estimate Pauli probabilities $p_k = c(O(t),O'(t'))^2$; by modifying the final empirical averaging step to multiply together the results of two independent measurements on the ancilla, an unbiased estimator of $p_k$ is obtained. This avoids the need to implement a controlled-SWAP operation on $4n+1$ qubits via the SWAP test on $\kett{O(t)} \otimes \kett{O'(t')}$, but requires stricter access to $O(t)$ (querying time evolution unitaries), and doubles the number of samples needed (due to the $2\alpha-2$ samples required for each term in the empirical average, as opposed to only $\alpha-1$).

Finally, an additional algorithm for computing overlaps (and therefore fidelities) via the $n$-qubit encoding is described in Appendix~\ref{app:n_qubit_details}; it takes the same number of samples, but requires double the circuit depth (due to having to implement $U$ and $U^\dagger$ sequentially).

\section{Details on algorithms for Operator Stabilizer Entropies} \label{app:ose}

We prove Theorem~\ref{theorem:ose} by showing that the algorithm described in Section~\ref{sec:ose} leads to an estimate of $P^{(\alpha)}(O(t))$ to additive error $\epsilon$ and success probability at least $1-\delta$, i.e.:
\begin{equation} \label{eq:eps_acc_ose}
    \Pr\left(\abs{\hat{P}^{(\alpha)} - P^{(\alpha)}} \geq \epsilon\right) \leq \delta,
\end{equation}
using $M+N \geq \frac{2\alpha\log(4/\delta)}{\epsilon^2}$ samples of $\kett{O(t)}_\mathcal{P}$. 
\begin{proof}
Begin by defining the following:
\begin{itemize}
    \item $Y \equiv \frac{1}{M} \sum_{i=1}^M p^{\alpha-1}_{k_i}$ is an unbiased estimator of $P^{(\alpha)}$, without accounting for errors originating from estimating $p^{\alpha-1}_{k_i}$'s.

    \item Recall that in the SWAP test, the fidelity $p_{k_i} = \abs{\!\! ~_{\mathcal{P}} \langle \! \langle P_{k_i} \kett{O(t)}_\mathcal{P}}^2$ is proportional to expectation value of the $Z$ Pauli operator on an ancilla qubit. We will estimate $p_{k_i}$ by performing $\alpha-1$ independent measurements on the ancilla qubit, multiplying the results together, and taking the empirical average. That is, let $z^{l}_{j}(k_i) \in\{0,1\}$ denote the (shifted) measurement outcome of the $j^{\text{th}}$ sample of the $l^{\text{th}}$ factor in the multiplication, such that $\mathbb{E}[z^{l}_{j}(k_i)] = p_{k_i}$. Defining:
        \begin{equation} \label{eq:product_unbiasing}
        \hat{p}^{\alpha-1}_{k_i} \equiv \frac{1}{m_i} \sum_{j=1}^{m_i} z^{1}_{j}(k_i) ... z^{\alpha-1}_{j}(k_i),
    \end{equation}
    it is an unbiased estimator of $p^{\alpha-1}_{k_i}$.
    
    \item $\hat{P}^{(\alpha)} \equiv \frac{1}{M} \sum_{i=1}^M \hat{p}^{\alpha-1}_{k_i}$ is an unbiased estimator of $P^{(\alpha)}(O(t))$, \textit{including} errors originating from estimating $p^{\alpha-1}_{k_i}$'s.
\end{itemize}

The union bound yields:
\begin{equation}\label{eq:union_ose}
      \Pr\left(\abs{\hat{P}^{(\alpha)} - P^{(\alpha)}} \geq \epsilon\right)
      \leq \Pr\left(\abs{\hat{P}^{(\alpha)} - Y} \geq \frac{\epsilon}{2} \right) + \Pr\left(\abs{ Y - P^{(\alpha)}} \geq \frac{\epsilon}{2} \right),
\end{equation}
where the first term is due to errors in approximating the fidelities and the second term is due to empirical averaging.

Begin by bounding the second term. Applying Hoeffding's inequality with $0 \leq p^{\alpha-1}_{k_i} \leq 1$, we conclude that taking $M \geq 2\log(4/\delta)/\epsilon^2$ samples suffices to upper-bound the second term by $\delta/2$. Similarly for the first term, applying Hoeffding's inequality with $0 \leq z^{1}_{j}(k_i) ... z^{\alpha-1}_{j}(k_i) \leq 1$, we conclude that taking $m_i = m \geq 2\log(4/\delta)/(M \epsilon^2)$ suffices to upper-bound it by $\delta/2$. Combining the two bounds via Eq.~(\ref{eq:union_ose}) yields Eq.~(\ref{eq:eps_acc_ose}). The total number of samples/measurements needed for the procedure is thus:
\begin{equation}
    M + (\alpha-1)\sum_{i=1}^M m_i \geq \frac{2 \alpha \log(4/\delta)}{\epsilon^2}.
\end{equation}

Taking the logarithm of the estimated value of $P^{(\alpha)}$ and dividing by $1-\alpha$ yields the OSE $M^{(\alpha)}$. Performing a first-order Taylor expansion of $\frac{1}{1-\alpha} \log(P^{(\alpha)} + \epsilon)$, we find that the constant error $\epsilon$ translates to an error of $\left(\frac{e^{-M^{(\alpha)}(1-\alpha)}}{1-\alpha}\right) \epsilon$ in $M^{(\alpha)}$.
\end{proof}

The estimator of $p_{k_i}^{\alpha-1}$ above, based on applying a SWAP test between the $4n+1$ qubits $\kett{O(t)}_\mathcal{P} \otimes \kett{P_k}_\mathcal{P} \otimes \ket{+}$, requires a controlled unitary involving $4n+1$ qubits. It can be replaced with other estimators of $p_{k_i}^{\alpha-1}$ such as those discussed throughout our work, which exhibit tradeoffs between required quantum memory, sample complexity, or access to $O(t)$. Several possible replacements with different requirements are:
\begin{itemize}
    \item Overlap/amplitude estimation algorithm of Appendix~\ref{app:2pc_amplitude} with the state $\ket{\Psi(O)}$ (Eq.~\ref{eq:pseudo_choi}) involving an ancilla, modified for $p_{k_i}^{\alpha-1}$ (in the sense of Eq.~\ref{eq:product_unbiasing}); the number of qubits can be reduced to $2n+1$, at the expense of requiring the more stringent access to the state-preparation unitary $U^\dagger \otimes U^\mathsf{T}$, and doubled number of samples (from squaring the amplitudes).

    \item Amplitude estimation algorithm of Section~\ref{sec:2pc_main} based on directly encoding the unitary $U$ as $\kett{U}$ or $\rho(U)$. For instance, with the $n$-qubit encoding $\rho(U)$, the number of qubits can be reduced to $n$, at the expense of a doubled circuit depth (due to having to implement $U$ and $U^\dagger$ sequentially) and doubled number of samples.

    \item Overlap estimation via Direct Fidelity estimation \cite{flammia2011direct} applied to the vectorized state $\kett{O(t)}$, again modified to compute $p_{k_i}^{\alpha-1}$. Compared to the above approaches, this involves Bell-sampling from the $2n$-qubit state $\kett{O}^{\otimes 2}$, which only involves applying the (transversal) unitary Eq.~(\ref{eq:rotation_cp}), and evaluating Pauli expectation values on $\kett{O}$. 
\end{itemize}

\section{Details on \texorpdfstring{$n$}{n}-qubit approach} \label{app:n_qubit_details}

Here, we detail the $n$-qubit approach discussed in Section~\ref{sec:n_qubit_rand}, which can be viewed as a generalization of certain $n$-qubit approaches for the computation of OTOCs \cite{swingle2016measuring,mi2021information,schuster2022many,schuster2023operator,google2025observation,algorithmiq}. 

We begin by restating the protocol of Theorem~\ref{theorem:1_copy_sampling} that enables sampling from the operator distribution:
\begin{equation} \label{eq:app_op_dist}
    p(i,j) = \frac{1}{2^n}\abs{\bra{\psi_j}V\ket{\phi_i}}^2,
\end{equation}
where $V$ is an implementable unitary (which we mainly regard as the Heisenberg operator $O(t) = U^\dagger O U$, or the propagator $U$ itself), and $\{\ket{\psi_i}\}$ (with $\ket{\psi_i} \equiv  U_\psi \ket{i}$) and $\{\ket{\phi_i}\}$ (with $\ket{\phi_i} \equiv U_\phi \ket{i}$) are arbitrary orthonormal bases efficiently reachable from the computational basis via the basis change unitaries $ U_\psi$ and $U_\phi$:

\vspace{5pt}
\begin{algorithm}[H]
\linespread{1.15}\selectfont
\caption{Sampling from the operator distribution of $V$ in the basis $\{\ketbra{\phi_i}{\psi_j}\}$} \label{alg:n_qubit_sampler}
\KwData{Initial state $\ket{0}^{\otimes n}$; unitaries $V$, $U_\phi$, $U_\psi^\dagger$}
\KwResult{A tuple $(i,j)$ with $p(i,j)= |\bra{\psi_j} V\ket{\phi_i}|^2/2^n$, where $\ket{\phi_i} \equiv U_\phi \ket{i}$ and $\ket{\psi_j} \equiv U_\psi \ket{j}$} 
  \vspace*{2pt}
  \Indp Classically sample $i \sim \mathrm{Unif}(\{0,1\}^n)$ and record outcome $i$\;
  Prepare $\ket{i}$\;
  Apply $U_\phi$, $V$, $U^\dagger_\psi$ sequentially: $\ket{\chi} \gets U^\dagger_\psi V U_\phi\ket{i} $\;
  Measure $\ket{\chi}$ in the computational basis and record outcome $j$\;
  
  \textbf{Return} the tuple $(i,j)$\;
\end{algorithm}
\vspace{5pt}

Sampling from $V=O(t)$ in the basis $\mathcal{C}$ corresponds to sampling a (column, row) tuple $(i,j)$ from the matrix representation of $O(t)$ in the standard computational basis, with probabilities $p(i,j) = \abs{\bra{j}O(t)\ket{i}}^2/2^n$. In this case, Algorithm~\ref{alg:n_qubit_sampler} can be understood as first randomly selecting a column index $i$ for the column $\ket{c} \equiv O(t) \ket{i}$ (normalized due to the unitarity of $O(t)$), followed by sampling a row index $j$ according to the Born rule applied to $\ket{c}$. The equivalence of this protocol to sampling from $\kett{O(t)}_\mathcal{C}$ in the separable basis $\{\ket{\psi_i} \otimes \ket{\phi^*_j}\}$ stems from the equality $\bra{\psi_i}O(t)\ket{\phi_j} = \langle \psi_i ,\phi^*_j \kett{O(t)}_\mathcal{C}$, which can also be seen via graphical notation, shown in Fig.~\ref{fig:penrose_1copy_sampling}.

\begin{figure}[h]
    \centering
    \includegraphics[width=.6\linewidth]{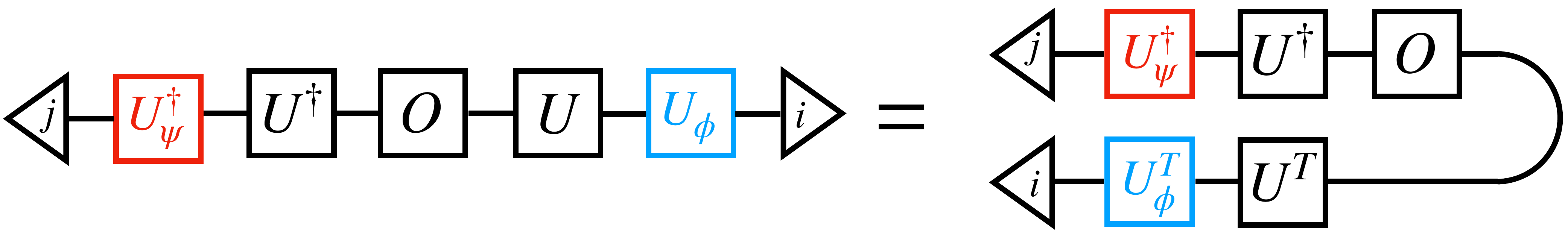}
    \caption{Diagram representing circuit for operator sampling with $n$ qubits, in a separable basis $\{\ket{\psi_i} \otimes \ket{\phi^*_j}\}$.}
    \label{fig:penrose_1copy_sampling}
\end{figure}

We remark on the fact that given samples of $\kett{O(t)}$, the basis in which we measure this state can be changed by simply appending unitaries to it before measurement, while in contrast, given samples of $\rho(V)$, the basis associated with the $\{\ket{\phi_k}\}$ part (equivalent to the $\mathcal{H}_R$ half of $\kett{O(t)}$) cannot be changed (because they are already fixed during the random initial state preparation phase).

Note that the Heisenberg-picture specific features that simplify implementations of dynamics -- discussed in Appendix~\ref{app:time_evolution}, such as gate cancellation due to lightcone structure and the ability to fast-forward initial Clifford unitaries -- also applies to this situation, which has been exploited for implementation in e.g. \cite{google2025observation}.

\subsection{Representation as classical-quantum state}
Generally, Algorithm~\ref{alg:n_qubit_sampler} can be viewed as encoding the operator distribution $p(i,j)$ of Eq.~(\ref{eq:app_op_dist}) as a classical-quantum state consisting of $n$ qubits and $n$ classical bits, of the form:
\begin{equation} \label{eq:classical_quantum_state}
    \rho(V, \{\ketbra{\phi_i}{\psi_j}\}_{i,j}) \equiv \frac{1}{2^n} \sum_{k=1}^{2^n}  \underbrace{U^\dagger_\psi V \ketbra{\phi_k}{\phi_k} V^\dagger U_\psi}_\text{quantum register} ~\otimes \!\!\!\!\!\!\! \underbrace{\ketbra{k}{k} \vphantom{U^\dagger_\psi}}_\text{classical register} \!\!\!\!\!\!\!\!\!\!\! .
\end{equation}
The form of the above state is obtained by first representing the initial random sampling and conditioned state preparation as the classical-quantum state $\frac{1}{2^n} \sum_k \ketbra{\phi_k}{\phi_k} \otimes  \ketbra{k}{k}$, applying the unitary $V \otimes \mathbb{I}$ to yield $\frac{1}{2^n} \sum_k V\ketbra{\phi_k}{\phi_k}V^\dagger \otimes  \ketbra{k}{k}$, and finally applying the basis rotation unitary $U_\psi^\dagger \otimes \mathbb{I}$. We just write $\rho(V)$ for brevity when there is no ambiguity or the choice of basis $\{\ketbra{\phi_i}{\psi_j}\}_{i,j})$ is irrelevant.

Measurements on this state in the basis $\{\ket{i} \otimes \ket{j}\}$ yields $\tr( \ketbra{j}{j}\otimes\ketbra{i}{i}\rho(V) ) = p(i,j)$, recovering the measurement statistics of Algorithm~\ref{alg:n_qubit_sampler}. Similarly, given operators $A$ and $B$ diagonal in the bases $\{\ket{\psi_i}\}$ and $\{\ket{\phi_j}\}$ respectively, we have:
\begin{equation*}
    \tr((U^\dagger_\psi A U_\psi \otimes U^\dagger_\phi B U_\phi) \rho(V, \{\ketbra{\phi_i}{\psi_j}\}_{i,j})) = \tr(V^\dagger A V B)/2^n,
\end{equation*}
where $U^\dagger_\psi A U_\psi$ and $U^\dagger_\phi B U_\phi$ are diagonal in the computational basis.

The relationship between Algorithm~\ref{alg:n_qubit_sampler} and the $2n$-qubit vectorization approach can be made more explicit via the state $\rho(V)$ as follows. Through the vectorization approach, the state that yields the statistics $p(i,j)$ when measured in the basis $\{\ket{i} \otimes \ket{j}\}$ -- which can be identified diagrammatically via Fig.~\ref{fig:penrose_1copy_sampling} -- is $\ket{W} \equiv (U_\psi^\dagger \otimes U_\phi^{T}) \kett{V}_\mathcal{C}$. Firstly, observe that given any orthonormal basis $\{\ket{\phi_j}\}$, inserting a resolution of identity and using $\kett{\ketbra{a}{b}}_\mathcal{C} = \ket{a} \otimes (\bra{b})^\mathsf{T} = \ket{a} \otimes \ket{b^*}$ yields $\left(\sum_j \ketbra{\phi_j}{\phi_j} \otimes \mathbb{I} \right) \kett{\mathbb{I}}_\mathcal{C} = \sum_k \ket{\phi_j} \otimes \ket{\phi_j^*}$, and therefore:
\begin{equation} \label{eq:app_id_expansion}
    \kett{\mathbb{I}}_\mathcal{C} \leftindex_{\mathcal{C}}{\bbra{\mathbb{I}}} = \frac{1}{2^n}\sum_{k=1}^{2^n} \ketbra{\phi_k}{\phi_k} \otimes \ketbra{\phi_k^*}{\phi_k^*} + \frac{1}{2^n} \sum_{k \neq l} \ketbra{\phi_k}{\phi_l} \otimes \ketbra{\phi_k^*}{\phi_l^*}.
\end{equation}
$\ketbra{W}{W}$ can subsequently be written as:
\begin{align}
    \ketbra{W}{W} &= \frac{1}{2^n} \sum_{k=1}^{2^n} U_\psi^\dagger V \ketbra{\phi_k}{\phi_k} V^\dagger U_\psi \otimes U_\phi^{T} \ketbra{\phi^*_k}{\phi^*_k} U_\phi^{*} +  \frac{1}{2^n} \sum_{k \neq l} U_\psi^\dagger V \ketbra{\phi_k}{\phi_l} V^\dagger U_\psi \otimes U_\phi^{T} \ketbra{\phi^*_k}{\phi^*_l} U_\phi^{*} \\
    &= \underbrace{\frac{1}{2^n} \sum_{k=1}^{2^n} U_\psi^\dagger V \ketbra{\phi_k}{\phi_k} V^\dagger U_\psi \otimes \ketbra{k}{k}}_\text{$=\rho(V)$} +  \underbrace{\frac{1}{2^n} \sum_{k \neq l} U_\psi^\dagger V \ketbra{\phi_k}{\phi_l} V^\dagger U_\psi \otimes \ketbra{k}{l}}_\text{$\equiv \sigma_{\text{CQ}}$},
\end{align}
where we used Eq.~(\ref{eq:app_id_expansion}) in the first equality and $U_\phi^* \ket{i} = \ket{\phi^*_i}$ in the second equality, finding that $\ketbra{W}{W}$ can be written as a sum of the classical-quantum state $\rho(V)$ and another `off-diagonal' state $\sigma_{CQ}$. Equivalently, under the action of the completely dephasing map on $\mathcal{H}_R$, $\Delta_R(\cdot) \equiv \sum_{i=1}^{2^n} (\mathbb{I} \otimes \ketbra{i}{i}) (\cdot)(\mathbb{I} \otimes \ketbra{i}{i})$, we recover $\Delta_R(\ketbra{W}{W}) = \rho(V)$. 

When measured in the basis $\{\ket{i} \otimes \ket{j}\}$, $\tr(\ketbra{i}{i}\otimes\ketbra{j}{j} \sigma_{CQ})$ vanishes; $\rho(V)$ therefore fully encodes the probability distribution $\tr(\ketbra{j}{j}\otimes\ketbra{i}{i} \ketbra{W}{W}) = p(i,j)$, which can be prepared via Algorithm~\ref{alg:n_qubit_sampler}. In contrast, when $\kett{O(t)}_\mathcal{C}$ involves measurements in a basis $\{\Pi_k\}$ entangled across $\mathcal{H}_R$ and $\mathcal{H}_L$, this no longer suffices since $\tr(\Pi_k \sigma_{CQ})$ is generally non-vanishing.

Compared to the $2n$-qubit vectorization encoding, Algorithm~\ref{alg:n_qubit_sampler} requires only $n$ qubits, at the expense of a doubled circuit depth (due to needing to execute $U$ and $U^\dagger$ sequentially), and requires introducing classical randomization during state preparation. Note also that the $2n$-qubit approach requires queries to $U^\dagger$ and $U^\mathsf{T}$ (if they are to be implemented in parallel; c.f. Appendix~\ref{app:avoiding_transpose}), while the $n$-qubit approach requires queries to $U^\dagger$ and $U$. Similarly, to sample from the operator basis $\{\ketbra{\psi_j}{\phi_i}\}$, the $2n$-qubit approach requires querying $U_\psi^\dagger \otimes U^\mathsf{T}_\phi$, while the $n$-qubit approach requires querying $U_\phi$ and $U^\dagger_\psi$. 

\subsection{\texorpdfstring{$n$}{n}-qubit analogues of main results} \label{app:n_qubit_algorithms}

Due to Theorem~\ref{theorem:1_copy_sampling}/Algorithm~\ref{alg:n_qubit_sampler}, the sampling part of the estimation of quantities $F(O(t))$ or $F(U)$ via the Monte Carlo estimator Eq.~(\ref{eq:computable_general_form}), restricted to operator bases of the form $\{\ketbra{\psi_i}{\phi_j}\}$ (or equivalently, that do not require sampling from $\kett{O(t)}_\mathcal{C}$ or $\kett{U}_\mathcal{C}$ in a basis that entangles $\mathcal{H}_L$ and $\mathcal{H}_R$), can be achieved using only the $n$-qubit quantum states $\rho(O)$ or $\rho(U)$. While the resources required to draw each sample exhibit a memory-depth tradeoff compared to the vectorization approach, these tasks take the same amount of samples, since they are based on the same estimators. In the following, we re-state these results for completeness, following the presentation of Sections~\ref{sec:superop}
 and \ref{sec:2pc_main}.
 
\begin{center}
    ~ \\ \textbf{Expectation values of self-adjoint superoperators/OTOCs via $\rho(O)$} ~\\
\end{center}

\begin{itemize}
    \item {Estimating a single OTOC:} to estimate $\langle \mathcal{A}_{PQ} \rangle_{O} = \tr(OPOQ)/2^n$, the $n$-qubit approach requires $N=\bigo{1/\epsilon_{\text{RMS}}^2}$ samples of $\rho(O)$ (in any operator basis).

    \item {Estimating a single superoperator expectation value $\langle \mathcal{A} \rangle_{O}$ with $M$ terms in its decomposition:} the $n$-qubit approach can estimate $\langle \mathcal{A} \rangle_{O}$ term-by-term via Eq.~(\ref{eq:superop_term_by_term}). This requires $N_{\text{naive}} = \left( \sum_{kl} \abs{f_{kl}} \sqrt{\text{Var}(\mathcal{A}_{kl})_{O(t)}} \right)^2/\epsilon_{\text{RMS}}^2$ samples of $\rho(O)$ (in any operator basis).

    If, additionally, terms in $\mathcal{A}$ can be grouped such that within each group, all superoperators $\mathcal{A}_i = P_i (\cdot) Q_i^\dagger$ satisfy $[P_i,P_j]=[Q_i,Q_j]=0$, the number of samples can be reduced to $N_{\text{comm}}$; see Eq.~(\ref{eq:wds_alloc}).

    \item {Simultaneously estimating many OTOCs:} the $n$-qubit approach can also exploit commutation between superoperators for their simultaneous estimation. However, it is restricted to the `all simultaneously commute' case (Fig.~\ref{fig:comutation}.(a)) where all superoperators $\mathcal{A}_i = P_i (\cdot) Q_i^\dagger$ satisfy $[P_i,P_j]=[Q_i,Q_j]=0$ (Eq.~(\ref{eq:lemma_all_commuting}) of Lemma~\ref{app:commutation_condition}). This yields Theorem~\ref{theorem:n_qubit_main}.(a), which requires $N=\bigo{1/\epsilon_{\text{RMS}}^2}$ samples of $\rho(O, \{\ketbra{\phi_i}{\psi_j}\}_{i,j})$, where $\{\ket{\phi_i}\}_{i}$ is the common eigenbasis of $\{P_i\}_i$ and $\{\ket{\psi_i}\}_{i}$ is the common eigenbasis of $\{Q_i\}_i$.

    Notably, as the Pauli basis is not of the form $\{\ketbra{\phi_i}{\psi_j}\}$ (which is more generally due to Lemma~\ref{lemma:anticomm_basis}), we do not immediately have an $n$-qubit analogue of Corollary~\ref{corr:diagonal_superop}, where terms are allowed to simultaneously anticommute, i.e. $\{P_i, P_j\} = \{Q_i, Q_j\} = 0$.
\end{itemize}

\begin{center}
    ~ \\ \textbf{Two-point correlators via $\rho(U)$} ~\\
\end{center}

Alternatively, we can encode the unitary $U$ using $n$ qubits by simply setting $V=U$ in Algorithm~\ref{alg:n_qubit_sampler}. This enables sampling from the probability distribution $p(i,j)= |\bra{\psi_j} U \ket{\phi_i}|^2/2^n$ (or equivalently, encode this probability distribution as the state $\rho(U, \{\ketbra{\phi_i}{\psi_j}\}_{i,j})$). Mirroring the discussion above, we obtain two-point correlator analogues of our results in Section~\ref{sec:2pc_main} using $n$ qubits.
\vspace{5pt}

\begin{itemize}
    \item {Estimating a single two-point correlator:} to estimate $\tr(P Q(t))/2^n$, the $n$-qubit approach requires $N=\bigo{1/\epsilon_{\text{RMS}}^2}$ samples of $\rho(U)$ (in any operator basis).

    \item {Estimating a linear combination of two-point correlators:} the quantity $\langle \mathcal{A} \rangle_{U} = \sum_{kl} f_{kl} \tr(P_k Q_l(t))$, where $\mathcal{A}$ is a superoperator with $M$ terms, can be estimated term-by-term using $N_{\text{naive}} = \left( \sum_{kl} \abs{f_{kl}} \sqrt{\text{Var}(\mathcal{A}_{kl})_{U}} \right)^2/\epsilon_{\text{RMS}}^2$ samples of $\rho(U)$ (in any operator basis).

    If, additionally, terms in $\mathcal{A}$ can be grouped such that within each group, all superoperators $\mathcal{A}_i = P_i (\cdot) Q_i^\dagger$ satisfy $[P_i,P_j]=[Q_i,Q_j]=0$, the number of samples can be reduced to $N_{\text{comm}}$; see Eq.~(\ref{eq:wds_alloc}).

    \item {Simultaneously estimating many two-point correlators:} similarly, the $n$-qubit approach can exploit commutation between superoperators for simultaneous estimation for the `all simultaneously commute' case. This yields Theorem~\ref{theorem:n_qubit_main}.(b), which requires $N=\bigo{1/\epsilon_{\text{RMS}}^2}$ samples of $\rho(U, \{\ketbra{\phi_i}{\psi_j}\}_{i,j})$, where $\{\ket{\phi_i}\}_{i}$ is the common eigenbasis of $\{P_i\}_i$ and $\{\ket{\psi_i}\}_{i}$ is the common eigenbasis of $\{Q_i\}_i$.
\end{itemize}

\section{Extension to finite-temperature quantities} \label{app:finite_temp}

In this appendix, we briefly discuss how our framework can be extended to the estimation of correlators at finite temperature. We focus on protocols that have access to $2n$ qubits, are able to query $U^\dagger$ and $U$ or $U^\mathsf{T}$, and are able to implement the imaginary time propagator $\propto e^{-\beta H}$ (to be defined later), which yield analogues of our results of Section~\ref{sec:superop}, Appendix~\ref{app:superop}, and Appendix~\ref{app:2pc_amplitude}. Nonetheless, most of our developed formalism holds, illustrating the generality of our framework. For instance, they can be extended to yield $n$-qubit algorithms via the discussion of Section~\ref{sec:n_qubit_rand} and Appendix~\ref{app:n_qubit_details}. The resulting protocols generalize existing methods for the computation of regularized OTOCs \cite{swingle2018resilience,sundar2022proposal,green2022experimental}.

The (infinite-temperature) OTOCs that we have mainly considered can be generalized to finite temperatures -- also called thermally regulated OTOCs \cite{viswanath1994recursion,maldacena2016bound,xu2024scrambling} -- as:
\begin{equation} \label{eq:t_otoc_general}
    \text{OTOC}_\beta(O(t),A,B; \alpha_1,\alpha_2,\alpha_3,\alpha_4) \equiv \frac{1}{Z}\tr(\rho_\beta^{\alpha_1} O(t) \rho_\beta^{\alpha_2} A \rho_\beta^{\alpha_3} O(t) \rho_\beta^{\alpha_4} B),
\end{equation}
where $O(t)=e^{iHt} O e^{-iHt}$, $Z = \tr(e^{-\beta H})$, and $\rho_\beta \equiv e^{- \beta H}$ is proportional to the thermal density matrix at inverse temperature $\beta$. $\alpha_i \in [0,1]$ can be chosen in many ways, as long as $\alpha_1 + \alpha_2 + \alpha_3 + \alpha_4 = 1$. For example, the case where only a single term is 1 and $\beta=0$ corresponds to the OTOCs discussed in the main text. In the following, we consider choices of $\alpha_i$ where any two of the four $\alpha_i$'s are equal to $1/2$.

Similarly, a finite temperature generalization of the two-point correlation function between Heisenberg operators $O_1,O_2$ is:
\begin{equation} \label{eq:2ptcorr_general}
    C_\beta(O_1, O_2) \equiv \frac{1}{Z} \tr(\rho_{\beta/2} O_1 \rho_{\beta/2} O_2),
\end{equation}
also known as the Wightman inner product \cite{parker2019universal}, appearing in e.g. finite temperature dynamical structure factors \cite{baez2020dynamical}. 

Given access to samples of the state $\kett{O(t)}$ -- where the basis of vectorization is taken to be $\mathcal{C}$ and suppressed in this section -- and the ability to apply the normalized imaginary time propagator:
\begin{equation} \label{eq:normalized_ite}
    \ket{\psi} \rightarrow \frac{e^{-\beta H/2}}{\sqrt{\bra{\psi} e^{-\beta H} \ket{\psi}}}\ket{\psi} \equiv A^{\beta/2}_\psi(H) \ket{\psi}
\end{equation}
for arbitrary states $\ket{\psi}$, we describe here how the finite-temperature quantities Eq.~(\ref{eq:t_otoc_general}) and Eq.~(\ref{eq:2ptcorr_general}) can be estimated using our framework. The imaginary-time transformation Eq.~(\ref{eq:normalized_ite}) -- involving the application of the non-unitary operator $A^{\beta/2}_\psi(H)$ -- can be implemented on a quantum computer probabilistically \cite{leadbeater2304non,xie2024probabilistic} in conjunction with amplitude amplification, or variationally \cite{mcardle2019variational}.

Firstly, consider $A^{\beta/2}_{O(t)}(H \otimes \mathbb{I})$, the action of the imaginary time propagator on $\kett{O(t)}$ with the Hamiltonian $H \otimes \mathbb{I}$. Through the ricochet identity Eq.~(\ref{eq:ricochet}), the normalization factor in the denominator resolves to:
\begin{align*}
    \bbra{O(t)} e^{-\beta H} \otimes \mathbb{I} \kett{O(t)}^{1/2} &= \tr( O(t)^2  e^{-\beta H})^{1/2} \\
    &= \tr( O^2 e^{-iHt} e^{-\beta H} e^{iHt})^{1/2} \\
    &= \tr( O^2 e^{-\beta H})^{1/2} \\
    &= \sqrt{Z},
\end{align*}
implying that $A^{\beta/2}_{O(t)}(H \otimes \mathbb{I}) = \rho_{\beta/2}/\sqrt{Z} \otimes \mathbb{I}$, where we made use of the cyclicity of the trace, the Hermiticity of $O(t)$, that the imaginary and real time propagators due to the same Hamiltonian commute, and $O^2 = \mathbb{I}$. Similarly $A^{\beta/2}_{O(t)}(\mathbb{I} \otimes H^\mathsf{T}) = \mathbb{I} \otimes \rho_{\beta/2}^\mathsf{T}/\sqrt{Z}$. States resulting from applications of $A^{\beta/2}_{O(t)}(H \otimes \mathbb{I})$ and/or $A^{\beta/2}_{O(t)}(\mathbb{I} \otimes H^\mathsf{T})$ therefore read:
\begin{align*}
    \ket{O(t)}_L &\equiv A^{\beta/2}_{O(t)}(H \otimes \mathbb{I}) \kett{O(t)} = \kett{\rho_{\beta/2} O(t)}/\sqrt{Z}, \\
    \ket{O(t)}_R &\equiv A^{\beta/2}_{O(t)}(\mathbb{I} \otimes H^\mathsf{T}) \kett{O(t)} = \kett{O(t) \rho_{\beta/2}}/\sqrt{Z},
\end{align*}
where the tensor product in the third equality emphasizes that the imaginary-time transformation can be parallelized.

Overlaps between the above states yield the two-point correlators of Eq.~(\ref{eq:2ptcorr_general}):
\begin{equation}
    C_\beta(O_1, O_2) \equiv \frac{1}{Z} \tr(O_1 \rho_{\beta/2} O_2 \rho_{\beta/2}) = {}_L\!\braket{O_1|O_2}_R.
\end{equation}
Following Appendix~\ref{app:2pc_amplitude}, they can be obtained via amplitude estimation algorithms such as the Hadamard test.

On the other hand, to evaluate the regulated OTOC Eq.~(\ref{eq:t_otoc_general}), one can compute the expectations values of the above states, i.e.:
\begin{align}
    \text{OTOC}_\beta(O(t), A, B, 0,1/2,1/2,0) &= \frac{1}{Z} \tr(O(t)\rho_{\beta/2} A \rho_{\beta/2} O(t) B) = {}_L\bra{O(t)} A \otimes B^\mathsf{T} \ket{O(t)}_L, \\
    \text{OTOC}_\beta(O(t), A, B, 1/2,0,0,1/2) &= \frac{1}{Z} \tr(\rho_{\beta/2} O(t) A O(t) \rho_{\beta/2} B) = {}_R\bra{O(t)} A \otimes B^\mathsf{T} \ket{O(t)}_R.
\end{align}
Importantly, as the regulated OTOCs are equivalent to expectation values, the discussions of simultaneous computation via commutation of Section~\ref{sec:superop} and Appendix~\ref{app:superop} fully apply here. This allows, for instance, the simultaneous computation of the set of all diagonal regulated OTOCs, i.e. $\{\text{OTOC}_\beta(O(t), P, P, 1/2,0,0,1/2)\}_{P \in \mathcal{P}_n}$ and $\{\text{OTOC}_\beta(O(t), P, P, 0,1/2,1/2,0)\}_{P \in \mathcal{P}_n}$, by sampling from the states $\ket{O(t)}_L$ and $\ket{O(t)}_R$ in the Bell basis. This is the finite temperature analogue of Corollary~\ref{corr:diagonal_otoc}.

Further choices of $\alpha_i$ can be accessed by applying the operators $A,B^\mathsf{T}$ as unitaries before computing overlaps, i.e.:
\begin{align}
    \text{OTOC}_\beta(O(t), A, B, 0,1/2,0,1/2) &= \frac{1}{Z} \tr( O(t) \rho_{\beta/2} A O(t) \rho_{\beta/2} B) = {}_L\bra{O(t)} A \otimes B^\mathsf{T} \ket{O(t)}_R.
\end{align}
The above two methods cover $3$ out of $4$ possible types of OTOCs (from ${4 \choose 2} = 6$ choices for the coefficients $\alpha_i$, where some choices result in the same OTOCs -- up to a phase depending on $A,B$ -- due to the cyclicity of the trace).

\section{Extension to open-system dynamics} \label{app:channel_gen}

In this appendix, we describe an approach to generalize our framework beyond unitary evolution, to also include the simulation of open-system dynamics in the Heisenberg picture \cite{swingle2018resilience,zhang2019information,zanardi2021information,schuster2023operator}. 

Let $\mathcal{E}$ denote a CPTP map (carrying out Schrödinger-picture evolution $\rho \rightarrow \mathcal{E}(\rho)$), and $\mathcal{E}^\dagger$ its unital dual (carrying out Heisenberg-picture evolution $O \rightarrow \mathcal{E}^\dagger(O)$, where $O$ is Hermitian; in some works also denoted $\mathcal{E}^*$). One approach to study $\mathcal{E}^\dagger(O)$ via the vectorization map on quantum computers is by preparing and extracting information from the pure state $\kett{\mathcal{E}^\dagger(O)}_\mathcal{C} \equiv \frac{\vecc{\mathcal{E}^\dagger(O)}}{\sqrt{\tr((\mathcal{E}^\dagger(O))^2)}}$, where $\vecc{\cdot}$ denotes the unnormalized vectorization map in the computational basis (used here to track normalization factors more carefully).

\vspace{5pt}

The following result provides a probabilistic algorithm for its preparation:
\begin{lemma}
Let $\mathcal{E}$ be a CPTP map acting on $k$ qubits, and $U_{SE}$ its Stinespring dilation. Given $\kett{O}_\mathcal{C}$ and the ability to query $U_{SE}^\dagger$, and $U_{SE}$ or $U_{SE}^\mathsf{T}$, there exists an algorithm that implements the map $\kett{O}_\mathcal{C} \rightarrow \kett{\mathcal{E}^\mathrm{\dagger}(O)}_\mathcal{C}$ with success probability:
\begin{equation}
    p = \frac{\tr((\mathcal{E}^\mathrm{\dagger}(O))^2)}{2^{n+n_E}} \geq \frac{\tr((\mathcal{E}^\mathrm{\dagger}(O))^2)}{2^{n+2k}}
\end{equation}
using $2 n_E \leq 4k$ ancilla qubits.
\end{lemma}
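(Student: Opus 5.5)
The plan is to realize the Stinespring formula for the dual channel directly on the doubled space, and then postselect the environment registers. Take a minimal Stinespring dilation $\mathcal{E}(\rho)=\tr_E\!\left(U_{SE}(\rho\otimes\ketbra{0}{0}_E)U_{SE}^\dagger\right)$ with an $n_E$-qubit environment. Since the Kraus rank of a $k$-qubit channel is at most $4^k$, we may take $n_E\le 2k$. Dualizing with respect to the HS inner product gives
\begin{equation*}
\mathcal{E}^\dagger(O)=(\mathbb{I}\otimes\bra{0}_E)\,U_{SE}^\dagger(O\otimes\mathbb{I}_E)U_{SE}\,(\mathbb{I}\otimes\ket{0}_E),
\end{equation*}
which is valid because $U_{SE}$ acts trivially on the remaining $n-k$ system qubits. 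The point is that $\mathcal{E}^\dagger(O)$ is a compression of the Heisenberg-evolved dilated operator $U_{SE}^\dagger(O\otimes\mathbb{I}_E)U_{SE}$. That evolution is unitary, so it is exactly of the form handled by Section~\ref{sec:preparation}.

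\textbf{Step 1 (state preparation).} Adjoin $2n_E$ ancillas, one copy of $E$ on the $\mathcal{H}_L$ side and one on the $\mathcal{H}_R$ side, prepared as $n_E$ Bell pairs $\kett{\mathbb{I}_E}_\mathcal{C}$. Since vectorization in $\mathcal{C}$ is multiplicative over tensor products (Eq.~(\ref{eq:fully_local_mapping})), after reordering registers we have $\kett{O}_\mathcal{C}\otimes\kett{\mathbb{I}_E}_\mathcal{C}=\kett{O\otimes\mathbb{I}_E}_\mathcal{C}$.

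\textbf{Step 2 (evolution).} Apply $U_{SE}^\dagger\otimes U_{SE}^\mathsf{T}$, using Eq.~(\ref{eq:tm_unitary}) and the ricochet identity Eq.~(\ref{eq:ricochet}), to obtain $\kett{U_{SE}^\dagger(O\otimes\mathbb{I}_E)U_{SE}}_\mathcal{C}$. If only $U_{SE}$ is available rather than $U_{SE}^\mathsf{T}$, we instead use the construction of Appendix~\ref{app:avoiding_transpose}.

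\textbf{Step 3 (postselection).} Measure both environment copies in the computational basis and postselect on $\ket{0}_{E_L}\ket{0}_{E_R}$. By the ricochet identity,
\begin{equation*}
(\mathbb{I}\otimes\bra{0}_{E_L})\otimes(\mathbb{I}\otimes\bra{0}_{E_R})\,\mathrm{vec}(X)=\mathrm{vec}\!\left((\mathbb{I}\otimes\bra{0})X(\mathbb{I}\otimes\ket{0})\right),
\end{equation*}
using $(\ket{0})^\mathsf{T}=\bra{0}$. The unnormalized post-measurement vector is therefore $\mathrm{vec}(\mathcal{E}^\dagger(O))$ divided by the normalization of the input state. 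After renormalizing, this is exactly $\kett{\mathcal{E}^\dagger(O)}_\mathcal{C}$.

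\textbf{Step 4 (success probability).} With the paper's normalization $\tr(O^2)=2^n$, the input vector $\mathrm{vec}(O\otimes\mathbb{I}_E)$ has squared norm $\tr(O^2)\,2^{n_E}=2^{n+n_E}$, and this norm is preserved by the unitary step. The postselected vector has squared norm $\tr((\mathcal{E}^\dagger(O))^2)$. The success probability is the ratio of these, $p=\tr((\mathcal{E}^\dagger(O))^2)/2^{n+n_E}$. The bound $n_E\le 2k$ then gives the stated inequality, and the ancilla count is $2n_E\le 4k$.

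I expect the main obstacle to be the bookkeeping in Steps 2--3. One must check that the transpose on the $\mathcal{H}_R$ side converts the projection $\bra{0}_{E_R}$ into the right-multiplication by $\ket{0}_E$, with no stray conjugations, and that embedding the $k$-qubit $U_{SE}$ into the $n$-qubit register interacts correctly with the ordering $\mathcal{H}_L\otimes\mathcal{H}_R$. I would verify this first on a single elementary tensor $\ketbra{a}{b}\otimes\ketbra{e}{f}$, using $\kett{\ketbra{a}{b}}_\mathcal{C}=\ket{a}\otimes\ket{b^*}$, and then extend by linearity.
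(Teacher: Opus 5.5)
Your proposal is correct and follows the paper's proof essentially step for step. You dualize the Stinespring dilation to write $\mathcal{E}^\dagger(O)$ as a compression of $U_{SE}^\dagger(O\otimes\mathbb{I}_E)U_{SE}$, apply $U_{SE}^\dagger\otimes U_{S'E'}^\mathsf{T}$ to $\kett{O}_\mathcal{C}\otimes\kett{\mathbb{I}_E}_\mathcal{C}$ via the ricochet identity, postselect $EE'$ on $\ket{e}_E\otimes\ket{e}_{E'}$, and read off $p$ as a ratio of squared norms using $\tr(O^2)=2^n$; as in the paper, the $U_{SE}$-only case rests on the transpose-avoiding construction, which (as that appendix notes) requires implementing $O$ itself rather than only being given the state $\kett{O}_\mathcal{C}$.
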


\begin{proof}
Let $S$ denote the set of $k$ qubits $\mathcal{E}$ is acting on. Its action is equivalent to that of a larger unitary $U_{SE}$ (its Stinespring dilation) acting jointly on the same $k$ qubits, along with $n_E \leq 2k$ ancillary qubits taken to be $\ketbra{e}{e} \equiv \ketbra{0...0}{0...0}$ (denoted $E$), followed by tracing out the ancillary subsystem, i.e.:
\begin{equation} \label{eq:rho_dilation}
    \mathcal{E}(\rho) = \tr_E(U_{SE} \rho \otimes \ketbra{e}{e} U_{SE}^\dagger).
\end{equation}
Via the equality $\tr(\mathcal{E}(\rho)O) = \tr(\rho \mathcal{E}^\dagger(O))$, an analogous relation for the unital dual $\mathcal{E}^\dagger$ holds, i.e.:
\begin{equation} \label{eq:op_dilation}
    \mathcal{E}^\dagger(O) = (\mathbb{I}_S \otimes \bra{e}) U_{SE}^\dagger (O \otimes \mathbb{I}_E) U_{SE} (\mathbb{I}_S \otimes \ket{e} ).
\end{equation}

Viewing $\mathcal{E}^\dagger$ as a superoperator on $O$, we can write:
\begin{equation}
    \vecc{\mathcal{E}^\dagger(O)} = M^{\mathcal{E}^\dagger}_\mathcal{C} \vecc{O},
\end{equation}
where $M^{\mathcal{E}^\dagger}_\mathcal{C} = (M^{\mathcal{E}}_\mathcal{C})^\dagger$ denotes the transfer matrix of $\mathcal{E}^\dagger$ in the computational basis. It can be determined by repeatedly applying the ricochet property as:
\begin{align*}
    \vecc{\mathcal{E}^\dagger(O)} &= \vecc{(\mathbb{I}_S \otimes \bra{e}_E) U_{SE}^\dagger (O \otimes \mathbb{I}_E) U_{SE} (\mathbb{I}_S \otimes \ket{e}_E )} \\
    &= ((\mathbb{I}_S \otimes \bra{e}_E) U_{SE}^\dagger) \otimes (U_{S'E'} (\mathbb{I}_S \otimes \ket{e}_{E'} ))^\mathsf{T} \vecc{O \otimes \mathbb{I}_E} \\
    &= (\mathbb{I}_S \otimes \mathbb{I}_{S'} \otimes \bra{e}_E \otimes \bra{e}_{E'}) (U_{SE}^\dagger \otimes U_{S'E'}^\mathsf{T})~ \vecc{O} \otimes \vecc{\mathbb{I}_E},
\end{align*}
where $S'$ and $E'$ denote the replicas of $S$ and $E$ respectively.

The final equality implies that $\kett{\mathcal{E}^\dagger(O)}_\mathcal{C}$ can be prepared by applying the unitary $U_{SE}^\dagger \otimes U_{S'E'}^\mathsf{T}$ to the state $\kett{O}_\mathcal{C} \otimes \kett{\mathbb{I}_E}_\mathcal{C}$, measuring the register $EE'$, and post-selecting on the outcome $\ket{e}_E \otimes \ket{e}_{E'}$, in which case the register $SS'$ will be in the desired state. The probability that this occurs is given by:
\begin{equation}
    p = \frac{\vecc{\mathcal{E}^\dagger(O)}^\dagger \vecc{\mathcal{E}^\dagger(O)}}{\tr(O^2) \tr(\mathbb{I}_E)} = \frac{\tr((\mathcal{E}^\dagger(O))^2)}{2^{n+n_E}} \geq \frac{\tr((\mathcal{E}^\dagger(O))^2)}{2^{n+2k}}.
\end{equation}
\end{proof}

\begin{figure}[h]
\centering
\begin{quantikz}[row sep={0.7cm,between origins}]
\lstick[wires=3]{\(\kett{O}\)} \qw & \qwbundle{n-2} & \qw & \qw & \qw& \qw& \qw  \rstick[wires=3]{$\kett{\mathcal{E}^\dagger(O)}$}\\
 & \qw       & \qw &      \targ{} \gategroup[wires=4,steps
=3,style={dashed,
rounded corners, inner sep=6pt}, label style={label position=below, , yshift=-0.5cm}]{$U^\dagger \otimes U^\mathsf{T}$} & \qw & \qw & \qw \\
 & \qw       & \qw &      \qw        & \targ{} & \qw & \qw &  \\
\lstick{\(\ket{0}\)}    & \gate{H}  & \ctrl{1} & \ctrl{-2}  & \qw  & \gate{R_Y(\theta^*)} & \meter{} \\
\lstick{\(\ket{0}\)}    & \qw       & \targ{} &  \qw        & \ctrl{-2}   & \gate{R_Y(\theta^*)} & \meter{}
\end{quantikz}
\caption{Circuit for simulation of bit-flip unital channel, with Kraus operators $\{ E_0 = \sqrt{1-p} ~\mathbb{I}, E_1 = \sqrt{p} ~X \}$. The dilation $U$ such that $U \ket{\psi}\ket{0} = \sqrt{p} \ket{\psi} \ket{0} + \sqrt{1-p} X \ket{\psi} \ket{1}$ is $U = R_Y(\theta) ~ CX$, where $R_Y(\theta) = e^{-iY\theta/2}$ with $\theta$ is chosen such that $\cos(\theta/2) = \sqrt{1-p}$ and $\sin{\theta/2} = \sqrt{p}$.}
\label{fig:block_encoding_example}
\end{figure}
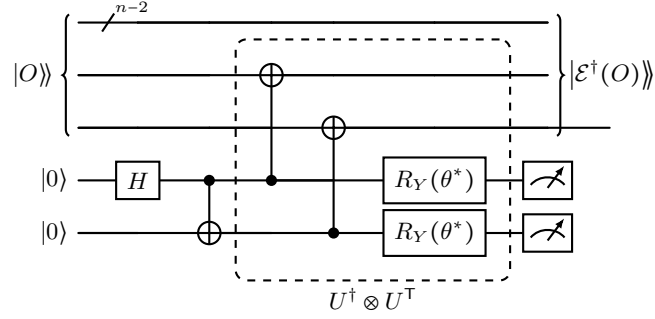

Once samples of the state $\kett{\mathcal{E}^\dagger(O)}$ have been prepared, information regarding $\mathcal{E}^\dagger(O)$ can be extracted from them via measurements, following the framework described in the main text. For instance, taking expectation values yields OTOCs in the open system case \cite{zanardi2021information}.

The above algorithm can be viewed as implementing the block encoding of the non-unitary operator $M^{\mathcal{E}^\dagger}_\mathcal{C}$, which requires implementing the transpose of the Stinespring dilation $U_{SE}$ of $\mathcal{E}$, and its adjoint $U_{SE}^\dagger$. The transpose can be avoided in the same way as in the unitary case (following the discussion of Appendix~\ref{app:avoiding_transpose}), at the expense of having to implement $U_{SE}$ and $U_{SE}^\dagger$ sequentially. Moreover, channels that admit Stinespring dilations that are Clifford can also be implemented with only Clifford operations and measurements. As an illustrative example, Fig.~\ref{fig:block_encoding_example} displays the circuit that implements $\kett{O} \rightarrow \kett{\mathcal{E}^\dagger(O)}$ for the bit-flip channel.

When applied to simulate a $k$-qubit channel acting on $S$ (a subsystem of the larger $n$-qubit Heisenberg operator $O$), only the corresponding $2k$ qubits (labelled $SS'$) of the $2n$-qubit state $\kett{O}$ participate in the algorithm (along with $2n_E$ ancilla qubits labelled $EE'$). This enables the implementation of non-overlapping channels to be parallelized. 

A drawback of probabilistic implementations in general -- shared with similar approaches to implement non-unitary operations \cite{leadbeater2304non,azses2024nonunitary} -- is the exponential suppression in total success probability in the number of block encodings applied. Nevertheless, the probabilistic implementation of $M^{\mathcal{E}^\dagger}_\mathcal{C}$ is unavoidable in general, since non-unitary operations generally cannot be implemented deterministically \cite{terashima2005nonunitary}. The success probability can also be increased via amplitude amplification.

\section{Extension to qudits} \label{app:qudit}

In this appendix, we describe how our framework can be generalized to the case of $n$ qudits, where the local Hilbert space dimension is of prime dimension $d$. Through an appropriate generalization of the vectorization map for qudits, we show that our framework extends naturally to this case, allowing the preparation and extraction of information from vectorized $n$-qudit Heisenberg operators. Similar to the qudit case, the preservation of Cliffordness and entanglement resources also holds.

Following standard definitions \cite{veitch2014resource,wang2020qudits}, begin by defining the qudit $X$ and $Z$ gates, also known as the shift and the boost operators respectively:
\begin{equation} \label{eq:qudit_pauli_def}
    X_d = \sum_{j=0}^{d-1} \ket{j + 1} \! \bra{j}, \quad Z_d = \sum_{j=0}^{d-1} \omega^j \ket{j} \! \bra{j},
\end{equation}
where $\omega \equiv e^{2 \pi i/d}$, and additions are $\text{mod d}$. Subsequently, the generalized qudit Pauli operators, also known as Heisenberg-Weyl operators, are defined as:
\begin{align*}
P(a_z, a_x) = \phi(a_z, a_x, d) Z_d^{a_z} X_d^{a_x}, \quad \phi(a_z, a_x, d) =&
    \left\{\begin{aligned}
    & i^{a_z a_x}, \quad &d &= 2 \\
    & \omega^{-a_x a_z/2}, \quad &d &\neq 2 
    \end{aligned}\right.
\end{align*} 
where $(a_z, a_x) \in \mathbb{Z}_d \times \mathbb{Z}_d$ indexes the $d^2$ possible generalized Pauli operators. The $d^{2n}$ generalized Pauli operators for $n$-qudit systems are then:
\begin{equation}
    P\left(\bigoplus_{i=1}^{n} (a^i_z, a^i_x) \right) \equiv \bigotimes_{i=1}^{n} P(a^i_z, a^i_x) \in \mathcal{P}^d_n.
\end{equation}

Next, define the qudit generalizations of the controlled-not and Hadamard gates, which are the qudit Hadamard gate $H_d$ and the qudit $\text{SUM}$ gate $\text{SUM}_d$:
\begin{align}
    H_d &= \frac{1}{\sqrt{d}} \sum_{m,n = 0}^{d-1} \omega^{mn} \ket{m}  \! \bra{n}, \\
    \text{SUM}_d &= \sum_{m,n = 0}^{d-1} \ket{m, m+n }  \! \bra{m,n}.
\end{align}
Similar to their qubit counterparts, they are (generalized) Clifford unitaries, i.e. unitaries that, up to a phase, conjugate Heisenberg-Weyl operators to other Heisenberg-Weyl operators.

We can now define the generalization of the vectorization maps Eqs.~(\ref{eq:def_comp_vect}) and (\ref{eq:def_pauli_vect}) for a single qudit as:
\begin{alignat}{2}
    \kett{\ketbra{i}{j}}_\mathcal{C} &\equiv \ket{i} \otimes \ket{j}, \quad && i,j \in \mathbb{Z}_d \label{eq:def_comp_vect_qudit} \\ 
    \kett{Z_d^{a_z} X_d^{a_x}}_{\mathcal{P}^d} &\equiv \ket{a_z} \otimes \ket{- a_x}, \quad && a_z, a_x \in \mathbb{Z}_d, \label{eq:def_pauli_vect_qudit}
\end{alignat}
which generalizes to $n$ qudits by taking tensor products (and noting that $\ket{-a_x}$ is equal to $\ket{a_x}$ for $d=2$, recovering the qubit case of Eq.~(\ref{eq:def_pauli_vect})).

Notably, the ricochet identity continues to hold, enabling the discussion on the preparation and extraction of information to carry over to vectorized qudit Heisenberg operators $\kett{O(t)}$. In particular, sampling from $\kett{O(t)}_{\mathcal{P}_n^d}$ enables sampling from the generalized Pauli distribution of $O(t)$, which enables the computation of OTOCs and geometrical properties of $O(t)$, while the amplitudes of $\kett{O(t)}_{\mathcal{P}_n^d}$ encode two-point correlators (between qudit operators). 

Furthermore, as the result below shows (also observed in \cite{somma2025shadow}), the basis change unitary between the qudit computational and Pauli bases, i.e. the unitary carrying the transformation $\kett{O}_{\mathcal{C}^d_n} \rightarrow \kett{O}_{\mathcal{P}^d_n}$, is simply the generalization of the Bell basis transformation for the qubit case of Eq.~(\ref{eq:rotation_cp}):
\begin{lemma} \label{app:qudit_basis_change}
    The vectorization basis change unitary $R_{\mathcal{C}^d_n,\mathcal{P}^d_n}$ between the $n$-qudit computational and generalized Pauli bases is given by $R_{\mathcal{C}^d_n,\mathcal{P}^d_n} = \bigotimes_{i=1}^n R_{\mathcal{C}^d,\mathcal{P}^d}$, where:
    \begin{equation} \label{eq:qubit_basis_form}
        R_{\mathcal{C}^d,\mathcal{P}^d} = \mathrm{SUM}_d \cdot \mathrm{H}_d \otimes \mathbb{I},
    \end{equation}
    which is a depth-2 local generalized Clifford unitary.
\end{lemma}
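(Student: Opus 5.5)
First reduce to a single qudit. Both bases are tensor products over sites: $\mathcal{C}^d_n = \bigotimes_i \mathcal{C}^d$, and $\mathcal{P}^d_n=\bigotimes_i \mathcal{P}^d$ up to the product phases $\phi$. The vectorization maps in Eqs.~(\ref{eq:def_comp_vect_qudit}) and (\ref{eq:def_pauli_vect_qudit}) are defined site by site. The basis-change unitary of Eq.~(\ref{eq:basis_change_unitary}), $R_{\mathcal{C},\mathcal{P}}=\sum_k \kett{Q'_k}_{\mathcal{P}}\,{}_{\mathcal{C}}\!\bbra{Q'_k}$ with $Q'_k$ running over the Pauli basis, therefore factorizes as $\bigotimes_{i=1}^n R_{\mathcal{C}^d,\mathcal{P}^d}$. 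The reason is that a product operator is sent to product states in both encodings, with the $\mathcal{H}^i_L\otimes\mathcal{H}^i_R$ pairs grouped as in Eq.~(\ref{eq:fully_local_mapping}). It then remains to verify Eq.~(\ref{eq:qubit_basis_form}) for $n=1$.

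For one qudit, $R_{\mathcal{C}^d,\mathcal{P}^d}$ is characterized as the unitary sending $\kett{Z_d^{a_z}X_d^{a_x}}_{\mathcal{C}}$ to $\kett{Z_d^{a_z}X_d^{a_x}}_{\mathcal{P}^d}=\ket{a_z}\otimes\ket{-a_x}$ for all $d^2$ pairs. Here I will take the vectorized states as defined via normalized amplitudes; if the phase $\phi$ in the Pauli definition is carried, it is a global phase on each basis vector. I will check that $(\mathrm{SUM}_d\cdot(H_d\otimes\mathbb{I}))^\dagger$ maps $\ket{a_z}\ket{-a_x}$ to $\kett{Z_d^{a_z}X_d^{a_x}}_\mathcal{C}$. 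Equivalently, I apply $\mathrm{SUM}_d (H_d\otimes \mathbb{I})$ to the computational-basis vectorization and read off the result.

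The computation runs as follows. Writing $Z_d^{a_z}X_d^{a_x}=\sum_j \omega^{a_z(j+a_x)}\ket{j+a_x}\!\bra{j}$, the $\mathcal{C}$-vectorization normalized by $\sqrt d$ is $\frac{1}{\sqrt d}\sum_j \omega^{a_z(j+a_x)}\ket{j+a_x}\ket{j}$. Relabel $m=j+a_x$, so the state is $\frac1{\sqrt d}\sum_m\omega^{a_z m}\ket{m}\ket{m-a_x}$. Applying $\mathrm{SUM}_d^{-1}$ in the inverse direction sends $\ket{m}\ket{m-a_x}\mapsto\ket{m}\ket{-a_x}$, leaving $\frac1{\sqrt d}\sum_m \omega^{a_z m}\ket m\otimes\ket{-a_x}=(H_d\ket{a_z})\otimes\ket{-a_x}$. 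Hence $\kett{Z_d^{a_z}X_d^{a_x}}_\mathcal{C}=\mathrm{SUM}_d^{-1}\,(H_d\otimes\mathbb{I})\ket{a_z}\ket{-a_x}$.

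Inverting gives the map $\mathcal{C}\to\mathcal{P}$ as $(H_d^\dagger\otimes\mathbb{I})\mathrm{SUM}_d$. This matches Eq.~(\ref{eq:qubit_basis_form}) up to ordering and up to $H_d$ versus $H_d^\dagger$, depending on which direction the paper's $R_{\mathcal{C},\mathcal{P}}$ denotes under Eq.~(\ref{eq:vect_basis_change}) versus Eq.~(\ref{eq:basis_change_unitary}).

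\textbf{Main obstacle.} The main difficulty is this convention bookkeeping: the sign in $\ket{-a_x}$, the direction of SUM (control on the first factor), the order of the product, and the phase $\phi$. These must be arranged so that the formula reads exactly as stated. I would resolve it by fixing the reading $R=\mathrm{SUM}_d\cdot(H_d\otimes\mathbb{I})$ acting on $\mathcal{P}$-states, and consistently using Eq.~(\ref{eq:vect_basis_composition}) and $R_{\mathcal{Q},\mathcal{Q}'}=R_{\mathcal{Q}',\mathcal{Q}}^\dagger$. As a consistency check, I will confirm that $d=2$ recovers Eq.~(\ref{eq:rotation_cp}). The Clifford claim is then immediate, since $H_d$ and $\mathrm{SUM}_d$ are generalized Cliffords and the circuit has depth 2.
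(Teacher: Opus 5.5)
Your route is the paper's own: reduce to one site, write $\kett{Z_d^{a_z}X_d^{a_x}}_{\mathcal{C}}=\frac{1}{\sqrt{d}}\sum_m\omega^{a_z m}\ket{m}\ket{m-a_x}$, and compare with the action of $\mathrm{SUM}_d$ and $\mathrm{H}_d$ on $\ket{a_z}\ket{-a_x}$. Everything is correct up to $\mathrm{SUM}_d^{-1}\kett{Z_d^{a_z}X_d^{a_x}}_{\mathcal{C}}=(\mathrm{H}_d\ket{a_z})\otimes\ket{-a_x}$. The next line inverts wrongly, however. Applying $\mathrm{SUM}_d$ to both sides gives $\kett{Z_d^{a_z}X_d^{a_x}}_{\mathcal{C}}=\mathrm{SUM}_d(\mathrm{H}_d\otimes\mathbb{I})\ket{a_z}\ket{-a_x}$, not $\mathrm{SUM}_d^{-1}(\mathrm{H}_d\otimes\mathbb{I})\ket{a_z}\ket{-a_x}$. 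Your resulting $\mathcal{C}\to\mathcal{P}$ map, $(\mathrm{H}_d^\dagger\otimes\mathbb{I})\mathrm{SUM}_d$, is therefore false for $d\geq 3$. For example, take $d=3$ and $P=X_3$. Then $\mathrm{SUM}_3^{-1}(\mathrm{H}_3\otimes\mathbb{I})\ket{0}\ket{2}=\frac{1}{\sqrt{3}}(\ket{0}\ket{2}+\ket{1}\ket{1}+\ket{2}\ket{0})$, which has a $\ket{1}\ket{1}$ (i.e.\ $\ketbra{1}{1}$) component. By contrast, $\kett{X_3}_{\mathcal{C}}=\frac{1}{\sqrt{3}}(\ket{1}\ket{0}+\ket{2}\ket{1}+\ket{0}\ket{2})$.

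Your map is also not the inverse of the stated unitary, which is $(\mathrm{H}_d^\dagger\otimes\mathbb{I})\mathrm{SUM}_d^{-1}$. So it agrees with the lemma in neither direction, and the discrepancy you set aside as ``convention bookkeeping'' is not a convention. You used exactly the paper's fixed conventions: $\kett{\ketbra{i}{j}}_{\mathcal{C}}=\ket{i}\otimes\ket{j}$, the sign in $\ket{-a_x}$, and $\mathrm{SUM}_d\ket{m,n}=\ket{m,m+n}$. These leave no freedom to exchange $\mathrm{SUM}_d$ for $\mathrm{SUM}_d^{-1}$. Your planned $d=2$ check cannot catch the slip, because $\mathrm{SUM}_2=\mathrm{SUM}_2^{-1}$ and $\mathrm{H}_2=\mathrm{H}_2^\dagger$. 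In fact your erroneous formula would reproduce Eq.~(\ref{eq:rotation_cp}) at $d=2$. As written, the proposal does not establish the lemma.

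With that line corrected, you obtain exactly the identity the paper's proof verifies: $\mathrm{SUM}_d(\mathrm{H}_d\otimes\mathbb{I})\kett{P(a_z,a_x)}_{\mathcal{P}^d}=\kett{P(a_z,a_x)}_{\mathcal{C}}$, with $\phi(a_z,a_x,d)$ a common factor on both sides. This also shows that the check you announced is false. It is not $(\mathrm{SUM}_d(\mathrm{H}_d\otimes\mathbb{I}))^\dagger$ that sends $\ket{a_z}\ket{-a_x}$ to $\kett{Z_d^{a_z}X_d^{a_x}}_{\mathcal{C}}$, but the undaggered unitary.

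The direction of the basis change is the one genuine ambiguity. Read with Eq.~(\ref{eq:vect_basis_change}), what the paper verifies is $R_{\mathcal{P}^d,\mathcal{C}^d}=R_{\mathcal{C}^d,\mathcal{P}^d}^\dagger$. You should settle this by stating which direction the formula implements, not by ``arranging'' conventions. Either way, the depth-2 local Clifford conclusion is unaffected, since the inverse $(\mathrm{H}_d^\dagger\otimes\mathbb{I})\mathrm{SUM}_d^{-1}$ is also a depth-2 product of generalized Cliffords.
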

\begin{proof}
It suffices to show that Eq.~(\ref{eq:qubit_basis_form}) holds for any qudit Pauli operator $P(a_z, a_x)$. Firstly, note that:
\begin{equation*}
    (Z_d)^{a_z} = \sum_{j=0}^{d-1} \omega^{a_z j} \ket{j} \! \bra{j},
\end{equation*}
and:
\begin{align*}
    (X_d)^{a_x} &= \left( \sum_{j_1=0}^{d-1} \ket{j_1 + 1 } \! \bra{j_1} \right) \left( \sum_{j_2=0}^{d-1} \ket{j_2 + 1 } \! \bra{j_2} \right) ... \left( \sum_{j_{a_x}=0}^{d-1} \ket{j_{a_x} + 1 } \! \bra{j_{a_x}} \right) \\
    &= \left( \sum_{j=0}^{d-1} \ket{j + 1 } \! \bra{j-1 } \right) \left( \sum_{j_3=0}^{d-1} \ket{j_3 + 1 } \! \bra{j_3} \right)  ... \left( \sum_{j_{a_x}=0}^{d-1} \ket{j_{a_x} + 1 } \! \bra{j_{a_x}} \right) \\
    &= \left( \sum_{j=0}^{d-1} \ket{j + 1 } \! \bra{j+1-a_x } \right),
\end{align*}
where we repeatedly applied:
\begin{equation*}
    \left( \sum_{j=0}^{d-1} \ket{j + 1 } \! \bra{a} \right)\left( \sum_{j'=0}^{d-1} \ket{j' + 1 } \! \bra{j'} \right) =  \sum_{j=0}^{d-1} \ket{j + 1 } \! \bra{a-1 }.
\end{equation*}

On one hand, making use of the ricochet identity and the linearity of vectorization, we have:
\begin{align*}
    \kett{P(a_z, a_x)}_\mathcal{C} &= \kett{\phi(a_z, a_x, d) Z_d^{a_z} X_d^{a_x}}_\mathcal{C}\\
    &= \frac{\phi(a_z, a_x, d)}{\sqrt{d}} \kett{\left( \sum_{j=0}^{d-1} \omega^{a_z j} \ket{j} \! \bra{j} \right)  \left( \sum_{j'=0}^{d-1} \ket{j' + 1 } \! \bra{j'+1-a_x } \right)}_\mathcal{C} \\
    &= \frac{\phi(a_z, a_x, d)}{\sqrt{d}} \sum_{j=0}^{d-1} w^{j a_z} \ket{j} \ket{j-a_x}. \\
\end{align*}
On the other hand, using the above expressions for $(Z_d)^{a_z}$ and $(X_d)^{a_x}$ leads to:
\begin{align*}
    (\text{SUM}_d \cdot \text{H}_d \otimes \mathbb{I}) \kett{P(a_z, a_x)}_{\mathcal{P}^d} &= \phi(a_z, a_x, d) (\text{SUM}_d \cdot \text{H}_d \otimes \mathbb{I}) \kett{Z_d^{a_z} X_d^{a_x}}_{\mathcal{P}^d} \\
    &= \phi(a_z, a_x, d) (\text{SUM}_d \cdot \text{H}_d \otimes \mathbb{I}) \ket{a_z} \otimes \ket{- a_x} \\
    &= \phi(a_z, a_x, d) \left( \sum_{m,n = 0}^{d-1} \ket{m, m+n }  \! \bra{m,n} \right) \left( \frac{1}{\sqrt{d}} \sum_{m',n' = 0}^{d-1} \omega^{m'n'} \ket{m'}  \! \bra{n'} \otimes \mathbb{I} \right) \ket{a_z} \otimes \ket{- a_x} \\
    &= \frac{\phi(a_z, a_x, d)}{\sqrt{d}} \sum_{m=0}^{d-1} \omega^{m a_z} \ket{m, m - a_x },
\end{align*}
yielding our claim.
\end{proof}

As a consequence, the above qudit generalization also preserves the Cliffordness of time-evolution due to $U$.

\end{document}